\title{A Framework to Quantify Approximate Simulation on Graph Data}
\author{
 Xiaoshuang Chen\\
  University of New South Wales\\
  Sydney, Australia \\
  \texttt{xiaoshuang.chen@unsw.edu.au} \\
  %% examples of more authors
   \And
 Longbin Lai \\
  Alibaba Group\\
  Hangzhou, China \\
  \texttt{longbin.lailb@alibaba-inc.com} \\
  \And
 Lu Qin \\
  University of Technology Sydney\\
  Sydney, Australia \\
  \texttt{lu.qin@uts.edu.au} \\
  \And
  Xuemin Lin \\
  University of New South Wales\\
  Sydney, Australia \\
  \texttt{lxue@cse.unsw.edu.au} \\
  \And
  Boge Liu\\
  University of New South Wales\\
  Sydney, Australia \\
  \texttt{boge.liu@unsw.edu.au} \\
}
\newtheorem{definition}{Definition}
\newtheorem{theorem}{Theorem}
\newtheorem{example}{Example}
\newtheorem{corollary}{Corollary}
\newtheorem{remark}{Remark}
\long\def\comment#1{}
\newcommand{\stitle}[1]{\vspace{1ex}\noindent{{\bf#1}}}
\newcommand{\sstitle}[1]{\vspace{1ex}\noindent{\underline{\textit{#1}}}}
\newcommand{\ssstitle}[1]{\vspace{1ex}\noindent{\textit{#1}}}
\newcommand{\kw}[1]{{\ensuremath {\mathsf{#1}}}\xspace}
\newcommand{\Sim}{\kw{FSim}}
\newcommand{\lfunc}{\ell}
\newcommand{\inifunc}{\mathcal{L}}
\newcommand{\argmax}{\mathop{\mathrm{argmax}}}
\newcommand{\normopr}{\Omega}
\newcommand{\mappingopr}{\mathcal{M}}
\newcommand{\lparam}{(1-{w^+}-{w^-})}
\newcommand{\outparam}{w^+}
\newcommand{\inparam}{w^-}
\newcommand{\simu}{\kw{s}}
\newcommand{\dpsim}{\kw{dp}}
\newcommand{\bisim}{\kw{b}}
\newcommand{\bjsim}{\kw{bj}}
\newcommand{\ub}{\kw{ub}}
\newcommand{\mysim}{\rightsquigarrow}
\newcommand{\PCRW}{\kw{PCRW}}
\newcommand{\PathSim}{\kw{PathSim}}
\newcommand{\JoinSim}{\kw{JoinSim}}
\newcommand{\nSimGram}{\kw{nSimGram}}
\newcommand{\SimRank}{\kw{SimRank}}
\newcommand{\RoleSim}{\kw{RoleSim}}
\newcommand{\PathSims}{\kw{PathSim}\xspace}
\newcommand{\reffig}[1]{Figure~\ref{fig:#1}}
\newcommand{\refsec}[1]{Section~\ref{sec:#1}}
\newcommand{\reftab}[1]{Table~\ref{tab:#1}}
\newcommand{\refalg}[1]{Algorithm~\ref{alg:#1}}
\newcommand{\refeq}[1]{Equation~\ref{eq:#1}}
\newcommand{\refdef}[1]{Definition~\ref{def:#1}}
\newcommand{\refthm}[1]{Theorem~\ref{thm:#1}}
\newcommand{\refrem}[1]{Remark~\ref{rem:#1}}
\newcommand{\refex}[1]{Example~\ref{ex:#1}}
\newcommand{\mapcurr}{$H_{\kw{c}}$\xspace}
\newcommand{\topcaption}{%
	\setlength{\abovecaptionskip}{0.01cm}%
	\setlength{\belowcaptionskip}{0.01cm}%
	\caption}
\begin{document}
\maketitle

\newcommand\blfootnote[1]{%
  \begingroup
  \renewcommand\thefootnote{}\footnote{#1}%
  \addtocounter{footnote}{-1}%
  \endgroup
}

% \blfootnote{This paper has been accepted by ICDE 2021}

\begin{abstract}
Simulation and its variants (e.g., bisimulation and degree-preserving simulation) are useful in a wide spectrum of applications. However, all simulation variants are coarse ``yes-or-no'' indicators that simply confirm or refute whether one node simulates another, which limits the scope and power of their utility. Therefore, it is meaningful to develop a fractional $\chi$-simulation measure to quantify the degree to which one node simulates another by the simulation variant $\chi$. To this end, we first present several properties necessary for a fractional $\chi$-simulation measure. Then, we present $\Sim_\chi$, a general fractional $\chi$-simulation computation framework that can be configured to quantify the extent of all $\chi$-simulations. Comprehensive
experiments and real-world case studies show the measure to be effective and the computation framework to be efficient.
\end{abstract}

% keywords can be removed
%\keywords{First keyword \and Second keyword \and More}

\section{Introduction} \label{sec:introduction}
{\color{black}Consider two directed graphs $G_1$ and $G_2$ with labeled nodes from the sets $V_1$ and $V_2$, respectively. A \emph{simulation} \cite{DBLP:journals/pvldb/MaCFHW11} relation $R \subseteq V_1 \times V_2$ is a binary relation over $V_1$ and $V_2$. For each node pair $(u, v)$ in $R$ (namely, $u$ is simulated by $v$), each $u$'s out-neighbor\footnote{\color{black}A node $u'$ is an out-neighbor of $u$, if there is an outgoing edge from $u$ to $u'$ in $G$. Similarly, $u''$ is an in-neighbor of $u$, if an edge from $u''$ to $u$ presents.} is simulated by one of $v$'s out-neighbors, and the same applies to in-neighbors. An illustration of this concept is shown below.}

\begin{example} \label{ex:simulation_definition_examples}
As shown in \reffig{example_graphs}, node $u$ is simulated by node $v_2$, as they have the same label, and each $u$'s out-neighbor can be simulated by the same-label out-neighbor of $v_2$ ($u$ has no in-neighbors). Note that the two hexagonal nodes in $\mathcal{P}$ are simulated by the same hexagonal node in $\mathcal{G}_2$. Similarly, $u$ is simulated by $v_3$ and $v_4$. However, $u$ can not be simulated by $v_1$, as the pentagonal neighbor of $u$ cannot be simulated by any neighbor of $v_1$.
\end{example}
% \vspace{-0.5em}

% On the basis of simulation relation, there are other variants defined by adding more constraints. For example, \emph{bisimulation} \cite{milner1989communication} constrains the simulation relation to be symmetric, namely if $u$ is simulated by $v$, then $v$ must be simulated by $u$; and \emph{degree-preserving simulation} \cite{DBLP:journals/pvldb/SongGCW14} requires that two neighbors of $u$ cannot be simulated by the same neighbor of $v$.
{\color{black}The original definition of simulation put forward by Milner in 1971 \cite{DBLP:conf/ijcai/Milner71} only considered out-neighbors. But, in 2011, Ma et al. \cite{DBLP:journals/pvldb/MaCFHW11} revised the definition to consider in-neighbors, making it capture more topological information. Additionally, different variants of simulation have emerged over the years, each with its own constraint(s). For example, on the basis that $R$ is a simulation relation, \emph{bisimulation} \cite{milner1989communication} further requires that $R^{-1}$ is also a simulation, where $R^{-1}$ denotes the converse relation of $R$ (i.e., $R^{-1} = \{(v,u) | \forall (u, v) \in R\}$); and \emph{degree-preserving simulation} \cite{DBLP:journals/pvldb/SongGCW14} requires that two neighbors of $u$ cannot be simulated by the same neighbor of $v$.}

\stitle{Applications.} Simulation and its variants are important relations among nodes, and have been adopted in a wide range of applications. For example, simulation and degree-preserving simulation are shown to be effective in graph pattern matching \cite{DBLP:journals/pvldb/FanLMTWW10,DBLP:journals/pvldb/MaCFHW11,DBLP:journals/tods/MaCFHW14,DBLP:journals/pvldb/SongGCW14}, and {\color{black}a node in the data graph is considered to be a potential match for a node in the query graph if it simulates the query node. Bisimulation has been applied to compute RDF graph alignment \cite{DBLP:journals/pvldb/BunemanS16} and {\color{black}graph partition \cite{DBLP:conf/sigmod/HellingsFH12, DBLP:conf/sigmod/SchatzleNLP13,DBLP:conf/sac/HeeswijkFP16}}.} Generally, two nodes will be aligned or be placed in the same partition if they are in a bisimulation relation. Other applications include data retrieval \cite{DBLP:conf/vldb/Ramanan03}, {\color{black}graph compression \cite{DBLP:conf/sigmod/FanLWW12} and index construction \cite{DBLP:journals/is/FletcherGWGBP09,DBLP:conf/sigmod/KaushikBNK02,DBLP:journals/corr/abs-2003-03079}}, etc. %Note that the applications listed here are all fundamental computation problems in graph analysis, and each of them has wide real-life applications, e.g., pattern matching in

%Compared with the NP-complete subgraph isomorphism that is often used in graph pattern matching \cite{Bi2016, Lai2019}, simulation alternatives are more efficient to compute as they are polynomially solvable. We show how simulation \cite{DBLP:journals/pvldb/FanLMTWW10} facilitates graph pattern matching in the following example.

\comment{
\begin{example} \label{ex:pattern_matching_example}
In the Amazon co-purchasing graph $G$, books are the vertices labeled by their categories, and an edge from book A to B indicates that people are very likely to buy B when they buy A. Suppose a user wants to search for "Parenting" books that are connected by books of "Children", "Home" and "Health" (mutually), which can be modelled as a pattern graph $P$ as shown in \reffig{intro_query}. The simulation-based algorithm first considers all the nodes of the same label as the candidates (i.e. book 89985 and book 3004 in \reffig{intro_data_graph}). Then for each candidate $v$, it encloses a sub-graph $G_v$ (as highlighted) induced by the candidate nodes and all its one-hop neighbors. Thereafter, the algorithm computes simulation relation between $P$ and $G_v$, and includes such a candidate $v$ in the results while all vertices of $P$ present in the simulation relation. In \reffig{pattern_matching_example}, one can verify that the book 89985 is the only result.
\end{example}
}

\begin{figure}
    \centering
	\includegraphics[width=0.7\linewidth]{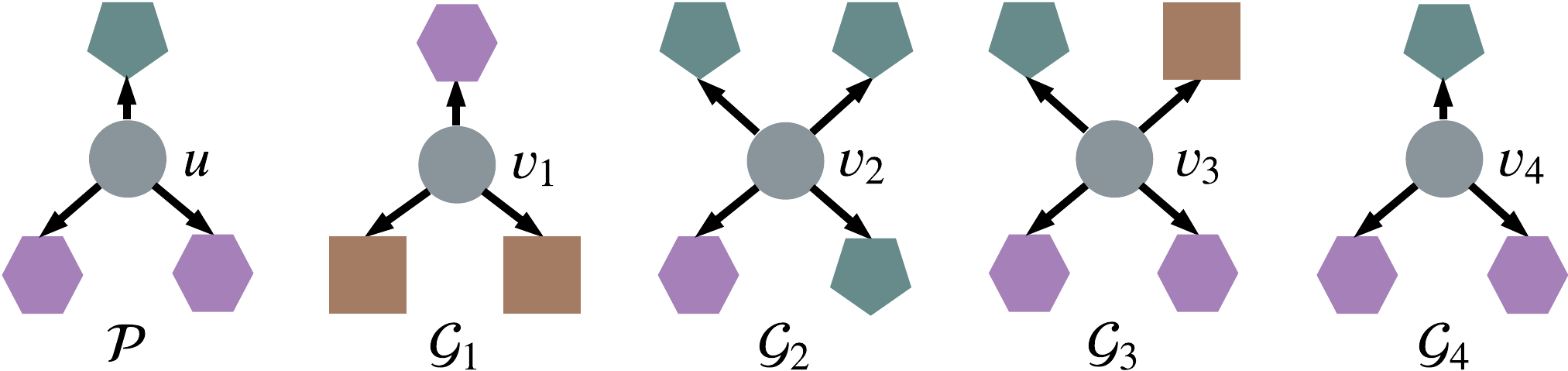} \label{fig:example_graphs}
    \topcaption{Example graphs. A node's shape denotes its label.} \label{fig:example_graphs}
    % \vspace{-1.8em}
\end{figure}

\stitle{Motivations.} {\color{black}Despite their numerous valuable uses, simulation and its variants are all coarse ``yes-or-no'' indicators. That is, simulation and its variants can only answer whether node $v$ can fully simulate node $u$; they cannot tell us whether $v$ might be able to partially or even very nearly (i.e., approximately) simulate $u$. This coarseness raises two practical issues.} First, there often exist some nodes that nearly simulate $u$ in real-world graphs, which either naturally present in the graphs or are consequences of data errors (a common issue by data noise and mistakes of collecting data). However, simulation and its variants cannot catch these nodes and cause loss of potential results. Second, the coarseness makes it inappropriate to apply simulation and its variants to applications that naturally require fine-grained evaluation, such as node similarity measurement. %coarse-grained indicators are only appropriate for coarse-grained evaluations. The ability to quantify the extent of simulation would open up a host of possibilities for using simulation in fine-grained evaluations, such as node similarity measurement.
\refex{pattern_matching_example} provides a real-life illustration of these issues.

\begin{example} \label{ex:pattern_matching_example}
We consider the application of simulation to determine whether or not a poster $A$ is simulated by another poster $B$ in terms of their design elements (e.g., color, layout, font, and structure). For example, when compared with the poster $P_1$ in \reffig{data_poster}, the candidate poster $P$ in \reffig{query_poster} only slightly differs in the font and font style. Hence, it is highly suspected as a case of plagiarism \cite{solo-wiki}. {\color{black}Nevertheless, due to a minor change of design elements, there is no exact simulation relation between posters $P$ and $P_1$, and thus exact simulation can not be used to discover such similarity.} As a result, it is more desirable to develop a mechanism to capture the similarity between two posters via the degree of approximate simulation (some fine-grained measurement), instead of simply using the exact simulation. 
\end{example}
% \vspace{-0.3em}

\begin{figure}
    \centering
    \subfigure[A poster]{ \includegraphics[width=0.095 \linewidth]{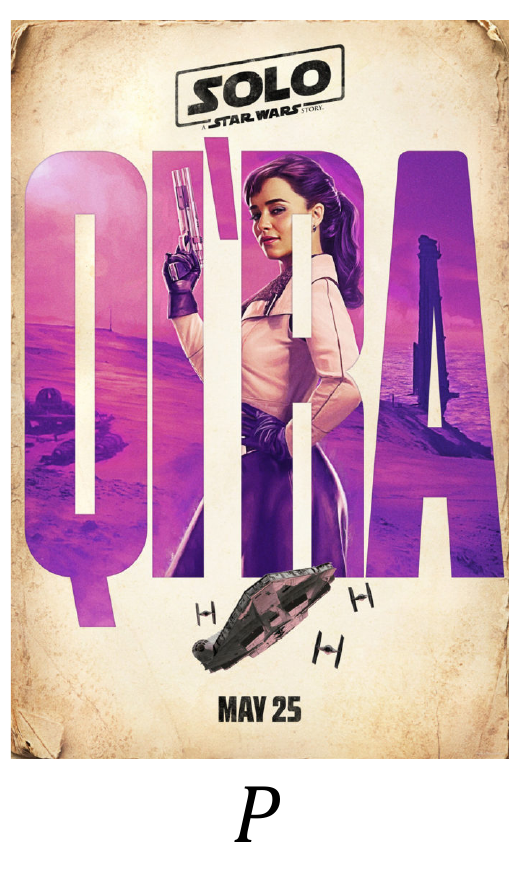} \label{fig:query_poster}}
    \hspace{1em}
    \subfigure[A database of existing posters]{
    \includegraphics[width=0.3\linewidth]{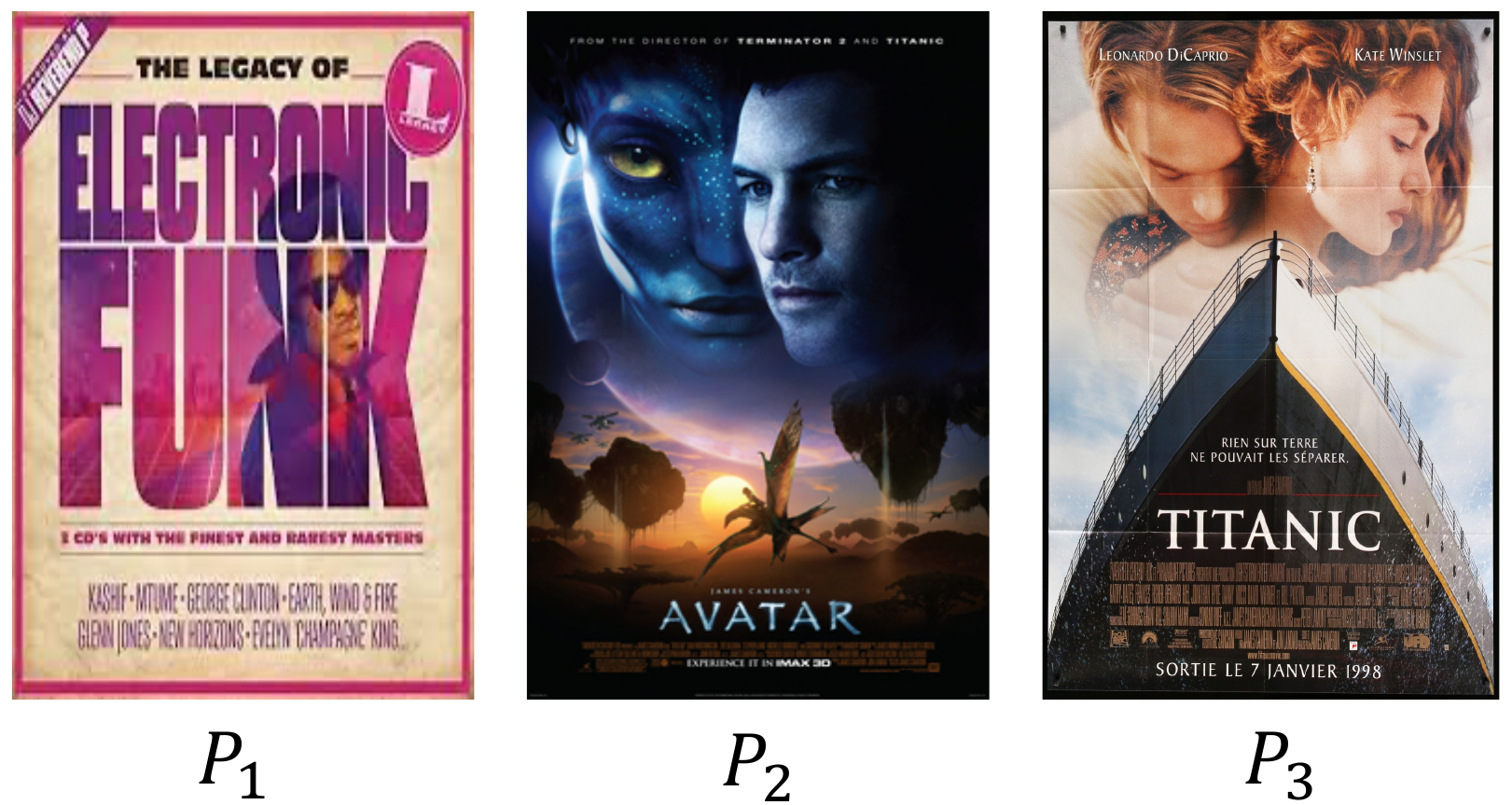} \label{fig:data_poster}
    }
    \hspace{-0.6em}
    \subfigure[Query graph]{ \includegraphics[width=0.16\linewidth]{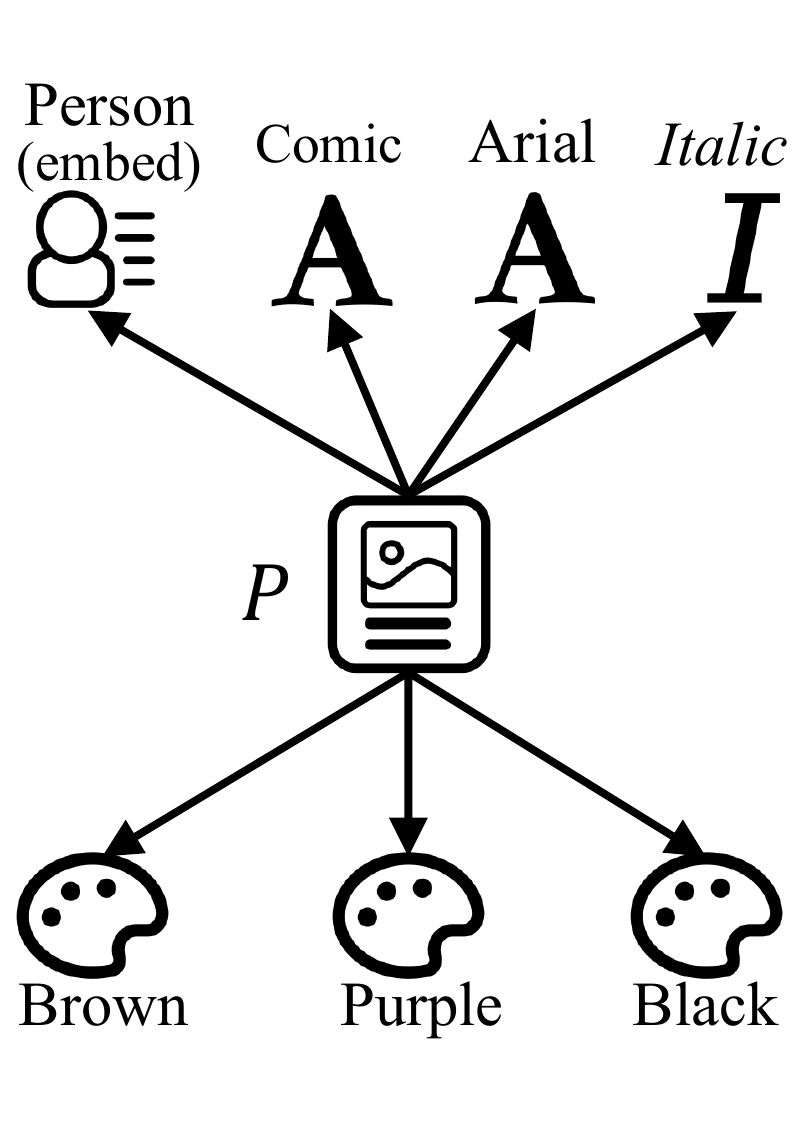} \label{fig:query_graph}}
    \hspace{0.2em}
    \subfigure[Data graph]{
    \includegraphics[width=0.3\linewidth]{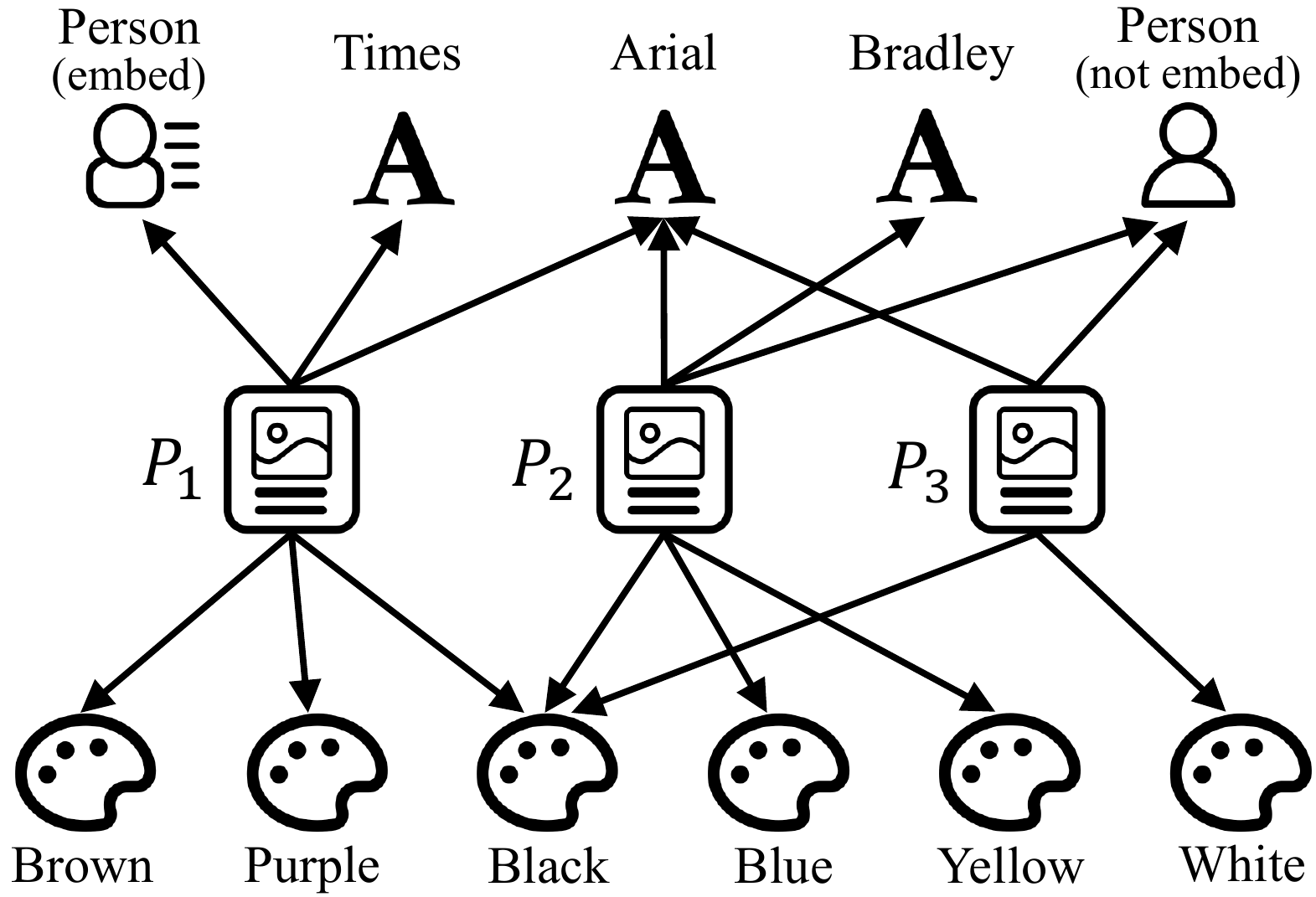} \label{fig:data_graph}
    }
    %\subfigure[Promising results in a fragment of the Amazon graph]{\includegraphics[width=0.71\linewidth]{figures/pattern_matching_data_graph_0213.pdf} \label{fig:intro_data_graph}} \\
    \topcaption{Motivating example. Figures (c) and (d) are graphs representing the posters in (a) and (b), respectively. {\color{black}Nodes are marked with their labels. An edge from nodes u to v indicates that the poster u has a design element v.}}% where a node's shape denotes its book category and an edge from a to b indicates that people are highly likely to buy b after buying a.}
    \label{fig:pattern_matching_example}
    \vspace{-1.8em}
\end{figure}

In general, it is of practical need to develop a mechanism to quantify the cases of approximate simulation to remedy the impacts of the ``yes-or-no'' semantics. {\color{black}Such quantification can not only open up a host of possibilities for using simulation but also make the results of simulation more effective and robust. Although the simulation variants differ in certain properties, they are actually derived from a common foundation, namely the simulation relation \cite{DBLP:conf/ijcai/Milner71}. Consequently, instead of developing a quantification technique independently and individually for each variant, it is more desirable to devise a general framework that works for all simulation variants. Aside from the obvious benefits of less redundancy, developing a unified framework requires a systematic study of the properties of the different simulation variants. Not only has this not been done before, doing so may help to inspire new variants.} 

\stitle{Our Contributions.} We propose the \emph{fractional $\chi$-simulation framework} that quantifies the extent of simulation and its variants in the range of $[0,1]$. Our main contributions are as follows.%We propose the \emph{fractional $\chi$-simulation framework} that returns a score within $[0,1]$ to quantify the approximate simulation for different simulation variant $\chi$. Our main contributions are as follows.

{\color{black}\ssstitle{(1) A unified definition of $\chi$-simulation.} From a systematic study of the properties of simulation and its variants, we distill the base definition of simulation and its variants into a unified definition called $\chi$-simulation. Further, we discover and name a new simulation variant - \emph{bijective simulation}. Theoretically, bijective simulation is akin to the well-known Weisfeiler-Lehman isomorphism test \cite{DBLP:journals/jmlr/ShervashidzeSLMB11} (\refsec{discussion}). Practically, its fractional form (contribution 2) is more effective than the existing models regarding node similarity measurement, as detailed in \refsec{case_studies}.}

{\color{black}\ssstitle{(2) A general framework $\Sim_{\chi}$ for computing fractional $\chi$-simulation.} To quantify the degree to which one node simulates another by a $\chi$-simulation, {\color{black} we propose the concept of \emph{fractional $\chi$-simulation} and identify a list of properties that a \emph{fractional $\chi$-simulation} measure should satisfy.} Then, we present a general computation framework, namely $\Sim_{\chi}$, which can be configured to compute fractional $\chi$-simulation for all $\chi$-simulations with the properties satisfied. %quantify all $\chi$-simulations. 
$\Sim_{\chi}$ is an iterative framework that computes the fractional $\chi$-simulation scores for all pairs of nodes over two graphs. Furthermore, we show the relations of $\Sim_{\chi}$ to several well-known concepts, including node similarity measures (i.e., \kw{SimRank} \cite{DBLP:conf/kdd/JehW02} and \kw{RoleSim} \cite{DBLP:conf/kdd/JinLH11}) and an approximate variant of bisimulation (i.e., $k$-bisimulation \cite{DBLP:books/cu/12/AcetoIS12,DBLP:conf/bncod/LuoLFBHW13,DBLP:conf/cikm/LuoFHWB13,DBLP:conf/sac/HeeswijkFP16}), in \refsec{discussion}}. 

\ssstitle{(3) Extensive experiments and case studies.} We perform empirical studies to exhibit that $\Sim_\chi$ is robust to parameter tuning and data errors, and is efficient to compute on real-world graphs. We further conduct three case studies to evaluate $\Sim_\chi$'s potential for subgraph pattern matching, node similarity measurement, and RDF graph alignment. {\color{black}Based on these studies, we reach the following conclusions. First, fractional $\chi$-simulation can remedy the ``yes-or-no" semantics of $\chi$-simulation, and it significantly improves the effectiveness of $\chi$-simulation in the related applications, e.g., simulation in subgraph pattern matching. Second, fractional bijective simulation (proposed in this paper) is a highly effective way of measuring node similarity. Finally, the $\Sim_\chi$ framework provides a flexible way to study the effectiveness of different simulation variants, and thus can be used as a tool to help identify the best variant for a specific application.}

\section{Simulation and Its Variants} \label{sec:preliminary}
\stitle{Data Model.} {\color{black}%In this paper, we consider 
Consider a node-labeled directed graph $G=(V, E, \lfunc)$, where $V(G)$ and $E(G)$ denote the node set and edge set, respectively (or $V$ and $E$ when the context is clear). $\Sigma$ is a set of string labels, and $\lfunc: V \rightarrow \Sigma$ is a labeling function that maps each node $u$ to a label $\lfunc(u) \in \Sigma$. $N_G^+(u) = \{u' | (u, u') \in E(G)\}$ denotes node $u$'s out-neighbors and, likewise, $N_G^-(u) = \{u' | (u', u) \in E(G)\}$ denotes its in-neighbors. Let $d_G^+(u) = |N_G^+(u)|$ and $d_G^-(u) = |N_G^-(u)|$ be the out- and in-degrees of node $u$, and let $d_G$, $D^+_G$ and $D^-_G$ denote the average degree, maximum out-degree and maximum in-degree of $G$, respectively. A summary of the notations used throughout this paper appears in \reftab{Notations}.}

\begin{table}[h]
	\centering
 	\small
 	\color{black}
 	\topcaption{Table of Notations} \label{tab:Notations}
 	\renewcommand{\arraystretch}{1.1}
    % \scalebox{0.86}{
	\begin{tabular}{|c|c|} \hline
		\textbf{Notation}&\textbf{Description}\\ \hline
		$G = (V, E, \lfunc)$ & a node-labeled directed graph  \\ \hline
		$V(G)/E(G)$ & the node/edge set of graph $G$  \\ \hline
		$\lfunc(\cdot)$ & a labeling function \\ \hline 
% 		$(s,t)$ & an edge from node $s$ to node $t$ \\ \hline
        $N_G^+(u)/N_G^-(u)$ & the out-neighbors/in-neighbors of node $u$ in $G$\\ \hline
        $d_G^+(u)/d_G^-(u)$ & the out-degree/in-degree of node $u$ in $G$ \\ \hline
		$d_G$ & the average degree of $G$ \\ \hline
		$D^+_G/D^-_G$ & the maximum out-degree/in-degree of $G$ \\ \hline
	\end{tabular}
	%}
% 	\vspace{-2em}
\end{table}

\stitle{Simulation Variants.} {\color{black}The first step in developing a unified definition of simulation and its variants is to formally define simulation as the foundation of all its variants.}
%We first formally define simulation as the foundation of all variants.
\vspace{-0.3em}
\begin{definition} \label{def:simulation}
    \textsc{(Simulation)} Given the graphs $G_1=(V_1, E_1, \lfunc_1)$ and $G_2=(V_2, E_2, \lfunc_2)$\footnote{$G_1 = G_2$ is allowed in this paper.}, a binary relation $R \subseteq V_1 \times V_2$ is a simulation if, for $\forall (u,v) \in R$, it satisfies that:
    \begin{enumerate}[(1)] \setlength{\itemsep}{0cm}
    \item $\lfunc_1(u) = \lfunc_2(v)$,
    \item $\!\forall u' \in N_{G_1}^+(u)$, $\!\exists v' \in N_{G_2}^+(v)\!$ such that (s.t.) $(u', v') \in R$,
    \item $\forall u'' \in N_{G_1}^-(u)$, $\exists v'' \in N_{G_2}^-(v)$ s.t. $(u'', v'') \in R$. 
    \end{enumerate}
\end{definition}
\vspace{-0.5em}
% For a concise and clear presentation, we always write $u$ as a node from $V_1$ and $v$ as a node from $V_2$ in this paper. 
For clarity, $u$ is always a node from $V_1$, and $v$ is always a node from $V_2$ in this paper. 

% Based on \refdef{simulation}, several variants of simulation are defined with more constraints. We consider three most notable ones in the literature, namely \emph{strong simulation} \cite{DBLP:journals/pvldb/MaCFHW11}, \emph{degree-preserving simulation} \cite{DBLP:journals/pvldb/SongGCW14} and \emph{bisimulation} \cite{milner1989communication} (others are surveyed in \refsec{related_work}). {\color{black}Specifically, node $u$ is simulated by node $v$ regarding strong simulation \cite{DBLP:journals/pvldb/MaCFHW11} if there exists a simulation relation $R$ between $G_1$ and $G_2[v,\delta]$ with $(u,v) \in R$, in which $G_2[v, \delta]$ is the induced subgraph that includes all nodes whose shortest distance to $v$ is not larger than the diameter $\delta$ of graph $G_1$. Therefore, strong simulation is performing simulation (\refdef{simulation}) by nature,} and will not be further discussed. We next use a unified definition of $\chi$-simulation to summarize all the studied variants.

{\color{black}The variants of simulation are based on \refdef{simulation} but have additional constraints. \refdef{simulation_variants} below provides a summary of several common simulation variants. However, one exceptional variant, \emph{strong simulation} \cite{DBLP:journals/pvldb/MaCFHW11}, must be discussed first. Strong simulation is designed for subgraph pattern matching. In brief, strong simulation exists between the query graph $Q$ and data graph $G$ if a subgraph $G[v,\delta_Q]$ of $G$ satisfies the following criteria:  (1) a simulation relation $R$ exists between $Q$ and $G[v,\delta_Q]$; and (2) $R$ contains node $v$ and all nodes in $Q$. Note that the subgraph $G[v,\delta_Q]$ is an induced subgraph that includes all nodes whose shortest distances to $v$ in $G$ are not larger than the diameter $\delta_Q$ of $Q$. In essence, strong simulation essentially performs simulation (\refdef{simulation}) multiple times, and so does not need to be specifically defined or further discussed.

\refdef{simulation_variants}, which follows, shows how $\chi$-simulation summarizes the base definition of simulation but also considers its variants. The definition below includes two notable ones in  \emph{degree-preserving simulation} \cite{DBLP:journals/pvldb/SongGCW14} and \emph{bisimulation} \cite{milner1989communication}.
}

\begin{definition} \label{def:simulation_variants}
    \textsc{($\chi$-simulation)} A simulation relation $R$ by \refdef{simulation} is further a $\chi$-simulation relation, which corresponds to 
        \begin{itemize}[noitemsep]
        \item \textbf{Simulation} ($\chi = \simu$): no extra constraint;
        % \item \textbf{Degree-preserving simulation} ($\chi = \dpsim$): if $(u, v) \in R$, (1) $\forall u' \in N^+(u)$, there exists an \textbf{injective} function $\lambda_1: N^+(u) \to N^+(v)$, s.t. $(u', \lambda_1(u')) \in R$; and (2) $\forall u'' \in N^-(u)$, there exists an \textbf{injective} function $\lambda_2: N^-(u) \to N^-(v)$, s.t. $(u'', \lambda_2(u'')) \in R$;
        {\color{black}
        \item \textbf{Degree-preserving simulation} ($\chi = \dpsim$): if $(u, v) \in R$, (1) there exists an \textbf{injective} function $\lambda_1: N_{G_1}^+(u) \to N_{G_2}^+(v)$, s.t. $\forall u' \in N_{G_1}^+(u)$, $(u', \lambda_1(u')) \in R$; and (2) there exists an \textbf{injective} function $\lambda_2: N_{G_1}^-(u) \to N_{G_2}^-(v)$, s.t. $\forall u'' \in N_{G_1}^-(u)$, $(u'', \lambda_2(u'')) \in R$;
        \item \textbf{Bisimulation} ($\chi$ = \bisim): if $(u, v) \in R$, (1) $\forall v' \in N^+(v)$, $\exists u' \in N^+(u)$ s.t. $(u', v') \in R$; and (2) $\forall v'' \in N^-(v)$, $\exists u'' \in N^-(u)$ s.t. $(u'', v'') \in R$. 
        % \item \textbf{Bisimulation} ($\chi$ = \bisim): $R^{-1}$ is a simulation relation as well, in which $R^{-1} = \{(v,u) | \forall (u, v) \in R\}$. 
        }
    \end{itemize}
    
    {\color{black}Node $u$ is $\chi$-simulated by node $v$ (or $v$ $\chi$-simulates $u$), denoted as $u \mysim^{\chi} v$, if there is a $\chi$-simulation relation $R$ with $(u,v) \in R$. Specifically, if $u \mysim^{\chi} v$ implies $v \mysim^{\chi} u$ (i.e., $\chi = \bisim$), we {\color{black}may use} $u \sim^{\chi} v$ directly.
    }
\end{definition}
% \vspace{-0.8em}

\begin{example} \label{ex:simulation_variants}
% Recall that in \refex{simulation_definition_examples}, $u$ is simulated by nodes $v_2$, $v_3$ and $v_4$ in \reffig{example_graphs}. However, $u$ cannot be $\dpsim$-simulated by $v_2$ because the two hexagon neighbors of $u$ must be mapped to the same hexagon neighbor of $v_2$, which contradicts the requirement of ``injective function''; $u$ cannot be $\bisim$-simulated by $v_3$, as the square neighbor of $v_3$ fails to simulate any neighbor of $u$.
Recall that in \refex{simulation_definition_examples}, $u$ is simulated by nodes $v_2$, $v_3$ and $v_4$ in \reffig{example_graphs}. However, $u$ cannot be $\dpsim$-simulated by $v_2$. This is because $u$ has two hexagonal neighbors and $v_2$ does not, which contradicts the requirement of ``injective function''; Analogously, $u$ cannot be $\bisim$-simulated by $v_3$, since $v_3$'s square neighbor fails to simulate any neighbor of $u$.
\end{example}
% \vspace{-0.5em}

% As a binary relation, it is natural to consider the properties of reflexivity, transitivity and symmetry. We are also interested in a property called \emph{injective neighbor mapping (IN-mapping)} inspired by the constraints of $\dpsim$-simulation. While all $\chi$-simulation trivially satisfy reflexivity (when two graphs are the same) and transitivity, we discuss in detail: (1) IN-mapping, that is $\forall (u,v) \in R$, two different neighbors (in and out) of $u$ cannot be mapped to the same neighbor of $v$; {\color{black}(2) symmetry, that is if $u \overset{\chi}\sim v$ then $v \overset{\chi}\sim u$. We summarize the properties of exiting simulation variants in \reffig{properties_of_variants}(a).}

{\color{black}Inspired by the constraints of $\dpsim$- and $\bisim$-simulations, we find that {a $\chi$-simulation} may have the following properties: (1) \emph{injective neighbor mapping} (or IN-mapping for short), i.e., $\forall (u,v) \in R$, two different neighbors (either in or out) of $u$ cannot be mapped to the same neighbor of $v$; and (2) \emph{converse invariant}, i.e., where $R^{-1} = \{(v,u) | \forall (u, v) \in R\}$ is a $\chi$-simulation if $R$ is a $\chi$-simulation. %Note that if $u\mysim^{\chi} v$ via a $\chi$-simulation with converse invariance, then $v\mysim^{\chi} u$ must hold. 
By \refdef{simulation_variants}, $\dpsim$-simulation has the property of IN-mapping, while $\bisim$-simulation has converse invariant. The properties of the exiting simulation variants are listed in \reffig{properties_of_variants}(a).}
% \vspace{-0.5em}
\begin{remark} 
\color{black}
Given a $\chi$-simulation with the property of converse invariant, if $u\mysim^{\chi} v$, then $v\mysim^{\chi} u$ must hold. Therefore, in \refdef{simulation_variants}, we have $u \mysim^{\bisim} v$ implies $v \mysim^{\bisim} u$.
\end{remark}
% \vspace{-0.5em}

\begin{figure}
    \centering
	\includegraphics[width=0.7\linewidth]{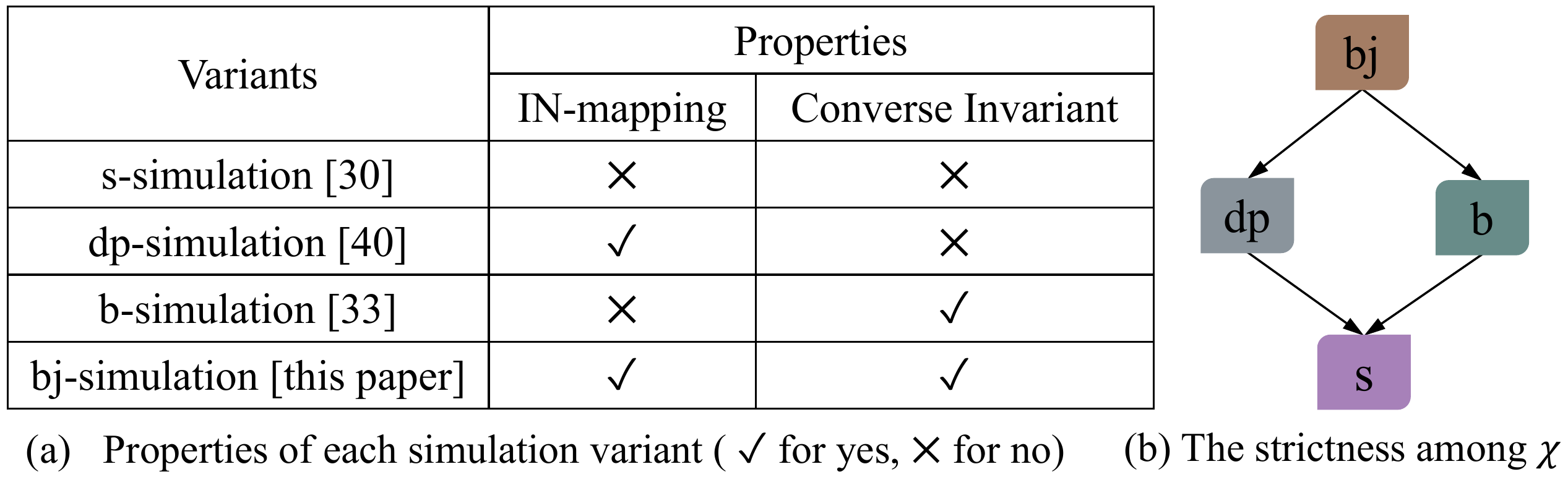}
    \topcaption{The summarization of all simulation variants.} \label{fig:properties_of_variants}
    % \vspace{-1.5em}
\end{figure}

\stitle{A New Variant: Bijective Simulation.} In compiling \reffig{properties_of_variants}(a), we realize that no simulation variant had both IN-mapping and converse invariant. This motivated us to define one. Called \emph{bijective simulation}, our definition follows. 

% \vspace{-0.3em}
{\color{black}
\begin{definition} \label{def:bijectivesimulation}
    % \textsc{(Bijective Simulation)} A simulation relation $R \subseteq V_1 \times V_2$ is a bijective simulation ($\chi = \bjsim$), if it further satisfies that, if $(u, v) \in R$, (1) there exists a \textbf{bijective} function $\lambda_1: N^+(u) \to N^+(v)$, s.t. $\forall u' \in N^+(u)$, $(u', \lambda_1(u')) \in R$; and (2) there exists a \textbf{bijective} function $\lambda_2: N^-(u) \to N^-(v)$, s.t. $\forall u'' \in N^-(u)$, $(u'', \lambda_2(u'')) \in R$. We incorporate bijective simulation in the $\chi$-simulation (\refdef{simulation_variants}) by letting $\chi = \bjsim$.
    \textsc{(Bijective Simulation)} A simulation relation $R \subseteq V_1 \times V_2$ is a bijective simulation ($\chi = \bjsim$), if $R$ is a degree-preserving simulation and the functions $\lambda_1$ and $\lambda_2$, as defined in \refdef{simulation_variants}, are further to be surjective (i.e., $\lambda_1$ and $\lambda_2$ are bijective). Bijective simulation is considered in the $\chi$-simulation (\refdef{simulation_variants}) by letting $\chi = \bjsim$.
\end{definition}}
% \vspace{-0.5em}

Compared to $\dpsim$-simulation, $\bjsim$-simulation requires that the mapping functions of the neighbors to be bijective. In other words, each pair of neighbors in a $\bjsim$-simulation must be mapped one by one. It's not hard to verify that $\bjsim$-simulation has the properties of both IN-mapping and {\color{black}converse invariance.}

\reffig{properties_of_variants}(b) shows the strictness among the simulation variants, where a ``more-strict'' edge from a $\chi_1$- to a $\chi_2$-simulation means that the $\chi_1$-simulation must also be a $\chi_2$-simulation. Such strictness among the variants can also be inferred from \reffig{example_graphs}. More specifically, given $u \mysim^{\bjsim} v_4$, it holds that $u \mysim^{\chi} v_4$, $\forall \chi \in \{\simu, \bisim, \dpsim\}$.

% {\color{black}
% \begin{remark}
% Note that The bijective simulation is less strict than the isomorphism, that is if node $u$ is isomorphism to node $v$, $u$ is $\bjsim$-simulated by $v$ but not vice versa. We prove in \refthm{bijection_and_wl_test} of \refsec{discussion} that the bijective simulation is as strict as the well-known Weisfeiler-Lehman isomorphism test.
% \end{remark}
% }
\stitle{Summary.} In this paper, we consider \emph{all together four simulation variants}: simulation ($\chi = \simu$), degree-preserving simulation (\dpsim), bisimulation (\bisim), and bijective simulation (\bjsim). With a systematic study of existing simulation variants, we have discovered bijective simulation as a new variant. We believe that our work will further inspire more variants. %We present the strictness of the studied simulation variants in \reffig{properties_of_variants}(b), where a ``more-strict'' edge from $\chi_1$ to $\chi_2$-simulation means that a $\chi_1$-simulation is also a $\chi_2$-simulation. %For example, in \reffig{example_graphs}, {\color{black}as $u \mysim^{\bjsim} v_4$, it holds that $u \mysim^{\chi} v_4$, $\forall \chi \in \{\simu, \bisim, \dpsim\}$.}
% Such strictness can also be inferred from \reffig{example_graphs}. Specifically, as $u \mysim^{\bjsim} v_4$, it holds that $u \mysim^{\chi} v_4$, $\forall \chi \in \{\simu, \bisim, \dpsim\}$.

Hereafter, we may omit the $\chi$ in $\chi$-simulation, referring simply to simulation. To avoid ambiguity, we call the simulation relation in \refdef{simulation} as \emph{simple simulation}.

\section{Fractional Simulation} \label{sec:fractionalsimulation}

{\color{black}To quantify the degree to which one node simulates the other node, we now set out the properties fractional $\chi$-simu-lation should satisfy and the framework for its computation.}

\subsection{The Properties of Fractional Simulation}
\label{fractional_simulation_properties}

% \vspace{-0.3em}
\begin{definition} \label{def:fractional_simulation_properties}
    \textsc{(Fractional $\chi$-Simulation)} Given graphs $G_1=(V_1, E_1, \lfunc_1)$ and $G_2=(V_2, E_2, \lfunc_2)$, and two nodes $u \in V_1$ and $v \in V_2$, the fractional $\chi$-simulation of $u$ and $v$ quantifies the degree to which $u$ is approximately $\chi$-simulated by $v$, denoted as $\Sim_\chi(u, v)$. $\Sim_\chi(u, v)$ should satisfy:
    \begin{enumerate}[P1.] \setlength{\itemsep}{0cm}
	\item Range: $0 \leq \Sim_\chi(u, v) \leq 1$;
% 	\item Simulation definiteness: $u \overset{\chi}\sim v$ \textbf{if and only if} $\Sim_\chi(u, v) = 1$;
	{\color{black}\item Simulation definiteness: $u$ is $\chi$-simulated by $v$, i.e., $u {\mysim}^{\chi} v$, \textbf{if and only if} $\Sim_\chi(u, v) = 1$;
% 	\vspace{10em}
    % \item Simulation transitivity: Given a third node $w$\footnote{$w$ can be a node from $G_1$, $G_2$, or a third graph.}, if $\Sim_\chi(u,v) = 1$ and $\Sim_\chi(v,w) = 1$, then $\Sim_\chi(u,w) = 1$;
    % \item Simulation transitivity: Given a third node $w$\footnote{$w$ can be a node from $G_1$, $G_2$, or a third graph.}, if $u \overset{\bjsim}\sim v$, then $\Sim_\chi(u,w) = \Sim_\chi(v,w)$, and $\Sim_\chi(w, u) = \Sim_\chi(w,v)$;
    % \item $\chi$-conditional simulation transitivity:
    % if $\chi$-simulation satisfies IN-mapping, given $(u, v)$ such that $u \overset{\chi}{\sim} v$, $d^+(u) = d^+(v)$ and $d^-(u) = d^-(v)$, then for any third node $w$, $\Sim_\chi(u, w) = \Sim_\chi(v, w)$ and $\Sim_\chi(w, u) = \Sim_\chi(w, v)$;
% 	\item $\chi$-conditional symmetry: if $\chi$-simulation is symmetric, then $\Sim_\chi(u, v)$ is symmetric as well, i.e. $\Sim_\chi(u, v) = \Sim_\chi(v, u)$.}
	\item $\chi$-conditional symmetry: if the $\chi$-simulation has the property of converse invariant (i.e., $u \mysim^{\chi} v$ implies $v \mysim^{\chi} u$), then $\Sim_\chi(u, v)$ should be symmetric, i.e., $\Sim_\chi(u, v) = \Sim_\chi(v, u)$.}
	\end{enumerate} 
	
	A computation scheme $\Sim_\chi$ is \textbf{well-defined} for fractional $\chi$-simulation, if for $\forall (u, v) \in V_1 \times V_2$, $\Sim_\chi(u, v)$ satisfies all three of the above properties.
\end{definition}
% \vspace{-0.5em}

%Property 1 is a common practice that constrains the minimum degree (0) and maximum degree (1) of simulation. 
Property 1 is a common practice. Property 2 bridges the fractional simulation and the corresponding simulation variant. The sufficient condition reflects the fact that $u$ being $\chi$-simulated by $v$ stands for the maximum degree of their simulation, while the necessary condition (only if) makes fractional simulation imply the case of simulation. {\color{black}Property 3 means the variants with converse invariance (i.e., bisimulation and bijective simulation) can be used as similarity measures.}

\subsection{Framework to Compute Fractional Simulation} \label{sec:fsim_framework}
{\color{black}We propose the $\Sim_\chi$ framework to compute the fractional $\chi$-simulation scores for all pairs of nodes across two graphs. The $\Sim_\chi$ is a non-trivial framework because it needs to account for the properties of all simulation variants as well as convergence in general. Note that hereafter, we use $\Sim_\chi$ interchangeably to indicate the framework and a $\chi$-simulation value.

Recall from \refdef{simulation_variants} that a node $u$ is $\chi$-simulated by node $v$ if they have the same label, and their neighbors are $\chi$-simulated accordingly. Thus, we have divided the computation of $\Sim_\chi(u, v)$ into three parts as follows: }

% We introduce the $\Sim_\chi$ framework to compute the $\Sim_\chi$ scores for all pairs of nodes\footnote{In the rest of this paper, we will use $\Sim_\chi$ to interchangeably indicate both a $\chi$-simulation computation framework and a $\chi$-simulation score.}. Recall from \refdef{simulation_variants} that a node $u$ is $\chi$-simulated by node $v$ if they have the same label, and their neighbors are $\chi$-simulated accordingly. Thus, we have divided the computation of $\Sim_\chi(u, v)$ into three parts as follows: 
% \vspace{-0.2em}
% \begin{equation} \label{eq:fractionalsimulation}
% \small
% \begin{split}
%     {\Sim_\chi}(u,v)&=  \underbrace{\outparam \text{ }\Sim_\chi(N^+(u), N^+(v))}_{\textbf{score by out-neighbors}} +\underbrace{\inparam \text{ }\Sim_\chi(N^-(u),N^-(v))}_{\textbf{score by in-neighbors}} \\ &+ \underbrace{\lparam \text{ }\inifunc(u,v)}_{\textbf{score by node label}},
% \end{split}
% % \vspace{-0.7em}
% \end{equation}
% \vspace{-0.7em}
\begin{equation} \label{eq:fractionalsimulation}
\small
\begin{split}
    {\Sim_\chi}(u,v) = \underbrace{\outparam \text{ }\Sim_\chi(N_{G_1}^+(u), N_{G_2}^+(v))}_{\textbf{score by out-neighbors}} +\underbrace{\inparam \text{ }\Sim_\chi(N_{G_1}^-(u),N_{G_2}^-(v))}_{\textbf{score by in-neighbors}} + \underbrace{\lparam \text{ }\inifunc(u,v)}_{\textbf{score by node label}},
\end{split}
% \vspace{-1em}
\end{equation}
where $\!\Sim_\chi(N_{G_1}^+(u), N_{G_2}^+(v))\!$ and $\Sim_\chi(N_{G_1}^-(u), N_{G_2}^-(v))$ denote the scores contributed by the out- and in-neighbors of $u$ and $v$ respectively. $\outparam$ and $\inparam$ are weighting factors that satisfy $0 \leq \outparam < 1$, $0 \leq \inparam < 1$ and $0 < \outparam + \inparam < 1$; and $\inifunc(\cdot)$ is a label function that evaluates the similarity of two nodes' labels. Specifically, if there is no prior knowledge about the labels, $\inifunc(\cdot)$ can be derived by a wide variety of string similarity functions, such as an indicator function, normalized edit distance, Jaro-Winkler similarity, etc. Alternatively, the user could specify/learn the similarities of the label semantics. Since the latter case is beyond the scope of this paper, in the following, we assume no prior knowledge about the labels. 

In \refeq{fractionalsimulation}, we need to compute the $\chi$-simulation score between two node sets $S_1$ and $S_2$ (the respective neighbors of each node pair). To do so, we derive:
% \vspace{-0.2em}
\begin{equation} \label{eq:set_score}
\small
    \Sim_\chi(S_1, S_2) = \frac{\sum_{(x, y) \in \mathcal{M}_{\chi}(S_1, S_2)} {\Sim_\chi}(x,y)}{\normopr_{\chi}(S_1, S_2)},
    % \vspace{-0.8em}
\end{equation}
where $\normopr_{\chi}$ denotes the normalizing operator that returns a positive integer w.r.t. $S_1$ and $S_2$. $\mathcal{M}_{\chi}$ denotes the mapping operator, which returns a set of node pairs defined as:
\begin{align*}
\small
    \mathcal{M}_{\chi}(S_1, S_2; f_\chi) = \{(x, y) \;|\; x \in X, y = f_\chi(x) \in Y\}, 
    % \vspace{-0.8em}
\end{align*}
where $X \subseteq S_1 \cup S_2$ and $Y \subseteq S_1 \cup S_2$. $f_\chi: X \to Y$ is a \emph{function} that is subject to certain constraints regarding the simulation variant $\chi$. These constraints include the domain and codomain of $f_\chi$, and the properties that $f_\chi$ should satisfy (e.g., that $f_\chi$ is an injective function). Note that, for clear presentation, $f_\chi$ is always omitted from the mapping operator. How $\mathcal{M}_{\chi}$ and $\normopr_{\chi}$ are configured to deploy different simulation variants for the framework is demonstrated in \refsec{configurations_of_all_variants}.

\begin{table}
    \centering
    {\color{black}
    \topcaption{Results of whether $u$ is simulated by $v_i$ ($i \in \{1,2,3,4\}$) in \reffig{example_graphs} regarding each simulation variant ($\checkmark$ for yes, $\times$ for no) and the corresponding fractional scores (in bracket)} \label{tab:fracsim_scores}
    % \scalebox{0.9}{
% 	\renewcommand{\arraystretch}{1.1}
	    \begin{tabular}{|c|c|c|c|c|} \hline
        Variants & $(u,v_1)$ & $(u,v_2)$ & $(u,v_3)$ & $(u,v_4)$ \\ \hline
        $\simu$-simulation & $\times$ (0.85) & $\checkmark$ (1.00) & $\checkmark$ (1.00) & $\checkmark$ (1.00)\\ \hline
        \makecell{$\dpsim$-simulation} & $\times$ (0.72) & $\times$ (0.85) & $\checkmark$ (1.00) & $\checkmark$ (1.00)\\ \hline
        $\bisim$-simulation & $\times$ (0.78) & $\checkmark$ (1.00) & $\times$ (0.93) & $\checkmark$ (1.00) \\ \hline
        $\bjsim$-simulation & $\times$ (0.72) & $\times$ (0.81) & $\times$ (0.94) & $\checkmark$ (1.00) \\ \hline
    \end{tabular}} %}
    % \vspace{-1.5em}
\end{table}

% \vspace{-0.3em}
\begin{example}
\color{black}
\reftab{fracsim_scores} shows the $\Sim_{\chi}$ scores for some of the node pairs in \reffig{example_graphs} based on the definition of fractional $\chi$-simulation (\refdef{fractional_simulation_properties}) and the $\Sim_{\chi}$ framework (\refeq{fractionalsimulation}). We can observe that: (1) a pair $(u,v)$ where $u$ is not but very closely simulated by $v$ has a high $\Sim_{\chi}$ score, e.g., $\Sim_\bjsim(u,v_3)$; (2) when $u$ is $\chi$-simulated by $v$, $\Sim_{\chi}(u,v)$ reaches a maximum value of 1, e.g., $\Sim_\bisim(u,v_4)$, which conforms with the well-definiteness of $\Sim_{\chi}$.
\end{example}
% \vspace{-0.6em}

According to \refeq{fractionalsimulation}, the $\!\Sim_\chi\!$ score between two nodes depends on the $\Sim_\chi$ scores of their neighbors. This naturally leads to an iterative computation scheme. This iterative process is detailed in the next section along with how to guarantee its convergence.

\subsection{Iterative Computation} \label{sec:iterative_computation}
%To better explain the iterative computation, 
Consider ${\Sim}_\chi^{k}(u,v)$, which denotes the $\chi$-simulation score of nodes $u$ and $v$ in the $k$-[th] iteration ($k \geq 1$), the mapping operator $\mathcal{M}_\chi^k$ and the normalizing operator $\Omega_\chi^k$  applied in the given iteration. 

\stitle{Initialization.} As all simulation variants require an equivalence of node labels (\refdef{simulation} and \refdef{simulation_variants}), The $\Sim_{\chi}$ score is initially set to ${\Sim}_\chi^{0}(u,v) = \inifunc(u, v)$ by default unless otherwise specified. When using such initialization, $\inifunc(u,v) = 1$ must be further constrained if, and only if, $\lfunc_1(u) = \lfunc_2(v)$, in order to guarantee that $\Sim_{\chi}$ is well-defined (\refdef{fractional_simulation_properties}).

\stitle{Iterative Update.} According to \refeq{fractionalsimulation} and \refeq{set_score}, the simulation score in the $k$-[th] iteration for a node pair $(u,v)$ regarding $\chi$ is updated via the scores of previous iteration as:
\begin{equation} \label{eq:fractional_simulation_computation}
\small
\begin{split}
    {\Sim}_{\chi}^{k}(u, v) &=  \frac{\outparam\sum_{(x, y) \in \mathcal{M}^k_{\chi}(N_{G_1}^+(u), N_{G_2}^+(v))} {\Sim}_{\chi}^{k-1}(x,y)}{\normopr^k_{\chi}(N_{G_1}^+(u), N_{G_2}^+(v))}
    + \frac{\inparam\sum_{(x, y) \in \mathcal{M}^k_{\chi}(N_{G_1}^-(u), N_{G_2}^-(v))} {\Sim}_{\chi}^{k-1}(x,y)}{\normopr^k_{\chi}(N_{G_1}^-(u), N_{G_2}^-(v))} \\
    &+ \lparam \inifunc(u,v)
\end{split}
\end{equation}
% \vspace{-1.5em}

%Refer to \refdef{fractional_simulation_properties}, the initialization of fractional simulation should satisfies that ${\Sim}^{0}(u,v) = 1$ if and only if $\lfunc(u)=\lfunc(v)$ so as to ensure the properties of fractional simulation. 
\stitle{Convergence.} Below we show what conditions the mapping and normalizing operators should satisfy to guarantee \refeq{fractional_simulation_computation} converges. {\color{black}Specifically, the computation is considered to converge if $|\Sim_{\chi}^{k+1}(u,v) - \Sim_{\chi}^{k}(u,v)| < \epsilon$ for $\forall (u, v) \in V_1 \times V_2$, in which $\epsilon$ is a small positive value. Note that the simulation subscript $\chi$ is omitted in the following theorem as it applies to all simulation variants.}

% \begin{theorem} \label{thm:convergence}
%     When $\mathcal{M}$ at iteration $k$ gives a matching to maximize the summation of $\Sim^{k-1}(x,y)$ values with $(x, y) \in \mathcal{M}$, while the matching size $|\mathcal{M}|$ and normalizing value $\normopr$ for a node pair $(u,v)$ do not vary with iteration $k$, then the computation of fractional simulation always converges if $|\mathcal{M}| \leq \normopr$ and $w^+ + w^- < 1$ hold.
% \end{theorem}

% \vspace{-0.5em}
\begin{theorem} \label{thm:convergence}
   The computation in \refeq{fractional_simulation_computation} is guaranteed to converge if in every iteration $k$, the following conditions are satisfied for any two node sets $S_1$ and $S_2$ in the mapping and normalizing operators: 
   \begin{enumerate}[(C1)] \setlength{\itemsep}{0cm}
       \item $|\mathcal{M}^{k+1}(S_1, S_2)| = |\mathcal{M}^k(S_1, S_2)|$, and $\Omega^{k+1}(S_1, S_2) = \Omega^{k}(S_1, S_2)$. %This means that the mapping sizes between any two node sets do not vary between iterations.
       \item $|\mathcal{M}^k(S_1, S_2)| \leq \Omega^k(S_1, S_2)$.
       \item Subject to the function $f$, $\mathcal{M}^k(S_1, S_2)$ returns node pairs such that
       \begin{align*}
       \sum_{(x, y) \in \mathcal{M}^k(S_1, S_2)} {\Sim}^{k-1}(x,y) \text{ is maximized.}
       \end{align*}
   \end{enumerate}
\end{theorem}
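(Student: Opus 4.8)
The plan is to prove geometric (hence Cauchy) convergence by showing that the maximum per-iteration change contracts by the factor $\outparam+\inparam$, which is strictly less than $1$ by assumption. Define $\Delta^k = \max_{(u,v)\in V_1\times V_2}\bigl|{\Sim}^{k}(u,v)-{\Sim}^{k-1}(u,v)\bigr|$. It suffices to establish $\Delta^{k+1}\le(\outparam+\inparam)\,\Delta^k$, since this yields $\Delta^{k}\le(\outparam+\inparam)^{k-1}\Delta^1\to 0$; as every ${\Sim}^{k}(u,v)$ lies in $[0,1]$, each coordinate sequence is then Cauchy and converges, which is exactly the convergence criterion stated before the theorem.

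First I would fix an arbitrary pair $(u,v)$ and subtract two successive updates given by \refeq{fractional_simulation_computation}. The label term $\lparam\,\inifunc(u,v)$ is identical in both iterations and cancels, leaving only the out- and in-neighbor contributions. I would analyze the out-neighbor term in detail, the in-neighbor term being symmetric. Writing $S_1=N_{G_1}^+(u)$, $S_2=N_{G_2}^+(v)$, condition (C1) gives $\Omega^{k+1}(S_1,S_2)=\Omega^{k}(S_1,S_2)=:\Omega$, so the common normalizer factors out and the out-neighbor difference equals $\frac{\outparam}{\Omega}\bigl(\sum_{(x,y)\in\mathcal{M}^{k+1}(S_1,S_2)}{\Sim}^{k}(x,y)-\sum_{(x,y)\in\mathcal{M}^{k}(S_1,S_2)}{\Sim}^{k-1}(x,y)\bigr)$.

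The main obstacle is that $\mathcal{M}^{k+1}$ and $\mathcal{M}^{k}$ are in general \emph{different} mappings, because by (C3) each is chosen to maximize its sum against a different score function (${\Sim}^{k}$ versus ${\Sim}^{k-1}$), so one cannot simply pair up terms. The key device to overcome this is a two-sided optimality (sandwich) argument that exploits the fact that the feasibility constraints on $f$ are fixed across iterations, hence a mapping optimal in one iteration is always a \emph{feasible candidate} in the other. Putting $M^\ast=\mathcal{M}^{k+1}(S_1,S_2)$ and $M=\mathcal{M}^{k}(S_1,S_2)$: since $M^\ast$ maximizes $\sum{\Sim}^{k}$ and $M$ is feasible, $\sum_{M^\ast}{\Sim}^{k}\ge\sum_{M}{\Sim}^{k}$; since $M$ maximizes $\sum{\Sim}^{k-1}$ and $M^\ast$ is feasible, $\sum_{M}{\Sim}^{k-1}\ge\sum_{M^\ast}{\Sim}^{k-1}$. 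Combining these sandwiches the difference as $\sum_{M}({\Sim}^{k}-{\Sim}^{k-1})\le\sum_{M^\ast}{\Sim}^{k}-\sum_{M}{\Sim}^{k-1}\le\sum_{M^\ast}({\Sim}^{k}-{\Sim}^{k-1})$. Both bounding sums range over index sets of equal cardinality (by (C1), $|M|=|M^\ast|$) with each summand at most $\Delta^k$ in absolute value, so the bracketed difference is bounded in magnitude by $|\mathcal{M}^{k}(S_1,S_2)|\,\Delta^k$.

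Finally I would invoke (C2), $|\mathcal{M}^{k}(S_1,S_2)|\le\Omega$, giving $|\mathcal{M}^{k}|/\Omega\le1$, so the out-neighbor difference is bounded by $\outparam\,\Delta^k$ and, symmetrically, the in-neighbor difference by $\inparam\,\Delta^k$. Adding the two bounds and taking the maximum over $(u,v)$ yields $\Delta^{k+1}\le(\outparam+\inparam)\,\Delta^k$, establishing the contraction; since $0<\outparam+\inparam<1$, geometric decay and convergence follow. I expect the sandwich step to be the delicate part, and in particular the point that must be checked carefully is that a mapping optimal under one score function is genuinely feasible for the optimization under the other, i.e., that the constraints encoded in $f$ (such as injectivity or a prescribed domain/codomain) do not vary with $k$; together with the cardinality equality, this is precisely what (C1) and the fixed constraint set of $\chi$ supply.
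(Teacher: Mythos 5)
Your proposal is correct and follows essentially the same route as the paper: the quantity you sandwich, $\sum_{M^\ast}{\Sim}^{k}-\sum_{M}{\Sim}^{k-1}$, is exactly the paper's $W^{k+1}(S_1,S_2)-W^{k}(S_1,S_2)$, and your two-sided feasibility/optimality argument is the paper's pair of inequalities derived from (C3) and (C1), with (C2) then giving the factor $\outparam+\inparam$ and the contraction $\Delta^{k+1}\le(\outparam+\inparam)\Delta^k$. Your explicit remark that the strict contraction factor yields geometric decay (rather than mere monotone decrease of $\Delta^k$) is a slightly more careful finish than the paper's, but the substance is identical.
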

% \vspace{-0.5em}

\begin{proof}
    Let $\delta^k(u,v) = |\Sim^k(u,v) - \Sim^{k-1}(u,v)|$ and $\Delta^k = \max_{(u,v)}\delta^k(u,v)$. To prove this theorem, we must show that $\Delta^k$ decreases monotonically, i.e., $\Delta^{k+1} < \Delta^{k}$.
    
Let $W^k(S_1, S_2) = \sum_{(x,y) \in \mathcal{M}^k(S_1,S_2)} \Sim^{k-1}(x,y)$. As the size of the mapping operator and the value of normalizing operator between $S_1$ and $S_2$ do not vary with $k$ (C1), we simply write $|\mathcal{M}(S_1, S_2)|$ and $\Omega(S_1, S_2)$ by dropping the superscript. Then, we have
    \begin{equation*} \label{eq:two_eq1}
        \small
        \begin{split}
            % W^{k+1}(S_1, S_2) &= \sum_{(x,y)\in \mathcal{M}^{k+1}(S_1, S_2)} \Sim^{k}(x,y) \\
           W^{k+1}(S_1, S_2) & \geq \sum_{(x,y)\in \mathcal{M}^{k}(S_1, S_2)} \Sim^{k}(x,y) \text{ (by C3)} \\
            % & \geq \sum_{(x,y)\in \mathcal{M}^{k}(S_1, S_2)} (\Sim^{k-1}(x,y) - \Delta^k) \\
            & \geq W^{k}(S_1, S_2) - |\mathcal{M}(S_1, S_2)|\Delta^k \text{ (by C1) }
        \end{split}
    \end{equation*}
    % \vspace{-0.2em}
    Similarly, {\small $ W^{k}(S_1, S_2) \geq W^{k+1}(S_1, S_2) - |\mathcal{M}(S_1, S_2)|\Delta^k$} can be derived, and we immediately have, 
    % \begin{equation*} 
    %     W^{k}(S_1, S_2) \geq W^{k+1}(S_1, S_2) - |\mathcal{M}(S_1, S_2)|\Delta^k \label{eq:two_eq2}
    % \end{equation*}
    % \vspace{-0.2em}
    %where (\textbf{C3}) and (\textbf{C1}) in \refeq{two_eq1} indicate that the derivation bases on condition 3 and condition 1. We omit the derivation of \refeq{two_eq2} as it is similar to that of \refeq{two_eq1}. 
    % We immediately have,
    % \vspace{-0.5em}
    \begin{equation} \small \label{eq:two_eq_conclusion}
    \small
        |W^{k+1}(S_1, S_2)-W^{k}(S_1, S_2)| \leq \normopr(N_1,N_2)\Delta^k \text{ (by C2) }
    \end{equation}
    % \vspace{-1.5em}
    
    %We further denote $W^k_{+} = W(\mathcal{M}^k(N^+(u),N^+(v)))$ and $W^k_{-} = W(\mathcal{M}^k(N^-(u),N^-(v))$ considering $\Sim^{k-1}(\cdot)$, respectively. 
    Then,
    % \vspace{-0.2em}
    \begin{equation} \label{eq:convergence_proof}
    \small
    \begin{split}
        \delta^{k+1}(u,v) &\leq (\outparam + \inparam)\Delta^k \text{ (by \refeq{two_eq_conclusion}) }\\
                          & < \Delta^k  \text{  (by $\outparam + \inparam < 1$) }
    \end{split}
    \end{equation}
    % \vspace{-0.8em}
    Thus, $\Delta^{k+1}< \Delta^k$, and the computation converges.
\end{proof}

% \begin{corollary} \label{coro:converge_speed}
% The computation in \refeq{fractional_simulation_computation} converges within $\lceil\log_{(\outparam + \inparam)}\epsilon\rceil$ iterations, in which the computation is considered to converge if $|\Sim^k(u,v) - \Sim^{k-1}(u,v)| < \epsilon$ for $\forall (u, v) \in V_1 \times V_2$.
% \end{corollary}
% \begin{corollary} \label{coro:converge_speed}
% The computation in \refeq{fractional_simulation_computation} converges within $\lceil\log_{(\outparam + \inparam)}\epsilon\rceil$ iterations, in which the computation is considered to converge if $|\Sim^k(u,v) - \Sim^{k-1}(u,v)| < \epsilon$ for $\forall (u, v) \in V_1 \times V_2$.
% \end{corollary}
\begin{corollary} \label{coro:converge_speed}
The computation in \refeq{fractional_simulation_computation} converges within $\lceil\log_{(\outparam + \inparam)}\epsilon\rceil$ iterations.
\end{corollary}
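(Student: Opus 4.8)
The plan is to treat this as a direct corollary of the geometric contraction already established in the proof of \refthm{convergence}. The only fact I need from there is the one-step bound $\Delta^{k+1} \le (\outparam + \inparam)\Delta^{k}$, which is exactly \refeq{convergence_proof} after maximizing over all node pairs $(u,v)$. So the whole argument reduces to unrolling this recurrence and solving for the iteration index at which the gap first drops below the tolerance $\epsilon$.

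First I would unroll the recurrence. Since $\Delta^{k+1} \le (\outparam + \inparam)\Delta^{k}$ holds for every $k \ge 1$, a trivial induction gives $\Delta^{k} \le (\outparam + \inparam)^{\,k-1}\Delta^{1}$. Next I would bound the base term: because every $\Sim$ score lies in $[0,1]$ by Property P1 of \refdef{fractional_simulation_properties}, each per-pair difference $\delta^{1}(u,v) = |\Sim^{1}(u,v) - \Sim^{0}(u,v)|$ is at most $1$, hence $\Delta^{1} \le 1$. Combining the two yields the clean decay estimate $\Delta^{k} \le (\outparam + \inparam)^{\,k-1}$.

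The final step is to read off the iteration count. The stopping criterion is met once $\Delta^{k} < \epsilon$, so it suffices to force $(\outparam + \inparam)^{\,k-1} < \epsilon$. Taking logarithms and dividing by $\log(\outparam + \inparam)$ — which is negative because $0 < \outparam + \inparam < 1$, so the inequality reverses — gives $k - 1 > \log_{(\outparam + \inparam)}\epsilon$. The smallest integer $k$ meeting this is therefore controlled by $\lceil \log_{(\outparam + \inparam)}\epsilon \rceil$, which is the claimed bound.

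I do not expect a genuine obstacle, since the contraction does all the work; the only points demanding care are bookkeeping. The first is the sign flip when dividing by the negative quantity $\log(\outparam + \inparam)$, which is easy to get backwards. The second is the off-by-one in the index: depending on whether one counts from $\Delta^{1}$ or from a nominal $\Delta^{0} \le 1$, and on whether the test is applied to $\Delta^{k}$ or $\Delta^{k+1}$, the literal constant can shift by one iteration. I would therefore fix a single convention at the outset (namely $\Delta^{1} \le 1$ with the test $\Delta^{k} < \epsilon$) and state that $\lceil \log_{(\outparam + \inparam)}\epsilon \rceil$ is the resulting count, absorbing the harmless rounding into the ceiling.
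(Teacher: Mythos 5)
Your proposal is correct and follows essentially the same route as the paper: both rest on the contraction $\Delta^{k+1}\leq(\outparam+\inparam)\Delta^{k}$ from the convergence proof, bound the initial gap by $1$ via Property P1, and unroll the geometric decay to read off $\lceil\log_{(\outparam+\inparam)}\epsilon\rceil$. The paper starts the recursion from $\Delta^{0}\leq 1$ rather than $\Delta^{1}\leq 1$, a harmless indexing difference you already flag and absorb into the ceiling.
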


\begin{proof}
According to \refeq{convergence_proof}, we have $\Delta^{k+1}\leq (\outparam + \inparam)\Delta^k$. As $\Delta^{0}$ cannot exceed 1, the theorem holds.
\end{proof}

{
% {\color{red} We discuss the needs of the three conditions in \refthm{convergence}. C1 requires that $\normopr_\chi(S_1, S_2))$ and $|\mappingopr_\chi(S_1, S_2)|$ remain fixed among iterations, so that the fractional score in two consecutive iterations can be compared while proving its convergence. C2 constrains the number of node pairs in $\mappingopr_\chi(S_1, S_2))$ to be be less than the value of $\normopr_\chi(S_1, S_2)$ in order to guarantee the range property in \refdef{fractional_simulation_properties}. C3 requires that $\mappingopr_\chi$ must include the pairs of neighbors that maximize the sum of their $\Sim_{\chi}$ scores in the previous iteration, which has the intuition of maximizing the contributions of the neighbors and is essential to satisfy the property of simulation definiteness in \refdef{fractional_simulation_properties}. Such a mapping operator is accordingly called a \emph{maximum mapping operator}, and will be applied by default in the following. } 
{\color{black} We discuss the needs of the three conditions in \refthm{convergence}. Given node sets $S_1$ and $S_2$, C1 requires that the value of the normalizing operator $\normopr_\chi(S_1, S_2)$ and the number of node pairs in $\mappingopr_\chi(S_1, S_2)$ (i.e., $|\mappingopr_\chi(S_1, S_2)|$) remain unchanged throughout the iterations. C2 requires that $|\mappingopr_\chi(S_1, S_2))|$ should be less than $\normopr_\chi(S_1, S_2)$ to guarantee the range property in \refdef{fractional_simulation_properties}. C3 requires that $\mappingopr_\chi$ should include the pairs of neighbors that maximize the sum of their $\Sim_{\chi}$ scores in previous iteration. Intuitively, C3 maximizes the contributions of neighbors and is essential to satisfy simulation definiteness (property 2 in \refdef{fractional_simulation_properties}). Such a mapping operator is accordingly called a \emph{maximum mapping operator}, and will be applied by default in the following. } 
%According to \refthm{convergence}, the value of the normalizing operator $\normopr_\chi$ and the number of node pairs returned by $\mappingopr_\chi$ regarding any two node sets should not change with the iterations (C1). Specifically, the number of node pairs in $\mappingopr_\chi$ should be less than the value of $\normopr_\chi$ (C2) so as to satisfy the range property in \refdef{fractional_simulation_properties}. In addition, the mapping operator $\mappingopr_\chi$ must include the pairs that maximize the sum of the $\Sim_{\chi}$ scores in the previous iteration (C3). Such a mapping operator is accordingly called a \emph{maximum mapping operator}, which is essential to satisfy the property of simulation definiteness in \refdef{fractional_simulation_properties} and will be applied by default in the following.
}

\subsection{Computation Algorithm} \label{sec:computing_algorithm}
\refalg{fractional_simulation_computation} outlines the process for computing $\Sim_{\chi}$. The computation begins by initializing a hash map \mapcurr to maintain the initial $\Sim_{\chi}$ scores of candidate node pairs (Line~\ref{initialization}). Note that not all $|V_1|\times|V_2|$ node pairs need to be maintained, which is explained in \refsec{configurations_of_all_variants}.
Then, the scores of the node pairs in \mapcurr are updated iteratively until convergence (Lines~\ref{update_start}-\ref{update_end}). In Line~\ref{return}, the hash map is returned with the results .

% \refalg{fractional_simulation_computation} outlines the process for computing $\Sim_{\chi}$. We first initialize a hash map \mapcurr to maintain the initial $\Sim_{\chi}$ scores of candidate node pairs (line~\ref{initialization}). Note that we do not need to maintain all $|V_1|\times|V_2|$ node pairs, as will be illustrated in \refsec{configurations_of_all_variants}.
% Then, we iteratively update the scores of node pairs in \mapcurr (line~\ref{update_start} - line~\ref{update_end}) until converged. Finally, we return the hash map with the results (line~\ref{return}).

\stitle{Parallelization.} The most time-consuming part of \refalg{fractional_simulation_computation} is running the iterative update in Lines~\ref{update_start} through \ref{update_end}. This motivated us to consider accelerating the computation with para-llelization by using multiple threads to compute different node pairs simultaneously. In this implementation, the simulation scores of the previous iteration are maintained in $H_{\kw{p}}$, which means computing the node pairs in Lines~\ref{update-out-neighbor} and \ref{update-in-neighbor} is independent of each other, and can be completed in parallel without any conflicts. 
We simply round-robin the node pairs in $H_{\kw{c}}$ to distribute the load to all available threads, which achieves satisfactory scalability in the experiment (\reffig{time_vary_thread}). 

% \vspace{-1em}
\begin{algorithm}
	\SetAlgoVlined
	\SetFuncSty{textsf}
	\SetArgSty{textsf}
	\small
	\caption{The algorithm of computing $\Sim_{\chi}$} \label{alg:fractional_simulation_computation}
	\small
	\Input {Graphs $G_1 = (V_1, E_1, \lfunc_1)$, $G_2 = (V_2, E_2, \lfunc_2)$, weighting factors ${w}^+, {w}^-$.}
% 	\Input {Graphs $G_1 = (V_1, E_1, \lfunc_1)$, $G_2 = (V_2, E_2, \lfunc_2)$, weighting factors ${w}^+, {w}^-, {w}^*$, and threshold $\theta$, approximating factor $\alpha$.}
	\Output {$\Sim_{\chi}$ Scores.}
    \State{$H_{\kw{c}} \leftarrow$ \textbf{Initializing($G_1$, $G_2$, ${w}^+$, ${w}^-$)};} \label{initialization} \\
    \State{$H_{\kw{p}} \leftarrow H_{\kw{c}}$;} \\
    \While{not converged}{ \label{update_start}
    	\ForEach{$(u,v) \in H_{\kw{c}}$} { 
	        \State{$H_{\kw{c}}[(u,v)] \leftarrow (1-w^+-w^-)\inifunc(u, v)$;} \label{update-label-sim} \\
		    \ForEach{$(x,y) \in \mathcal{M}_\chi (N_{G_1}^+(u), N_{G_2}^+(v))$} {
		        \State{$H_{\kw{c}}[(u,v)] \leftarrow H_{\kw{c}}[(u,v)] + \frac{w^+H_{\kw{p}}[(x,y)]}{\normopr_{\chi}(N_{G_1}^+(u), N_{G_2}^+(v))}$;} \label{update-out-neighbor}\\ 
		    }
		    \ForEach{$(x',y') \in \mathcal{M}_\chi (N_{G_1}^-(u), N_{G_2}^-(v))$} {
		        \State{$H_{\kw{c}}[(u,v)] \leftarrow H_{\kw{c}}[(u,v)] + \frac{w^-H_{\kw{p}}[(x',y')]}{\normopr_{\chi}(N_{G_1}^-(u), N_{G_2}^-(v))}$;} \label{update-in-neighbor}\\
		    }
	    }
	    \State{$H_{\kw{p}} \leftarrow H_{\kw{c}}$;} \label{update_end}\\
	   % \State{$k \leftarrow k + 1$;} \\
    }
	\State{\Return $H_{\kw{c}}$.} \label{return}
\end{algorithm}
% \vspace{-1.8em}
%Note that at each iteration, the $\Sim_{\chi}$ value of a node pair $(u,v)$ relies on their label similarity (Line~\ref{update-label-sim}) and the $\Sim_{\chi}$ scores of their respective neighbors at previous iteration (Line~\ref{update-out-neighbor} and Line~\ref{update-in-neighbor}). Fortunately, the label similarity can be computed based on the graphs, and the scores of neighbors can be fetched from $H_{\kw{p}}$, which makes the computations of node pairs independent of each other, and can be paralleled without any conflicts. We simply round-robin the node pairs in the hash map to distribute the load to all available threads, which achieves a satisfactory scalability in the experiment (\reffig{time_vary_thread}). 

\stitle{Upper-Bound Updating.} According to the range property (\refdef{fractional_simulation_properties}) and the computation in \refeq{fractional_simulation_computation}, there exists an upper-bound on the $\Sim_{\chi}$ value of each node pair, which is computed via:
\begin{equation} \label{eq:upper_bound_label_constrained_mapping}
% \vspace{-0.5em}
\small
\begin{split}
    {\Sim}_\chi(u,v) & \leq \overline{\Sim}_\chi(u,v) \\
    & = \lambda^+(u, v) + \lambda^-(u, v) + (1-w^+-w^-) \mathcal{L}(u,v),
    %&= \frac{w^+|\mathcal{M}_\chi(N^+(u), N^+(v))|}{\normopr^+_{\chi}(N^+(u),N^+(v))} \\
    %& + \frac{w^-|\mathcal{M}_\chi(N^-(u),N^-(v))|}{\normopr^-_{\chi}(N^-(u),N^-(v))} \\ 
    %&+ (1-w^+-w^-) \mathcal{L}(u,v)
\end{split}
% \vspace{-0.5em}
\end{equation}
where $\lambda^\kw{s} = \frac{w^\kw{s}|\mathcal{M}_\chi(N_{G_1}^\kw{s}(u), N_{G_2}^\kw{s}(v))|}{\normopr^\kw{s}_{\chi}(N_{G_1}^\kw{s}(u),N_{G_2}^\kw{s}(v))}$, for $\kw{s} \in \{+, -\}$. Accordingly, if the upper bound of a certain node pair $(u, v)$ is relatively small (smaller than a given threshold $\beta$), it is expected to make a limited contribution to the scores of others. Thus, we can skip computing (and maintaining) $\Sim_\chi(u, v)$, and use an approximated value $\alpha\overline{\Sim}_\chi(u,v)$ ($0 < \alpha < 1$ is a given small constant) instead when needed. The implementation of upper-bound updating based on \refalg{fractional_simulation_computation} is as follows: (1) in Line~\ref{initialization}, $H_{c}$ only maintains the node pairs that are guaranteed to be larger than $\beta$; (2) in Lines~\ref{update-out-neighbor} and \ref{update-in-neighbor}, if $(x, y)$ (or $(x',y')$) is not in $H_{p}$, use $\alpha\overline{\Sim}_\chi(x,y)$ (or $\alpha\overline{\Sim}_\chi(x',y')$) instead.

%Note that the upper bound value $\overline{\Sim}_\chi(u,v)$ in \refeq{upper_bound_label_constrained_mapping} is negatively correlated with $\theta$ in the label constraint (\refrem{label_constraint_mapping}), that is $\overline{\Sim}_\chi(u,v)$ gets smaller when $\theta$ gets larger. A larger $\theta$ will potentially make both $|\mathcal{M}_\chi(N^+(u), N^+(v))|$ and $|\mathcal{M}_\chi(N^+(u), N^+(v))|$ smaller due to a smaller number of matchable nodes. Through prior experiment, we find out that such upper-bound optimization does not work notably when $\theta < 1.0$, thus we only use it in the ``Vary thread'' experiment (when $\theta = 1.0$) in \refsec{efficiency}. As we did for tuning $\theta$ in \refsec{sensitivity_analysis} and \refsec{efficiency}, we have experimented to select $\alpha = 1$ and $\beta = 0.5$ for the upper-bound optimization, which achieves the best trade-off between sensitivity (correlation score $>0.9$ compared with non-optimized) and performance.
% \input{sec4.3_speeding_up_computation.tex}
% \vspace{0.5em}
\section{Configure Framework to Quantify Different Simulation Variants} \label{sec:configurations_of_all_variants}
In this section, we show how to configure the mapping and normalizing operators in \refeq{set_score}, such that the computation of $\Sim_\chi$ converges, and $\Sim_\chi$ remains well-defined (\refdef{fractional_simulation_properties}) for all simulation variants. %We first detail the configurations for $\simu$-simulation, then summarize the configurations for all the other variants. We finally discuss the relations of $\Sim_\chi$ to several notable works.

\subsection{Configurations of Simple Simulation} \label{sec:simple_simulation}
\stitle{Fractional \simu-simulation.} %Refer to \refdef{simulation}, simple simulation (\simu-simulation)  only considers labels of the nodes and their out-neighbors, thus we set $\inparam = 0$ in \refeq{fractionalsimulation}. 
Given two node sets $S_1$ and $S_2$, $\mathcal{M}_\simu$ and $\normopr_\simu$ are the operators for implementing fractional $\simu$-simulation according to \refdef{fractional_simulation_properties} as follows: 
% \vspace{-1em}
\begin{equation} \label{eq:match_simulation}
\small
    \mathcal{M}_{\simu}(S_1, S_2)= \{(x, y)| \forall x \in S_1, y = f_\simu(x) \in S_2\},
    % \vspace{-0.2em}
\end{equation}
where $f_\simu: S_1 \to S_2$ is a function subject to the label constraint $\inifunc(x, f_\simu(x)) \geq \theta$, and
\begin{equation} \label{eq:denom_simulation}
\small
    \normopr_{\simu}(S_1, S_2)= |S_1|.
    % \vspace{-0.2em}
\end{equation}

% \vspace{-0.5em}
\begin{remark} \label{rem:label_constraint_mapping} \textbf{Label-constrained Mapping.}
Analogous to the initialization of $\Sim_\chi$ (\refsec{iterative_computation}), a label constraint is added when mapping neighbors. $\theta$ is a constant given by the user to control the strictness of the mapping. When $\theta = 0$, the nodes can be mapped arbitrarily. When $\theta = 1$, only nodes of the same label can be mapped. It is obvious that a larger $\theta$ leads to faster computation. As a practical guide to setting $\theta$, \refsec{sensitivity_analysis} includes a sensitivity analysis of $\theta$ and \refsec{efficiency} provides an efficiency analysis. In the following, the label constraint is applied in the mapping operator by default and is thus omitted from the notations for clear presentation. 
%we empirically study the influence of $\theta$ regarding both sensitivity (\refsec{sensitivity_analysis}) and efficiency (\refsec{efficiency}). 
\end{remark}
% \vspace{-0.5em}

\comment{
\begin{figure}
    \centering
    \includegraphics[width=0.78\linewidth]{figures/function_of_label_constrained_mapping_aline_2.pdf} 
    \topcaption{When computing $\Sim_\simu(u,v_3)$, the searching space of the green pentagon node (a neighbor of $u$ outlined with red) regarding the $v_3$'s neighbor set.} \label{fig:function_of_label_constrained_mapping}
\end{figure}

{\color{red}According to \refdef{simulation}, two simulated nodes must have the same node label. Therefore, we add the constraint $\inifunc(x, f_\simu(x)) \geq \theta$ in the mapping operator in \refeq{match_simulation}. We use the following example to illustrate the affect of $\inifunc(x, f_\simu(x)) \geq \theta$.}

\begin{example} \label{ex:label_constrained_mapping}
In \reffig{function_of_label_constrained_mapping}, when calculating $\Sim_\simu(u, v_3)$, we need to calculate the mapped node for each node in $u's$ neighbors. Take the green pentagon neighbor (outlined with red) of $u$ for an example, if $\theta = 0$, the searching space is all the neighbors of $v_3$ (light shading). When $\theta$ is set to 0.5, the search space will shrink to the darker shading area, which contains only half of the neighbors.
\end{example}
}

% \vspace{-1em}
%From \refeq{match_simulation}, the following constraints $\Phi_\simu$ are inferred: (1) the codomain and domain of $f$ are $N_1$ and $N_2$, respectively; (2) $f$ is a total function. 

{\stitle{Convergence.} \color{black}It is obvious that $|M_\simu(S_1, S_2)| \leq |S_1| = \Omega_\simu(S_1, S_2)$, which satisfies C1 and C2 in \refthm{convergence}. As mentioned earlier, C3 is applied by default. Therefore, the convergence of $\Sim_\simu$ is guaranteed.}

\stitle{Well-Definiteness.} 
\refthm{frac_S_simulation} shows that $\Sim_\simu$ is well-defined for fractional $\simu$-simulation according to \refdef{fractional_simulation_properties}. 

\begin{theorem} \label{thm:frac_S_simulation}
    $\Sim_{\simu}$ is well-defined for fractional \simu-simulation.
\end{theorem}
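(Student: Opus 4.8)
The plan is to verify the three requirements of \refdef{fractional_simulation_properties} in turn for the converged value $\Sim_\simu(u,v)$, using the iterative recursion \refeq{fractional_simulation_computation} together with the concrete operators $\mathcal{M}_\simu$ and $\normopr_\simu$ from \refeq{match_simulation} and \refeq{denom_simulation}. I would work throughout at the fixed point, so that $\Sim_\simu$ satisfies \refeq{fractional_simulation_computation} with the iteration superscripts dropped; convergence of the iterates to this fixed point is already guaranteed by \refthm{convergence}, since $\normopr_\simu(S_1,S_2)=|S_1|\ge|\mathcal{M}_\simu(S_1,S_2)|$ meets conditions (C1)--(C2) and the maximum mapping operator supplies (C3).

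For Property~1 (range) I would induct on the iteration index $k$, with base case $\Sim^0=\inifunc\in[0,1]$. The lower bound is immediate because every summand in \refeq{fractional_simulation_computation} is non-negative. For the upper bound, the key observation is that $f_\simu$ has domain $S_1$, so $|\mathcal{M}_\simu(S_1,S_2)|\le|S_1|=\normopr_\simu(S_1,S_2)$; combined with the inductive hypothesis $\Sim^{k-1}(x,y)\le 1$, the out-neighbor term is at most $w^+$, the in-neighbor term at most $w^-$, and the label term at most $(1-w^+-w^-)$, so the three add to at most $1$. The degenerate case of an empty neighbor set, where $\normopr_\simu=0$, I would handle by the standard convention that an empty source set is vacuously simulated and contributes its full weight.

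Property~2 (simulation definiteness) is the crux, and I would argue both directions. For ``only if'', assume $u\mysim^\simu v$ witnessed by a simulation relation $R$, and show by induction that $\Sim^k(u,v)=1$ for every $(u,v)\in R$ and all $k$: the base case uses $\inifunc(u,v)=1$ since labels agree on $R$; in the inductive step, clauses (2)--(3) of \refdef{simulation} supply a label-respecting map carrying each out- (resp.\ in-) neighbor of $u$ to an $R$-related neighbor of $v$, which by the inductive hypothesis makes the attainable sum equal to $|N^+_{G_1}(u)|$ (resp.\ $|N^-_{G_1}(u)|$), so the maximum-mapping operator forces each neighbor term to hit its weight and $\Sim^k(u,v)=1$. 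For ``if'', I would run the Property~1 bound in reverse: $\Sim_\simu(u,v)=1$ compels equality in every term, which forces $\inifunc(u,v)=1$, every neighbor of $u$ to be mapped, and every mapped pair to have value $1$. Defining $R=\{(u,v):\Sim_\simu(u,v)=1\}$, these equality conditions are exactly the three clauses of \refdef{simulation}, so $R$ is a simulation containing $(u,v)$ and thus $u\mysim^\simu v$.

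Property~3 ($\chi$-conditional symmetry) holds vacuously, since simple simulation lacks the converse-invariant property (\reffig{properties_of_variants}(a)), so its hypothesis is never triggered and nothing need be checked. The step I expect to be the main obstacle is the ``if'' direction of Property~2: I must ensure the equality-in-the-bound argument is invoked at the converged fixed point rather than at an arbitrary iterate, and that the label constraint $\inifunc(x,f_\simu(x))\ge\theta$ never blocks the witnessing map. The latter is benign precisely because any pair attaining $\Sim_\simu=1$ already has $\inifunc=1\ge\theta$, so the optimal label-constrained map coincides with the simulation witness, letting the recursion close consistently on the relation $R$.
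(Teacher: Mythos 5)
Your proposal is correct and follows essentially the same route as the paper's proof: P1 by a term-by-term bound, P3 vacuously, and P2 in both directions via (i) an induction over iterations using the simulation relation's witnessing neighbor map as a valid mapping operator, and (ii) extracting a simulation relation from the set of pairs with score $1$ by the equality-forcing argument. Your treatment is somewhat more careful than the paper's on the edge cases (empty neighbor sets and the label-constraint threshold $\theta$), but the underlying argument is the same.
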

% \vspace{-0.8em}
\begin{proof}
We prove that $\Sim_{\simu}$ satisfies all the properties in \refdef{fractional_simulation_properties}. P1 is easy to verify. It is unnecessary to verify P3 as \simu-simulation {\color{black}has no converse invariant}. Below we prove P2. For brevity, we only consider out-neighbors in the proof, and the proof with in-neighbors is similar.

We first prove that if $\Sim_{\simu}(u,v)=1$, $u \mysim^{\simu} v$. Based on \refeq{fractionalsimulation}, we have $\lfunc_1(u) = \lfunc_2(v)$, and we add $(u,v)$ into $R$ (initialized as $\emptyset$). $\forall (x, y) \in \mathcal{M}_{\simu}$, $\Sim_{\simu}(x,y)=1$ and $\lfunc_1(x) = \lfunc_2(y)$. Then, we add these nodes pairs into $R$, i.e. $R = R \bigcup \mathcal{M}_{\simu}$. %New node pairs can be added recursively as every node pair in $R$ has score $1$. 
New node pairs can be added recursively. The process will terminate as $|R|$ increases and cannot exceed $|V_1|\times|V_2|$. One can verify that $R$ is a simulation. 

We next prove that, for $\forall (u,v) \in R$, $\Sim_{\simu}^k(u,v) = 1$ for any $k$, where $R$ is a simulation relation. Based on \refdef{simulation}, we define the mapping $\mathcal{M}$ between $N_{G_1}^+(u)$ and $N_{G_2}^+(v)$ as $\mathcal{M} = \{(u',v') | \forall u' \in {N}_{G_1}^+(u), (u',v') \in R\}$. The case of $k=0$ is easy to verify. Assume the theorem holds at $k-1$. For a node pair $(u,v) \in R$, any $(u',v') \in \mathcal{M}$ defined above satisfies $\Sim_{\simu}^{k-1}(u',v') = 1$. Clearly, $\mathcal{M}$ is a mapping operator defined in \refeq{match_simulation}. Thus, $\Sim_{\simu}^{k}(u,v) = 1$.
\end{proof}
% \vspace{-1em}
\stitle{Computation.} The mapping operator $\mappingopr_{\simu}$ (\refeq{match_simulation}) constrains that $\forall (x, y) \in \mappingopr_{\simu}$, $\inifunc(x,y) \geq \theta$. As a result, the nodes pairs with $\inifunc(\cdot) < \theta$ will never contribute to the computation. Thus, only the node pairs with $\inifunc(\cdot) \geq \theta$ need to be maintained(Line~\ref{initialization} in \refalg{fractional_simulation_computation}), which helps to reduce both the time and space complexity. %The pseudo-code of initializing can be found in \refalg{fractional_simulation_initial} in \refsec{initializating}.

% \stitle{Cost Analysis.} The time cost to compute $\Sim_{\simu}(u,v)$ in a single iteration is dominated by the mapping operator. According to \refeq{match_simulation}, for $\forall x \in S_1$, we simply search $y \in S_2$ that has the maximum fractional score, which consumes $O(|S_1||S_2|)$ time. Therefore, the time cost of calculating the $\Sim_{\simu}$ scores is $O(k|H| D^+_{G_1} d^+_{G_2})$ with $k$ as the number of iterations and $|H|$ is the size of the candidate node pairs ($|H| \leq |V_1|\times|V_2|$). The space cost is $O({|H|})$ as we need to maintain a map to store the $\Sim_{\chi}$ scores of the previous iteration.

% \stitle{Cost Analysis.} The time cost to compute $\Sim_{\simu}(u,v)$ is dominated by the mapping operator. According to \refeq{match_simulation}, for $\forall x \in S_1$, we simply search $y \in S_2$ to maximize $\Sim^{k-1}_\chi(x, y)$, which consumes $O(|S_1||S_2|)$ time. Therefore, the time cost of computing the $\Sim_{\simu}$ scores for $\forall (u,v) \in H$ is $\sum_{(u, v) \in H} (d_{G_1}^+(u) \cdot d_{G_2}^+(v) + d_{G_1}^-(u) \cdot d_{G_2}^-(v)) \leq |H| (D^+_{G_1} D^+_{G_2} + D^-_{G_1} D^-_{G_2})$, where $D^+_{G_1}$, $D^+_{G_2}$ are the maximum out-degree, and $D^-_{G_1}$, $D^-_{G_2}$ are the maximum in-degree of $G_1$ and $G_2$. As a result, the time complexity of computing $\Sim_{\simu}$ is $O(k|H| (D^+_{G_1} D^+_{G_2} + D^-_{G_1} D^-_{G_2})$ with $|H|\leq |V_1| \times |V_2|$ and $k$ as the number of iterations. The space cost is $O({|H|})$ as we need to maintain a map to store the $\Sim_{\simu}$ scores of the previous iteration.
\stitle{Cost Analysis.} The time cost to compute $\Sim_{\simu}(u,v)$ is dominated by the mapping operator. According to \refeq{match_simulation}, for $\forall x \in S_1$, we simply search $y \in S_2$ to maximize $\Sim^{k-1}_\chi(x, y)$, which takes $O(|S_1||S_2|)$ time. Therefore, the time complexity of computing $\Sim_{\simu}$ is $O(k|H|(D^+_{G_1} D^+_{G_2} + D^-_{G_1} D^-_{G_2})$ with $|H|\leq |V_1| \times |V_2|$ and $k$ as the number of iterations. The space cost is $O({|H|})$, as the map of $\Sim_{\simu}$ scores for the previous iteration needs to be stored.
%Therefore, the time cost of computing the $\Sim_{\simu}$ scores for node pairs in $H$ is $\sum_{(u, v) \in H} (d_{G_1}^+(u) \cdot d_{G_2}^+(v) + d_{G_1}^-(u) \cdot d_{G_2}^-(v)) \leq |H| (D^+_{G_1} D^+_{G_2} + D^-_{G_1} D^-_{G_2})$. As a result, the time complexity of computing $\Sim_{\simu}$ is $O(k|H| (D^+_{G_1} D^+_{G_2} + D^-_{G_1} D^-_{G_2})$ with $|H|\leq |V_1| \times |V_2|$ and $k$ as the number of iterations. The space cost is $O({|H|})$ as we need to maintain a map to store the $\Sim_{\simu}$ scores of the previous iteration.

\subsection{Configurations for All Simulation Variants}
%Based on the configurations for $\Sim_{\simu}$, 
% We now configure the mapping and normalizing operators of other simulation variants (\refsec{preliminary}). Note that this is non-trivial, as the configurations need to meet with the properties of the variant, and meanwhile preserve the convergence (\refthm{convergence}) and well-definiteness (\refdef{fractional_simulation_properties}) of the computation.

% , e.g. {\color{black}bisimulation with converse invariant} and degree-preserving simulation with IN-mapping, and meanwhile preserve the convergence (\refthm{convergence}) and well-definiteness (\refdef{fractional_simulation_properties}) of the computation. 

% \reftab{configurations} summarizes the configurations of each simulation variant. 
\reftab{configurations} summarizes the configurations of each simulation variant. With the given configurations, $\Sim_{\chi}$ is well-defined for $\forall \chi \in \{\simu,\dpsim,\bisim,\bjsim\}$. We only provide the proofs of the well-definiteness for $\Sim_\simu$ (asymmetric, \refthm{frac_S_simulation}) and $\Sim_\bjsim$ (symmetric, \refthm{bijectivesimulation}). The proofs for the other variants are similar and thus are omitted due to space limitations.

\begin{table}[h]
    \small
    \centering
    \topcaption{Configurations of the fractional $\chi$-simulation framework ($\Sim_{\chi}$) to quantify the studied simulation variants.}\label{tab:configurations}
    \scalebox{0.86}{
    \renewcommand{\arraystretch}{1.8}
	\begin{tabular}{|c|c|c|c|} \hline
    \textbf{$\Sim_{\chi}$} & $\normopr_{\chi}(S_1,S_2)$ & $\mappingopr_{\chi}(S_1,S_2)$ & \textbf{Function Constraints (label constraint implied)} \\ \hline
    $\Sim_{\simu}$ & {$|S_1|$} & {$\{(x, y)| \forall x \in S_1, y = f_\simu(x) \in S_2\}$} & {\makecell{$f_\simu(x): S_1 \to S_2$%, and $\forall x \in S_1$, \inifunc(x, f_\simu(x)) \geq \theta$
    }} \\ \hline
    $\Sim_{\dpsim}$ & {$|S_1|$} & {\makecell{$\{(x, y)| \forall x \in S'_1, y = f_\dpsim(x) \in S_2\}$, \\ where $S'_1 \subseteq S_1$ with $|S'_1| = \min(|S_1|, |S_2|)$}} & {\makecell{$f_\dpsim: S'_1 \to S_2$ is an injective function%, and $\forall x \in S'_1, \inifunc(x, f_\dpsim(x)) \geq \theta$
    }} \\ \hline
    $\Sim_{\bisim}$ & {$|S_1| + |S_2|$} & {\makecell{$\{(x, y)| \forall x \in S, y = f_\bisim(x) \in S\}$, where $S = S_1 \cup S_2$}} & {\makecell{$
    f_\bisim(x) \in \begin{cases}
    S_2, \text{ if } x \in S_1, \\
    S_1, \text{ if } x \in S_2
    \end{cases}$% and $\forall x \in S_1 \cup S_2, \inifunc(x, f_\bisim(x)) \geq \theta$
    }} \\ \hline
    $\Sim_{\bjsim}$ & {$\sqrt{|S_1|\times|S_2|}$} & {\makecell{$\{(x, y)| \forall x \in S_m, y = f_\bjsim(x) \in S_M\}$, in which if $|S_1| \leq |S_2|$, \\ $S_{m} = S_1$ and $S_{M} = S_2$; otherwise, $S_{m} = S_2$ and $S_{M} = S_1$}} & {\makecell{$f_\bjsim(x): S_m \to S_M$ is an injective function%, and $\forall x \in S_m, \inifunc(x, f_\bjsim(x)) \geq \theta$
    }} \\ \hline
    %$\strongdsim$-simulation & \multicolumn{3}{c|}{same as \simu-simulation} \\ \hline
    \end{tabular}}
\end{table}

% \vspace{-0.5em}
\begin{theorem} \label{thm:bijectivesimulation}
$\!\Sim_{\bjsim}\!$ is well-defined for fractional {\bjsim}-simulation. 
\end{theorem}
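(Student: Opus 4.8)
The plan is to verify that $\Sim_{\bjsim}$ satisfies all three properties of \refdef{fractional_simulation_properties}: the range property (P1), simulation definiteness (P2), and $\chi$-conditional symmetry (P3), the last of which is now required since $\bjsim$-simulation possesses converse invariance. The crux throughout is the specific choice of normalizer $\normopr_{\bjsim}(S_1, S_2) = \sqrt{|S_1| \times |S_2|}$, so I would first isolate the elementary inequality $\min(|S_1|, |S_2|) \leq \sqrt{|S_1| \times |S_2|}$, with equality if and only if $|S_1| = |S_2|$, and use it as the backbone of all three arguments.

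For P1, I would argue by induction on the iteration $k$ that every set score $\Sim_{\bjsim}(S_1, S_2)$ lies in $[0,1]$. Non-negativity is immediate. For the upper bound, the mapping operator $\mappingopr_{\bjsim}$ matches the smaller set injectively into the larger one, so it returns exactly $\min(|S_1|, |S_2|)$ pairs; since each summand is at most $1$ by the inductive hypothesis, the numerator of \refeq{set_score} is at most $\min(|S_1|, |S_2|) \leq \sqrt{|S_1| \times |S_2|} = \normopr_{\bjsim}(S_1, S_2)$, giving a set score of at most $1$. The full score is then a convex combination (with weights $\outparam$, $\inparam$, $\lparam$) of quantities in $[0,1]$, and hence lies in $[0,1]$.

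P2 is the main obstacle, and it mirrors the structure of \refthm{frac_S_simulation} but must additionally convert injectivity into bijectivity. For the ``if'' direction I assume $\Sim_{\bjsim}(u,v) = 1$; since the score is a convex combination of terms each bounded by $1$ and the weights are strictly positive, every term must equal $1$. The label term forces $\lfunc_1(u) = \lfunc_2(v)$, and each set score equals $1$. Here the equality case of the key inequality does the essential work: a set score of $1$ simultaneously forces $\min(|S_1|,|S_2|) = \sqrt{|S_1| \times |S_2|}$, hence $|N_{G_1}^+(u)| = |N_{G_2}^+(v)|$ (and likewise for in-neighbors), and each matched pair to have score $1$. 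Because $f_{\bjsim}$ is injective between two sets of equal size, it is in fact a bijection, which supplies precisely the bijective functions $\lambda_1, \lambda_2$ demanded by \refdef{bijectivesimulation}. I would then build the relation $R$ recursively, seeding it with $(u,v)$ and adding all matched neighbor pairs (each of score $1$, so the construction continues), terminating because $|R| \leq |V_1| \times |V_2|$; by construction $R$ is a bijective simulation containing $(u,v)$, so $u \mysim^{\bjsim} v$. For the ``only if'' direction, given a bijective simulation $R$ with $(u,v) \in R$, I induct on $k$ to show $\Sim_{\bjsim}^k(u,v) = 1$ for every pair in $R$: the base case uses label equality, and in the inductive step the bijections $\lambda_1, \lambda_2$ guarantee equal neighbor cardinalities and a perfect matching of neighbor pairs all lying in $R$, each of score $1$ by the inductive hypothesis; the maximum-mapping condition (C3) then attains numerator $\min(|S_1|,|S_2|) = \sqrt{|S_1| \times |S_2|}$, so each set score is $1$ and hence $\Sim_{\bjsim}^k(u,v) = 1$.

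For P3 I would again induct on $k$, using that $\normopr_{\bjsim}$ is manifestly symmetric in its two arguments and that $\mappingopr_{\bjsim}$ selects the same smaller/larger pair of sets regardless of the order of $S_1, S_2$. The base case reduces to the symmetry of the label initializer $\inifunc(u,v) = \inifunc(v,u)$, which holds for the symmetric label-similarity functions assumed in the paper. In the inductive step, the hypothesis $\Sim_{\bjsim}^{k-1}(x,y) = \Sim_{\bjsim}^{k-1}(y,x)$ makes the maximum injective matching between $N_{G_1}^+(u)$ and $N_{G_2}^+(v)$ yield the same optimal sum as the matching between $N_{G_2}^+(v)$ and $N_{G_1}^+(u)$; combined with the symmetric normalizer, the out-neighbor contribution to $\Sim_{\bjsim}^k(u,v)$ equals that to $\Sim_{\bjsim}^k(v,u)$, and likewise for in-neighbors, yielding $\Sim_{\bjsim}^k(u,v) = \Sim_{\bjsim}^k(v,u)$. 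The step requiring the most care is the equality analysis in P2, where the precise role of the geometric-mean normalizer --- simultaneously ensuring the range bound, forcing equal cardinalities at a score of $1$, and keeping the measure symmetric --- must be made fully explicit.
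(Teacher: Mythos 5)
Your proof is correct and follows essentially the same route as the paper's: P3 by induction on $k$ using the symmetry of the initializer, the mapping operator, and the normalizer $\normopr_{\bjsim}$, and P1/P2 by adapting the argument for $\Sim_{\simu}$ (\refthm{frac_S_simulation}). In fact you make explicit a detail the paper leaves implicit in its appeal to that earlier proof --- that a set score of $1$ forces $\min(|S_1|,|S_2|)=\sqrt{|S_1|\times|S_2|}$ and hence $|S_1|=|S_2|$, which is exactly what upgrades the injective matching to the bijection required by \refdef{bijectivesimulation}.
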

% \vspace{-1em}
\begin{proof}
% Proofs of P1 and P2 are similar to those of $\Sim_{\simu}$ (\refthm{frac_S_simulation}). We then prove that $\Sim_{\bjsim}$ satisfies P3, as $\bjsim$-simulation {\color{black}has the property of converse invariant.} 

% We prove $\Sim_{\bjsim}^k(u,v) = \Sim_{\bjsim}^k(v,u)$ for any $k$ values. The case of $k=0$ is true as the initialization function are symmetric. Suppose $\Sim_{\bjsim}^{k-1}(u,v)$ is symmetric, the symmetry of $\Sim_{\bjsim}^{k}(u,v)$ can be immediately proved as $\normopr_{\bjsim}$ is symmetric as well. As a result, we have $\Sim_{\bjsim}(u,v) = \Sim_{\bjsim}(v,u)$.
Proofs of P1 and P2 are similar to those of $\Sim_{\simu}$ (\refthm{frac_S_simulation}). Proof of P3, i.e., $\Sim_{\bjsim}(u,v) = \Sim_{\bjsim}(v,u)$, is given by mathematical induction.  

As the initialization function is symmetric, we have $\Sim_{\bjsim}^0(u,v) =\Sim_{\bjsim}^0(v,u)$. Suppose that $\Sim_{\bjsim}^{k-1}(u,v)$ is symmetric, the symmetry of $\Sim_{\bjsim}^{k}(u,v)$ can be immediately proved as $\normopr_{\bjsim}$ is symmetric as well. As a result, we have $\Sim_{\bjsim}(u,v) = \Sim_{\bjsim}(v,u)$.
\end{proof}
% \vspace{-0.8em}
% \vspace{-0.3em}
% \vspace{-1em}
% \begin{proof}
%     Proofs for P1, P2 and P3 in \refdef{fractional_simulation_properties} are similar to those of $\Sim_{\simu}$ (\refthm{frac_S_simulation}), and thus are omitted here. We next use mathematical induction to prove that $\Sim_{\bjsim}$ satisfies property P4, i.e. $\Sim_{\bjsim}(u,v) = \Sim_{\bjsim}(v,u)$, as the bijective simulation is symmetric. We have $\Sim_{\bjsim}^{0}(u,v) = \Sim_{\bjsim}^{0}(v,u)$ as the initialization function are symmetric. Suppose that $\Sim_{\bjsim}^{k-1}(u,v)$ is symmetric, the symmetry of $\Sim_{\bjsim}^{k}(u,v)$ can be immediately proved as $\normopr_{\bjsim}$ is symmetric as well. As a result, P4 and thus the theorem holds.
% \end{proof} 
% \vspace{-0.5em}
\comment{
{\color{red}
\begin{remark} \label{rem:normalizing_bijective_simulation}
    We actually explore three normalizing operators that can guarantee both the convergence and the well-definiteness of $\Sim_\bjsim$. They are:
    \begin{itemize}[noitemsep]
        \item \emph{Mean}: $\normopr_{\bjsim}(S_1, S_2) = (|S_1| + |S_2|) / 2$.
        \item \emph{Max}: $\normopr_{\bjsim}(S_1, S_2) = \max(|S_1|, |S_2|)$.
        \item \emph{Root of Product (RoP)}: $\normopr_{\bjsim}(S_1, S_2) = \sqrt{|S_1|\times|S_2|}$
    \end{itemize}
With \emph{Mean} and \emph{Max}, the score decreases (nearly) proportionally with the increment of $|S_1|$ and $|S_2|$. For example, suppose $|S_1| \ll |S_2|$, the score drops nearly $10$ times when $|S_2|$ gets $10$ times larger, which is not very robust. Therefore, we adopt RoP for $\Sim_\bjsim$ in this paper. %This is actually analogous to the calculation of term frequency in the field of information retrieval, which adopts the log-frequency instead of raw frequency \cite{manning2010introduction}.
\end{remark}
}
}

% We can now apply $\Sim_{\chi}$ to quantify the degree that a node is simulated by the other node regarding variant $\chi$. The following example demonstrates the effectiveness of $\Sim_{\chi}$. 

% \vspace{-0.3em}
% \begin{example}
% In \reffig{fracsim_scores}, we present the $\Sim_{\chi}$ scores of all cases in \reffig{example_graphs} (initializing by indicator function and setting $\theta = 0$). We observe that: (1) a pair $(u,v)$ where $u$ is not but very closely simulated by $v$ have received a high $\Sim_{\chi}$ score, e.g., $\Sim_\bisim(u,v_3)$ and $\Sim_\bjsim(u,v_3)$; (2) when $u$ is $\chi$-simulated by $v$, $\Sim_{\chi}(u,v)$ reaches the maximum value 1, e.g., $\Sim_\simu(u,v_2)$ and $\Sim_\bisim(u,v_4)$, which conforms with the well-definiteness of $\Sim_{\chi}$.
% \end{example}
% \vspace{-0.6em}
% \vspace{-1.5em}

\stitle{Cost Analysis.} According to \refalg{fractional_simulation_computation}, the space complexity is $O(|H|)$, where $|H|\leq |V_1| \times |V_2|$. The time complexity of computing $\Sim_{\bisim}$ is the same as $\Sim_{\simu}$. For computing $\Sim_{\dpsim}$ and $\Sim_{\bjsim}$, the Hungarian algorithm needs to be applied to implement the mapping operators due to the presence of injection. Using a popular greedy approximate of Hungarian \cite{avis1983survey}, $\mappingopr_\dpsim(S_1, S_2)$ and $\mappingopr_\bjsim(S_1, S_2)$ can be solved in a time complexity of $O(|S_1||S_2| \log(|S_1||S_2|))$. {\color{black}As a whole, the time cost of computing $\Sim_{\dpsim}$ and $\Sim_{\bjsim}$ is $O(k{|H|}({D^+_{G_1}}{D^+_{G_2}}\cdot\log {D^+_{G_1}}{D^+_{G_2}}+ {D^-_{G_1}}{D^-_{G_2}}\cdot\log {D^-_{G_1}}{D^-_{G_2}}))$.}
\subsection{Discussions} \label{sec:discussion}
%Speaking of an iterative framework that computes a score within $[0,1]$ for all node pairs, it naturally comes to mind the widely-used node similarity measures, i.e., \kw{SimRank} and \kw{RoleSim}. We then discuss the relations of $\Sim_\chi$ to node similarity measures. {\color{black}In addition, we discuss the relations of $\Sim_{\chi}$ to $k$-bisimulation (a variant of bisimulation) and graph isomorphism, respectively.}
%we find out a connection between $\Sim_{\bjsim}$, or more specifically the bijective simulation, and the Weisfeiler-Lehman isomorphism test.
{\color{black}$\Sim_\chi$ is closely related to several well-known concepts, including node similarity measures (i.e., \kw{SimRank} and \kw{RoleSim}), $k$-bisimulation (a variant of bisimulation) and graph isomorphism. In this subsection, we discuss their relations to $\Sim_\chi$.}

\stitle{Relations to Similarity Measures.} The $\Sim_{\chi}$ framework (\refeq{fractional_simulation_computation}) can be configured to compute \kw{SimRank} \cite{DBLP:conf/kdd/JehW02} and \kw{Rolesim} \cite{DBLP:conf/kdd/JinLH11}. As both the algorithms are applied to a single unlabeled graph, we let $G_1 = G_2$, and the graph be label-free.   %Note that SimRank is a similarity measure on computing unlabelled directed graphs, and it can only compute similarities between nodes in the same graph

%\textbf{SimRank} \cite{DBLP:conf/kdd/JehW02} is a structural similarity measure on computing unlabelled directed graph, and it can only compute node pairs in the same graph. Thus, we constrain the computation of $\Sim_{\chi}$ on one graph, i.e. $G_1 = G_2$ and set $\mathcal{L}(u,v) = 0$ for $\forall (u,v) \in V_1 \times V_1$. 
To configure $\Sim_\chi$ for \kw{SimRank}, if $u = v$, we set $\Sim^0_\chi(u, v)$ to 1 in the initialization step, and 0 otherwise. In the update step, we set $w^+ = 0$, $\mathcal{M}(S_1,S_2) = S_1 \times S_2$, $\normopr(S_1,S_2) = |S_1||S_2|$ and $\mathcal{L}(u, v) = 0$ in \refeq{fractional_simulation_computation}. It is clear that with such configurations, $\Sim_{\chi}$ computes \kw{SimRank} scores for all node pairs in a manner following \cite{DBLP:conf/kdd/JehW02}. Note that the convergence of $\Sim_{\chi}$ is guaranteed, as the mapping and normalizing operators satisfy all conditions in \refthm{convergence}.

\kw{RoleSim} \cite{DBLP:conf/kdd/JinLH11} computes structural similarity with automorphic confirmation (i.e. the similarity of two isomorphic nodes is 1) on an undirected graph. Thus, we let the out-neighbors of each node maintain its undirected neighbors, and leave the in-neighbors empty. In the initialization step, we set $\Sim^0_\chi(u, v) = \frac{\min(d^+(u), d^+(v))}{\max(d^+(u), d^+(v))}$ for all node pairs following \cite{DBLP:conf/kdd/JinLH11}. In the update step, we set $w^- = 0$ and $\mathcal{L}(u, v) = 1$ for each node pair, and follow the settings of mapping and normalizing operators of bijective simulation in \refeq{fractional_simulation_computation}. With such configurations, one can verify according to \cite{DBLP:conf/kdd/JinLH11} that $\Sim_\chi$ is computing axiomatic role similarity. %As a result, we conclude that $\kw{RoleSim}$ is a special case of $\Sim_\bjsim$ via proper initialization for unlabeled graphs.  

%and set $\mathcal{L}(u,v) = 1$ for all node pairs. In order to compute RoleSim scores in \cite{DBLP:conf/kdd/JinLH11}, we further set $\mathcal{M}(N(u),N(v)) = \mathcal{M}_{\bjsim}$ (\reftab{configurations}), and $\normopr(N(u),N(v)) = max(|N(u)|,|N(v)|)$ in \refeq{fractional_simulation_computation}, and change the initialization functions to those of RoleSim. According to \refrem{normalizing_bijective_simulation}, such configurations actually lead to a well-defined $\Sim_{\bjsim}$ measure. As a result, we can safely conclude that RoleSim is a special case of $\Sim_{\bjsim}$ for unlabelled undirected graphs.

% Analogously, the convergence of $\Sim_{\chi}$ is guaranteed as well, as $\mathcal{M}_{\bjsim}$ is by default set as maximum and the mapping and normalizing operators satisfy the all the conditions in \refthm{convergence}.

{\color{black}\stitle{Relation to $k$-bisimulation.} $k$-bisimulation \cite{DBLP:books/cu/12/AcetoIS12,DBLP:conf/bncod/LuoLFBHW13,DBLP:conf/cikm/LuoFHWB13,DBLP:conf/sac/HeeswijkFP16} is a type of approximate bisimulation. Given a graph $G(V,E,\lfunc)$ and an integer $k\geq 0$, node $u$ is simulated by node $v$ via $k$-bisimulation \cite{DBLP:conf/bncod/LuoLFBHW13} (i.e., $u$ and $v$ are $k$-bisimilar) if, and only if, the following conditions hold: (1) $\lfunc(u) = \lfunc(v)$; (2) if $k > 0$, for $\forall u' \in N_G^+(u)$, there exists $v' \in N_G^+(v)$ s.t. $u'$ and $v'$ are [k-1]-bisimilar; and (3) if $k > 0$, for $\forall v' \in N_G^+(v)$, there exists $u' \in N_G^+(u)$ s.t. $v'$ and $u'$ are [k-1]-bisimilar. An iterative framework is proposed by \cite{DBLP:conf/bncod/LuoLFBHW13} to compute $k$-bisimulation, in which each node $u$ is assigned with a signature ${sig}_k(u)$ based on its node label and neighbors' signatures. Node $u$ is simulated by node $v$ via $k$-bisimulation if and only if ${sig}_k(u) = {sig}_k(v)$ \cite{DBLP:conf/bncod/LuoLFBHW13}. We show in \refthm{k-bisimulation} that our $\Sim_{\chi}$ can be configured to compute $k$-bisimulation. As $k$-bisimulation in \cite{DBLP:conf/bncod/LuoLFBHW13} uses one single graph and only considers out-neighbors, we set $G_1 =G_2$ and $w^- = 0$ for $\Sim_{\chi}$. Recall that $\Sim_{\bisim}^k(u,v)$, computed by \refeq{fractional_simulation_computation}, is the $\bisim$-simulation score of nodes $u$ and $v$ in the $k$-[th] iteration,
% \vspace{-0.4em}
\begin{theorem} \label{thm:k-bisimulation}
Given a graph $G$ and an integer $k$, node $u$ is simulated by node $v$ via $k$-bisimulation if and only if {$\Sim_{\bisim}^k(u,v) = 1$.}%, where $\Sim_{\bisim}^k(\cdot)$ denotes the $\bisim$-simulation score in the $k$-[th] iteration}.
\end{theorem}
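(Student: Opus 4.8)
The plan is to argue by induction on the iteration/depth parameter $k$, exploiting the exact correspondence between the two-directional requirement of $k$-bisimulation and the symmetric form of the mapping and normalizing operators configured for $\Sim_{\bisim}$ in \reftab{configurations}. Throughout I would keep $G_1 = G_2 = G$ and $w^- = 0$, so the in-neighbor term of \refeq{fractional_simulation_computation} vanishes and only out-neighbors matter, matching the definition of $k$-bisimulation.

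First I would dispatch the base case $k=0$. By the initialization rule in \refsec{iterative_computation}, $\Sim_{\bisim}^0(u,v) = \inifunc(u,v)$, and the constraint $\inifunc(u,v) = 1 \Leftrightarrow \lfunc(u) = \lfunc(v)$ makes $\Sim_{\bisim}^0(u,v) = 1$ hold exactly when $\lfunc(u) = \lfunc(v)$, which is precisely $0$-bisimilarity. For the inductive step I would assume the claim at level $k-1$, i.e. $\Sim_{\bisim}^{k-1}(x,y) = 1$ iff $x$ and $y$ are $(k-1)$-bisimilar. The key lemma to establish is a tight characterization of the event $\Sim_{\bisim}^{k}(u,v) = 1$. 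Substituting $\normopr_{\bisim}(S_1,S_2) = |S_1| + |S_2|$ and the maximum mapping $\mappingopr_{\bisim}$ into \refeq{fractional_simulation_computation}, the out-neighbor term is bounded above by $w^+$: it contains exactly $|N_G^+(u)| + |N_G^+(v)|$ summands, each at most $1$, over the denominator $|N_G^+(u)| + |N_G^+(v)|$; and the label term is bounded by $(1-w^+)$. Since these two bounds sum to $1$, the equality $\Sim_{\bisim}^{k}(u,v) = 1$ forces both $\inifunc(u,v) = 1$ and every summand $\Sim_{\bisim}^{k-1}(x,y)$ of the maximum mapping to equal $1$.

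Next I would translate this all-ones condition into the bisimulation requirement. Because $\mappingopr_{\bisim}$ assigns a partner in $N_G^+(v)$ to every node of $N_G^+(u)$ and, symmetrically, a partner in $N_G^+(u)$ to every node of $N_G^+(v)$, and because it is a \emph{maximum} mapping (condition C3), "every summand equals $1$" is equivalent to: for each $u' \in N_G^+(u)$ its best partner in $N_G^+(v)$ scores $1$, and for each $v' \in N_G^+(v)$ its best partner in $N_G^+(u)$ scores $1$. Invoking the induction hypothesis converts these two statements into exactly conditions (2) and (3) of $k$-bisimulation, and $\inifunc(u,v) = 1 \Leftrightarrow \lfunc(u) = \lfunc(v)$ supplies condition (1). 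The converse direction is the same chain read backwards: $k$-bisimilarity yields a label match and, via the hypothesis, a value-$1$ partner for every neighbor on each side, which makes the maximum-mapping sum attain its ceiling and drives $\Sim_{\bisim}^{k}(u,v)$ to $w^+ + (1-w^+) = 1$.

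The main obstacle I anticipate is not the algebra but pinning the correspondence between the \emph{symmetric} mapping and the \emph{two-sided} bisimulation conditions exactly, and handling the degenerate case $N_G^+(u) = N_G^+(v) = \emptyset$, where the denominator $|S_1| + |S_2|$ vanishes. There I would rely on the framework's convention that an empty neighbor term contributes its full weight $w^+$ (mirroring the vacuous satisfaction of conditions (2) and (3) of $k$-bisimulation), which keeps $\Sim_{\bisim}^{k}(u,v) = 1 \Leftrightarrow \lfunc(u)=\lfunc(v)$ consistent with $k$-bisimilarity in that degenerate case. A secondary point to state carefully is that the maximum mapping produces $|N_G^+(u)| + |N_G^+(v)|$ summands (one partner per node on each side), so that the ceiling argument bounding the out-neighbor term by $w^+$ is tight.
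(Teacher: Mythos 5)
Your proof is correct and follows essentially the same route as the paper's: induction on $k$, with the forward direction exhibiting the two-sided witness mapping for $\mathcal{M}_\bisim$ and the backward direction extracting, from $\Sim_{\bisim}^{k}(u,v)=1$, a score-one partner for every out-neighbor on each side and applying the induction hypothesis. The only differences are presentational — the paper phrases $k$-bisimilarity via the signature equality ${sig}_k(u)={sig}_k(v)$ while you use the recursive definition directly, and you spell out the ceiling/tightness argument and the empty-neighborhood case that the paper leaves implicit.
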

% \vspace{-1em}
\begin{proof}
The case when $k=0$ is easy to verify. Assume the theorem is true at $k-1$, we show that the theorem also holds at $k$. On the one hand, if $u$ is simulated by $v$ via $k$-bisimulation, i.e., ${sig}_{k}(u) = {sig}_{k}(v)$, one can verify that $\!\mathcal{M} = \{(u',v')|{sig}_{k-1}(u') = {sig}_{k-1}(v') \wedge v' \in N_G^+(v), \forall u' \in N_G^+(u)\} \bigcup \{(v'',u'')|{sig}_{k-1}(v'') = {sig}_{k-1}(u'') \wedge u'' \in N_G^+(u), \forall v'' \in N_G^+(v)\}\!$ is a matching of $\Sim_{\bisim}$. Based on the assumption, we have $\Sim_{\bisim}^k(u,v) = 1$. On the other hand, if $\Sim_{\bisim}^k(u,v) = 1$, for $\forall u' \in N_G^+(u)$, there exists $v' \in N_G^+(v)$ such that $\Sim_{\bisim}^{k-1}(u',v') = 1$, which means ${sig}_{k-1}(u') = {sig}_{k-1}(v')$. Similarly, $\forall v'' \in N_G^+(v)$, there exists $u'' \in N_G^+(u)$ with ${sig}_{k-1}(u'') = {sig}_{k-1}(v'')$. Thus, the set of signature values in $u$'s neighborhood is the same as that in $v$'s neighborhood. Then, we have ${sig}_{k}(u) = {sig}_{k}(v)$. %As a result, the theorem holds.
% one can verify that $\!\mathcal{M} = \{(u',v')|{sig}_{k-1}(u') = {sig}_{k-1}(v') \wedge v' \in N_G^+(v), \forall u' \in N_G^+(u)\} \bigcup \{(v'',u'')|{sig}_{k-1}(v'') = {sig}_{k-1}(u'') \wedge u'' \in N_G^+(u), \forall v'' \in N_G^+(v)\}\!$ is a matching of $\Sim_{\bisim}$. And also, if $\Sim_{\bisim}^k(u,v) = 1$, then $\forall u' \in N_G^+(u), \exists v' \in N_G^+(v)$ such that $\Sim_{\bisim}^{k-1}(u',v') = 1$ and ${sig}_{k-1}(u') = {sig}_{k-1}(v')$
% As a result, we have $\!\Sim_{\bisim}^k(u,v) = 1\!$ and the theorem holds.
\end{proof}
% \vspace{-1em}

\stitle{Relation to isomorphism.} The graph isomorphism test asks for whether two graphs are topologically identical, and node $u$ of $G_1$ is \emph{isomorphic} to node $v$ of $G_2$ if there exists an isomorphism between $G_1$ and $G_2$ mapping $u$ to $v$. {\color{black} Graph isomorphism is a challenging problem, and there is no polynomial-time solution yet \cite{DBLP:conf/stoc/Babai16}. The Weisfeiler-Lehman isomorphism test (the WL test) \cite{DBLP:journals/jmlr/ShervashidzeSLMB11} is a widely used solution to test whether two graphs are isomorphic. The WL test can be solved in polynomial time, but it is necessary but not sufficient for isomorphism, that is two graphs that are isomorphic must pass the WL test but not vice versa. 
%Apart from some corner cases, the Weisfeiler-Lehman isomorphism test (short as WL test) \cite{DBLP:journals/jmlr/ShervashidzeSLMB11} can distinguish a broad class of graphs \cite{DBLP:conf/focs/BabaiK79} and is widely used as it is solvable in polynomial time. 
%Weisfeiler-Lehman isomorphism test (short as WL test) \cite{DBLP:journals/jmlr/ShervashidzeSLMB11} is a polynomial-time solution to test whether two graphs are isomorphic without false negative (false positives are rare but possible).
%Apart from some corner cases\footnote{There exist some non-isomorphic graphs that can pass the WL test.}, the Weisfeiler-Lehman isomorphism test (or WL test) \cite{DBLP:journals/jmlr/ShervashidzeSLMB11} is an effective solution that can distinguish a broad class of graphs \cite{DBLP:conf/focs/BabaiK79} and is widely used as it is solvable in polynomial time.} 
%Analogous to WL test, the bijective simulation is also a necessary but insufficient condition for the isomorphism, that is if node $u$ is isomorphic to node $v$, $u$ is $\bjsim$-simulated by $v$ but not vice versa. 
Like the WL test, bijective simulation is also necessary but not sufficient for isomorphism. We next show that it is as powerful as the WL test in theory.
}

% Analogous to WL test, the bijective simulation is a necessary but insufficient condition for the isomorphism, that is if node $u$ is isomorphic to node $v$, $u$ is $\bjsim$-simulated by $v$ but not vice versa. 

% Given two undirected (the graph model is accordingly adapted as \kw{RoleSim}) labeled graphs $G_1(V_1, E_1, \lfunc_1)$ and $G_2(V_2, E_2, \lfunc_2)$, the WL test iteratively labels each node $u \in V_1$ (resp. $v \in V_2$) as $s(u)$ (resp. $s(v)$) by aggregating its node label and labels of its neighbors. The algorithm decides node $u$ is isomorphic to node $v$ with high probability if $s(u) = s(v)$ when the algorithm converged \footnote{Note that the algorithm is not guaranteed to converge.}  
% and if it converged, w

The WL test \cite{DBLP:journals/jmlr/ShervashidzeSLMB11} is applied to undirected labeled graphs, and the graph model is accordingly adapted as \kw{RoleSim}. We assume both graphs are connected, as otherwise each pair of connected components can be independently tested. Given graphs $G_1$ and $G_2$, the WL test iteratively labels each node $u \in V_1$ (resp. $v \in V_2$) as $s(u)$ (resp. $s(v)$). The algorithm decides that node $u$ is isomorphic to node $v$ if $s(u) = s(v)$ when the algorithm converges\footnote{The algorithm is not guaranteed to converge.} The following theorem reveals the connection between WL test and bijective simulation.} 

% The WL test decides two graphs are non-isomorphic if at some iteration the labels of nodes in $G_1$ are different from those of $G_2$, and it deems 

% and if it converged\footnote{Note that the algorithm is not guaranteed to converge.}, we have $s(u) = s(v)$, $\forall (u, v) \in V_1 \times V_2$, if there exists an isomorphism $f: V_1 \to V_2$ such that $f(u) = v$. The following theorem reveals the connection between WL test and bijective simulation. 

% The WL test \cite{DBLP:journals/jmlr/ShervashidzeSLMB11} is applied to undirected labeled graphs and thus the graph model is accordingly adapted as \kw{RoleSim}. We assume both graphs are connected, as otherwise each pair of connected components can be independently tested. The algorithm iteratively labels a node $u$ as $s(u)$, and if it converged\footnote{Note that the algorithm is not guaranteed to converge.}, we have $s(u) = s(v)$, $\forall (u, v) \in V_1 \times V_2$, if there exists an isomorphism $f: V_1 \to V_2$ such that $f(u) = v$. The following theorem reveals the connection between WL test and bijective simulation. 
% \vspace{-0.6em}
\begin{theorem}\label{thm:bijection_and_wl_test}
Given graphs $G_1$ and $G_2$, and a node pair $(u, v) \in V_1 \times V_2$, and assume the WL test converges, we have $s(u) = s(v)$ \textbf{if and only if} $\Sim_\bjsim(u, v) = 1$, namely $u {\sim}^\bjsim v$. 
\end{theorem}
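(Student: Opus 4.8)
The plan is to reduce the statement to a purely combinatorial equivalence and then prove the two directions separately. By \refthm{bijectivesimulation} (well-definiteness, specifically property P2 of \refdef{fractional_simulation_properties}), we already know that $\Sim_\bjsim(u,v) = 1$ holds if and only if $u \sim^{\bjsim} v$. Hence it suffices to establish that, once the WL test has converged to its stable colouring $s$, we have $s(u) = s(v)$ if and only if $u \sim^{\bjsim} v$. Throughout I work in the $\RoleSim$-style adaptation, where each node's out-neighbours store its (undirected) neighbours, $w^- = 0$, and the in-neighbour slot is empty, so only the out-neighbour condition of \refdef{bijectivesimulation} is active. I would first record the only two facts about $\bjsim$-simulation I need: (i) $(u,v)$ lying in a bijective simulation $R$ forces $\lfunc_1(u) = \lfunc_2(v)$; and (ii) it supplies a bijection $\lambda\colon N^+(u) \to N^+(v)$ with $(x, \lambda(x)) \in R$ for every out-neighbour $x$.

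For the forward direction ($u \sim^{\bjsim} v \Rightarrow s(u)=s(v)$), I would fix a bijective simulation $R$ containing $(u,v)$ and prove by induction on the WL round $k$ that $s^k(x) = s^k(y)$ for every pair $(x,y) \in R$. The base case $k=0$ is fact (i), since the initial WL colour is the node label. For the inductive step, take $(x,y)\in R$ and the bijection $\lambda$ of fact (ii); by the induction hypothesis $s^{k-1}(x') = s^{k-1}(\lambda(x'))$ for each neighbour $x'$, so $\lambda$ witnesses that the multisets of previous-round neighbour colours of $x$ and $y$ coincide (in particular $d^+(x)=d^+(y)$); combined with $s^{k-1}(x) = s^{k-1}(y)$, the WL update rule yields $s^k(x) = s^k(y)$. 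Passing to the converged colouring gives $s(u) = s(v)$.

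For the backward direction I would take the converged colouring and define $R = \{(x,y) : s(x) = s(y)\}$, then show $R$ is a bijective simulation, so that $(u,v) \in R$ yields $u \sim^{\bjsim} v$. The crucial observation is that, at the fixed point, the colouring is equitable: since applying one more WL round cannot refine the partition, $s(x) = s(y)$ forces the multisets of neighbour colours of $x$ and $y$ to be identical, for otherwise the update rule would separate $x$ and $y$. Equality of these finite multisets is precisely the existence of a colour-preserving bijection $\lambda\colon N^+(x)\to N^+(y)$, and every matched pair lies in $R$ by construction; together with $\lfunc_1(x)=\lfunc_2(y)$ (the stable colouring refines the label partition), this is exactly the bijective-simulation condition at $(x,y)$. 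I expect the main obstacle to be this backward direction: one must argue at the level of the converged colouring rather than round-by-round, because $\Sim_\bjsim^k$ carries the original label $\lfunc$ in its label term while the WL update carries the refined colour $s^{k-1}$, so the two iterations do not agree step-by-step and only their fixed points can be matched. Care is also needed to invoke the fixed-point/equitable-partition property correctly and to translate multiset equality into the required bijection via a simple counting argument.
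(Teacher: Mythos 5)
Your proposal is correct, but it takes a genuinely different route from the paper. The paper never passes through the exact relation $u \mysim^{\bjsim} v$ or the converged colouring as objects in their own right; instead it runs a direct round-by-round double induction matching the $k$-th WL colouring with the $k$-th iterate of the score, proving $s^k(u) = s^k(v) \Rightarrow \Sim_{\bjsim}^{k}(u,v) = 1$ in one direction and $\Sim_{\bjsim}^{k}(u,v) = 1 \Rightarrow s^k(u) = s^k(v)$ in the other, with the bijection $\lambda$ extracted at each level from equality of the concatenated neighbour labels (resp.\ from the mapping operator achieving the value $1$). You instead (i) reduce to the purely combinatorial statement $s(u)=s(v) \Leftrightarrow u \mysim^{\bjsim} v$ via property P2 of \refthm{bijectivesimulation}, (ii) prove the forward direction by inducting over WL rounds along a fixed bijective-simulation relation $R$, and (iii) prove the backward direction by exhibiting $R = \{(x,y) : s(x)=s(y)\}$ as a bijective simulation using the equitable-partition property of the stable colouring. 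Your route buys a cleaner separation of concerns --- the fractional machinery is quarantined inside P2, and the combinatorial core is an explicit relation one can point at --- at the cost of invoking the fixed-point/equitability argument and the (true but unstated) fact that the stable colouring refines the label partition. The paper's route avoids any appeal to equitability but implicitly uses that WL only refines, so that $s(u)=s(v)$ at convergence gives $s^k(u)=s^k(v)$ for all $k$. One remark: the obstacle you anticipate for the round-by-round approach (that $\Sim_{\bjsim}^{k}$ carries the original label $\lfunc$ while WL carries the refined colour $s^{k-1}$) is not actually fatal --- the two iterations differ as maps, but their value-$1$ level sets coincide at every round precisely because $\Sim_{\bjsim}^{k}(u,v)=1$ forces both $\inifunc(u,v)=1$ and perfect scores on a bijectively matched neighbourhood, which is exactly the WL recursion; this is what the paper's induction exploits.
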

% \vspace{-1em}
\begin{proof}
%To better understand this proof, one can refer to algorithm 1 in \cite{DBLP:journals/jmlr/ShervashidzeSLMB11}. 
%We omit the process of label compression and relabeling in \cite{DBLP:journals/jmlr/ShervashidzeSLMB11}, and use the string directly as the label for clarity. 
Let $s^k(u)$ and $s^k(v)$ be the label of $u$ and $v$ at the $k$-[th] iteration during WL test. We first prove that for any $k$, if $s^k(u) = s^k(v)$, $\Sim_{\bjsim}^k(u,v) = 1$. The case of $k=0$ is easy to verify. Suppose the theorem is true at $k-1$. At the $k$-[th] iteration, we have $s^k(u)$ = $s^{k-1}(u) \sqcup_{u'\in N(u)} {s^{k-1}(u')}$ and $s^k(v)$ = $s^{k-1}(v) \sqcup_{v'\in N(v)} {s^{k-1}(v')}$, where $\sqcup$ denotes label concatenation. If $s^k(u) = s^k(v)$, there exists a bijective function $\!\lambda_1: N_{G_1}(u) \to N_{G_2}(v)\!$ s.t. $s^{k-1}(u') = s^{k-1}(\lambda_1(u'))$. Based on the assumption, we have $\Sim_{\bjsim}^{k}(u,v) = 1$. %Based on \refthm{bijectivesimulation}, we have if $s(u) = s(v)$, $u \overset{\bjsim}{\sim} v$ holds.

Next, we prove if $u {\sim}^{\bjsim} v$, $s(u) = s(v)$. It is easy to verify the case of $k=0$. Assume that if $\Sim_{\bjsim}^{k-1}(u,v) = 1$, $\!s^{k-1}(u) = s^{k-1}(v)\!$ holds. At the $k$-[th] iteration, if $\Sim_{\bjsim}^{k}(u,v) = 1$, there exits a bijective function $\!\lambda_2: N_{G_1}(u) \to N_{G_2}(v)\!$ s.t. for $\forall u' \in N_{G_1}(u)$, $\Sim_{\bjsim}^{k-1}(u',\lambda_2(u')) = 1$. Thus, we can derive $s^{k-1}(u') = s^{k-1}(\lambda_2(u'))$ and $s^k(u) = s^k(v)$. 
\end{proof}
% \vspace{-1em}

\begin{remark}
\color{black}
Note that there is no clear relation between bijective simulation and graph homomorphism. To be specific, bijective simulation cannot derive homomorphism, and homomorphism cannot derive bijective simulation either.
\end{remark}
% \vspace{-1em}

% Next, we prove that if $u \overset{\bjsim}{\sim} v$, $s(u) = s(v)$ is true. Analogously, we prove that if $\Sim_{\bjsim}^k(u,v) = 1$, $s^k(u) = s^k(v)$ holds at any iteration $k \geq 0$ by using mathematical induction.

% \textit{Induction Basis.} At the 0-[th] iteration, if $\Sim_{\bjsim}^0(u,v) = 1$, $\lfunc(u)=\lfunc(v)$. Accordingly, we have $s^0(u) = s^0(v)$. 

% \textit{Induction Step.} Suppose that if $\Sim_{\bjsim}^{k-1}(u,v) = 1$, $s^{k-1}(u) = s^{k-1}(v)$ holds. At the $k$-[th] iteration, if $\Sim_{\bjsim}^{k}(u,v) = 1$, there exits a bijective function $\lambda_2: N(u) \to N(v)$ s.t. $\Sim_{\bjsim}^{k-1}(u',\lambda_2(u')) = 1$, for $\forall u' \in N(u)$. Denote $v' = \lambda_2(u')$, we have $s^{k-1}(u') = s^{k-1}(v')$ based on the assumption. Hence, we can derive $s^k(u) = s^k(v)$. 
% Based on the above analysis, the theorem holds.

% \input{sec5.3_bijective_simulation.tex}
% \input{sec5.4_selection_of_simulation_variants.tex}
% \input{sec7.1_effectiveness.tex}
\section{Experimental Evaluation} \label{sec:experiment}
%We exhibit the experimental results in this section. 
% Our experiments in this section focus on the following perspectives: (1) $\Sim_\chi$ is sufficiently robust against parameter tuning and data errors (\refsec{sensitivity_analysis}); (2) $\Sim_\chi$ is efficient to compute on real-world graphs (\refsec{efficiency}); (3) $\Sim_\chi$ is effective on real-world applications (\refsec{case_studies}). 

\subsection{Setup} \label{sec:setup}
% \vspace{-0.5em}

\stitle{Datasets.} We used eight publicly available real-world datasets. \reftab{graph_statistics} provides their descriptive statistics, including the number of nodes $|V|$, the number of edges $|E|$, the number of labels $|\Sigma|$, the average degree $d_G$, the maximum out-degree $D^+_G$ and the maximum in-degree $D^-_G$. 

\begin{table}[h]
    \small
    \centering
    \topcaption{Dataset Statistics and Sources} \label{tab:graph_statistics}
    % \scalebox{0.86}{
    % % \color{black}
    % \setlength{\tabcolsep}{1.1mm}{
    \begin{tabular}{|c||c|c|c|c|c|c|c|} \hline
      \textbf{Datasets} & $|E|$ & $|V|$ & $|\Sigma|$ & $d_G$ & $D^+_G$ & $D^-_G$ & Source\\ \hline \hline
      Yeast& 7,182 & 2,361 & 13 & 3 & 60 & 47 & \cite{kunegis2013konect} \\ \hline
      Cora& 91,500 & 23,166 & 70 & 4 & 104 & 376 & \cite{kunegis2013konect}\\ \hline
      Wiki& 119,882 & 4,592 & 120 & 26 & 294 & 1,551 & \cite{kunegis2013konect}\\ \hline
      JDK& 150,985 & 6,434 & 41 & 23 & 375 & 32,507 & \cite{kunegis2013konect}\\ \hline
      NELL& 154,213 & 75,492 & 269 & 2 & 1,011 & 1,909 & \cite{xiong2017deeppath} \\ \hline
      GP & 298,564 & 144,879 & 8 & 2 & 191 & 18,553 & \cite{harding2017iuphar} \\ \hline
      Amazon& 1,788,725 & 554,790 & 82 & 3 & 5 & 549 & \cite{snapnets} \\ \hline
      ACMCit & 9,671,895 & 1,462,947 & 72K & 7 & 809 & 938,039 & \cite{DBLP:conf/kdd/TangZYLZS08} \\ \hline
    \end{tabular}%}}
    % \vspace{-1.6em}
\end{table}

\stitle{Experimental Settings.} Without loss of generality, we assume that in-neighbors and out-neighbors contribute equally to the $\Sim_{\chi}$ computation. Thus, $w^+ = w^-$ in all experiments. Algorithms were terminated when the values changed by less than 0.01 of their previous values. \textit{Note that when we applied $\Sim_{\chi}$ to one single graph, we actually computed the $\Sim_{\chi}$ scores from the graph to itself.} $\Sim_{\chi} \{\theta = a\}$ and $\Sim_{\chi} \{\ub\}$ denote the computation of $\Sim_{\chi}$ uses the optimizations of label-constrained mapping (setting $\theta = a$) and upper-bound updating (\refsec{computing_algorithm}), respectively. The two optimizations can be meanwhile used as $\Sim_{\chi} \{\ub, \theta=a\}$. We use $\theta=0$ by default, which will be omitted for simplicity thereafter. %and we 
%accordingly use $\Sim_{\chi} \{\theta = x\}$ and $\Sim_{\chi} \{\ub\}$ to denote $\Sim_{\chi}$ with the respective optimizations, in which $x$ is used to specify the $\theta$ value in the label-constrained mapping. Specifically, we use $\Sim_{\chi}$ for short if $\theta$ is set to 0. 

We implemented $\Sim_{\chi}$ in C++. All experiments were conducted on a platform comprising two Intel(R) Xeon(R) CPU E5-2698 v4 @ 2.20GHz (each with 20 cores) and 512GB memory.

% \begin{table} 
%     \small
%     \centering
%     \topcaption{Graph Statistics} \label{tab:graph_statistics}
%     \scalebox{0.86}{
%     % \color{black}
%     \begin{tabular}{|c||c|c|c|c|c|c|} \hline
%       \textbf{Datasets} & $|E|$ & $|V|$ & $|\Sigma|$ & $d_G$ & $D^+_G$ & $D^-_G$\\ \hline \hline
%       Yeast& 7,182 & 2,361 & 13 & 3 & 60 & 47\\ \hline
%       Cora& 91,500 & 23,166 & 70 & 4 & 104 & 376\\ \hline
%       Wiki& 119,882 & 4,592 & 120 & 26 & 294 & 1,551\\ \hline
%       JDK& 150,985 & 6,434 & 41 & 23 & 375 & 32,507\\ \hline
%       NELL& 154,213 & 75,492 & 269 & 2 & 1,011 & 1,909\\ \hline
%       GP & 298,564 & 144,879 & 8 & 2 & 191 & 18,553\\ \hline
%       Amazon& 1,788,725 & 554,790 & 82 & 3 & 5 & 549\\ \hline
%       ACMCit & 9,671,895 & 1,462,947 & 72K & 7 & 809 & 938,039\\ \hline
%     \end{tabular}}
%     \vspace{-1.6em}
% \end{table}

\subsection{Sensitivity Analysis} \label{sec:sensitivity_analysis}
Our first test was a sensitivity analysis to examine $\Sim_\chi$'s robustness to parameter tuning and data errors. Following \cite{DBLP:conf/kdd/JinLH11}, we calculated Pearson's correlation coefficients. The larger the coefficient, the more correlated the evaluated subjects. Note that the patterns were similar across datasets. Hence, only the results for NELL are reported.
% Our first test was a sensitivity analysis to examine $\Sim_\chi$'s robustness to parameter tuning and data errors. Following \cite{DBLP:conf/kdd/JinLH11}, we computed Pearson's correlation coefficients, which gives a score within $[-1.0,1.0]$ to evaluate the degree of correlations of the scores computed with different settings. The larger the correlation value, the more correlated the evaluated subjects. Note that we only present the results on the NELL graph, while the results on the other graphs are similar.
% We first perform the sensitivity analysis to study whether $\!\Sim_\chi\!$ is robust to parameter tuning and data errors. We follow \cite{DBLP:conf/kdd/JinLH11} to adopt the Pearson correlation measurement, which gives a score within $[-1.0,1.0]$ to evaluate the degree of correlations of the scores computed with different settings (the larger, the better). Note that we only present the results on the NELL graph, while the results on the other graphs are similar.

\stitle{Sensitivity of Framework Parameters.} We performed the sensitivity analysis against three parameter settings: (1) the initialization function $\inifunc(\cdot)$ presented in \refsec{iterative_computation}; (2) the threshold $\theta$ for the label-constrained mapping outlined in \refrem{label_constraint_mapping}; and (3) the weighting factors outlined in \refsec{iterative_computation}. 

\sstitle{\color{black}Varying $\inifunc(\cdot)$.} In this analysis, we computed and cross-compared the $\Sim_{\chi}$ scores using the three different initialization functions: indicator function $\inifunc_I(\cdot)$, normalized edit distance $\inifunc_E(\cdot)$, and Jaro-Winkler similarity $\inifunc_{J}(\cdot)$. The results are shown in \reftab{compare_initialization}. The Pearson's coefficients for all pairs of initialization functions are very high ($>0.92$), which indicates that $\Sim_{\chi}$ is not sensitive to initialization functions. Hence, going forward, we used $\inifunc_{J}(\cdot)$ as the initialization function unless specified otherwise.
%In addition, the computati-on of $\Sim_\simu$ is the most robust while varyinging $\inifunc(\cdot)$, followed by $\Sim_\dpsim$, $\Sim_\bisim$ and $\Sim_\bjsim$.
\begin{table}[h]
    \centering
    \small
    \topcaption{Pearson's correlation coefficients when comparing initialization functions.} \label{tab:compare_initialization}
    % \scalebox{0.9}{
        \begin{tabular}{|c|c|c|c|c|}  \hline
            $\Sim_{\chi}$ & $\Sim_{\simu}$ & $\Sim_{\dpsim}$ & $\Sim_{\bisim}$ & $\Sim_{\bjsim}$ \\ \hline
            $\inifunc_{I}$- $\inifunc_{E}$ & 0.990 & 0.982 & 0.979 & 0.969 \\ \hline
            $\inifunc_{I}$-$ \inifunc_{J}$ & 0.967 & 0.950 & 0.937 & 0.922 \\ \hline
            $\inifunc_{J}$- $\inifunc_{E}$ & 0.985 & 0.977 & 0.975 & 0.962 \\ \hline
    \end{tabular}%}
    % \vspace{-1.8em}
\end{table}

\sstitle{\color{black}Varying $\theta$.} For this analysis, we varied $\theta$ from 0 to 1 in steps of 0.2, and calculated the Pearson's coefficient against the baseline case of $\theta = 0$ (with $w^+$ and $w^-$ set to 0.4). The results in \reffig{correlation_vary_theta} clearly show that the coefficients decrease as $\theta$ increases. %{\color{black}The reason is that when $\theta$ gets larger, there are more node pairs that are meant to be mapped when $\theta = 0$, cannot be mapped anymore.} 
{\color{black}This is reasonable as node pairs with $\mathcal{L}(\cdot) < \theta$ will not be considered by the mapping operator. Also, more node pairs are pruned as $\theta$ grows.} However, the coefficients are still very high ($>0.8$) for all variants, even when $\theta = 1$, which indicates that $\Sim_{\chi}$ is not sensitive to $\theta$. 

\sstitle{\color{black}Varying $w^*$.} To examine the influence of the weighting parameters, we varied $w^*$ from 0.1 to 1, where $w^* = 1-w^+-w^-$.
Recall that $\theta = 1$ constrains mapping only the same-label nodes (\refrem{label_constraint_mapping}). As $w^*$ is label-relevant, we computed the coefficients of $\Sim_\chi$ (vs. $\Sim_\chi\{\theta = 1\}$) by varying $w^*$. The results, reported in \reffig{vary_w}, show that the coefficients increase as $w^*$ increases and at $w^* > 0.6$, the coefficient is already almost 1. This is expected because a larger $w^*$ mitigates the impact of the label-constrained mapping. At a more reasonable setting of $w^* = 0.2$, the coefficients sit at around 0.85, which indicates that $\Sim_\chi\{\theta = 1\}$ aligns with $\Sim_\chi$ well. Hence, we set $w^* = 0.2$ by default in subsequent tests.

\begin{figure}[h]
    \centering
    % \captionsetup[subfigure]{aboveskip=0.01cm,belowskip = 0.00cm}
    \subfigure[varying $\theta$]{
      	\centering
		\includegraphics[width=0.38\linewidth]{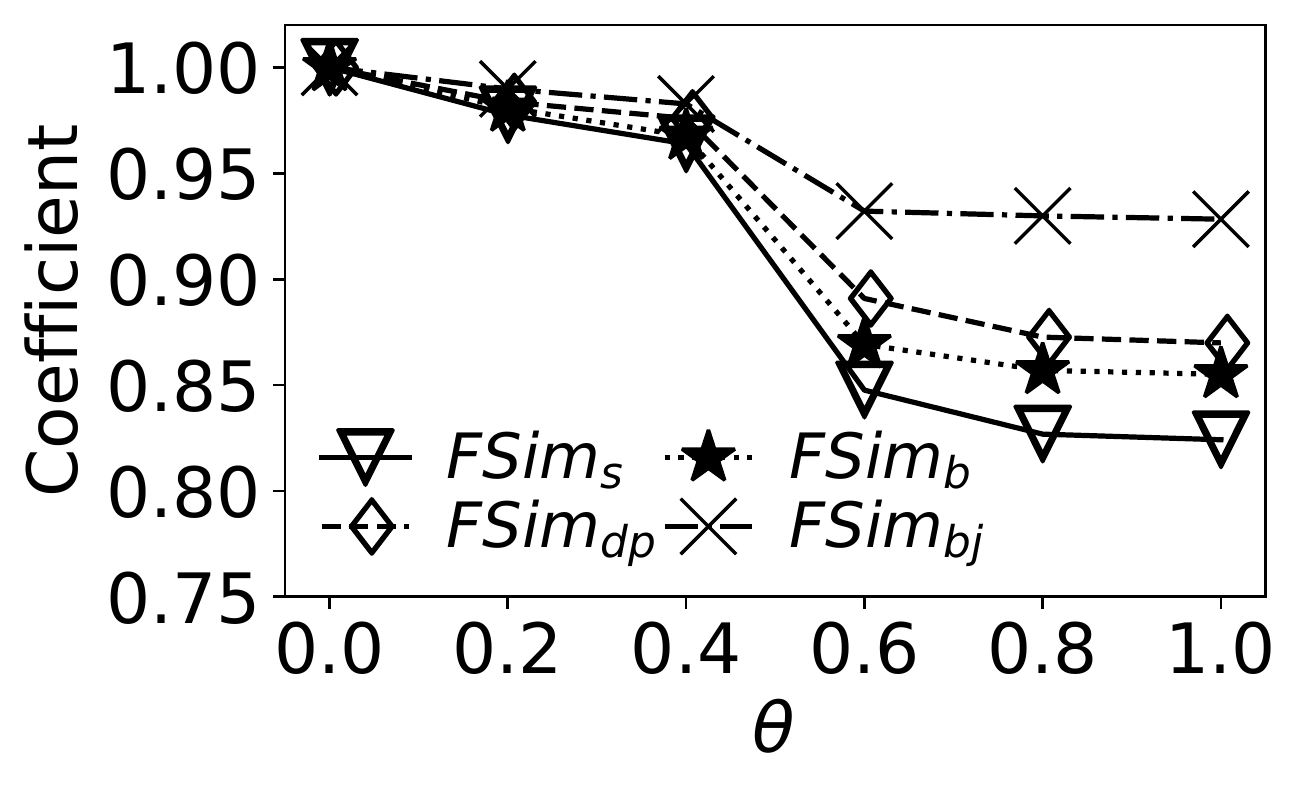}
        \label{fig:correlation_vary_theta}
      }
    %  \hspace{-0.5em}
    \subfigure[varying $w^*$]{
      \includegraphics[width=0.38\linewidth]{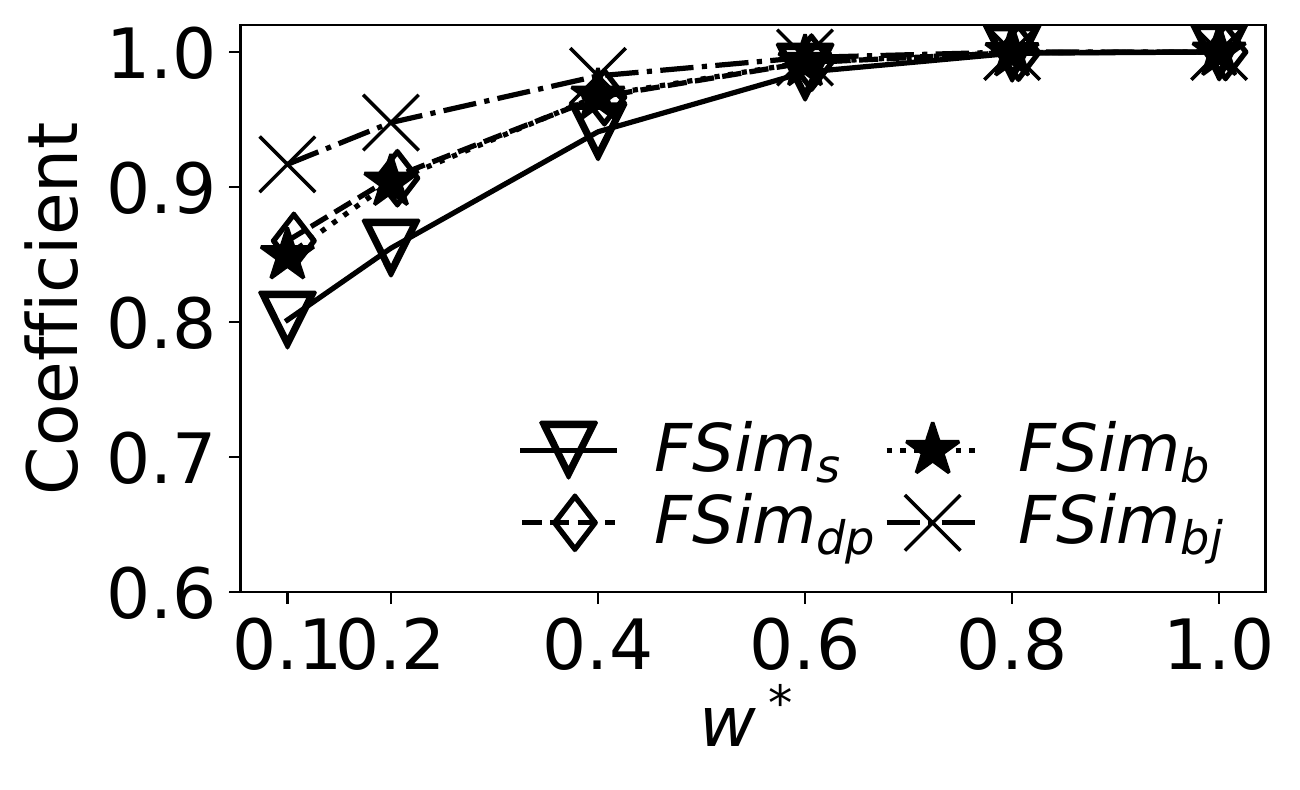} \label{fig:correlation_vary_w}
      }
      \topcaption{Pearson's correlation coefficients when varying $\theta$ and $w^*$}
      \label{fig:vary_w}
    %   \vspace{-1.3em} 
\end{figure}

\stitle{Robustness against Data Errors.} \reffig{correlation_vary_data_errors} plots the robustness of $\Sim_{\bjsim}$ against data errors, i.e., structural errors (with edges added/removed) and label errors (with certain labels missing), from one extreme ($\theta = 0$) to the other ($\theta = 1$) as an example of how all simulation variants performed. It is expected that the coefficients decrease as the error level increases. Yet, the coefficients remained high even at the 20\% error level ($>0.7$ for both cases). This shows that $\Sim_{\chi}$ is robust to data errors, which conforms with one of the reasons why we initially thought to propose fractional simulation. 

\begin{figure}[h]
    \centering
    \subfigure[varying structural errors]{
      \includegraphics[width=0.38\linewidth]{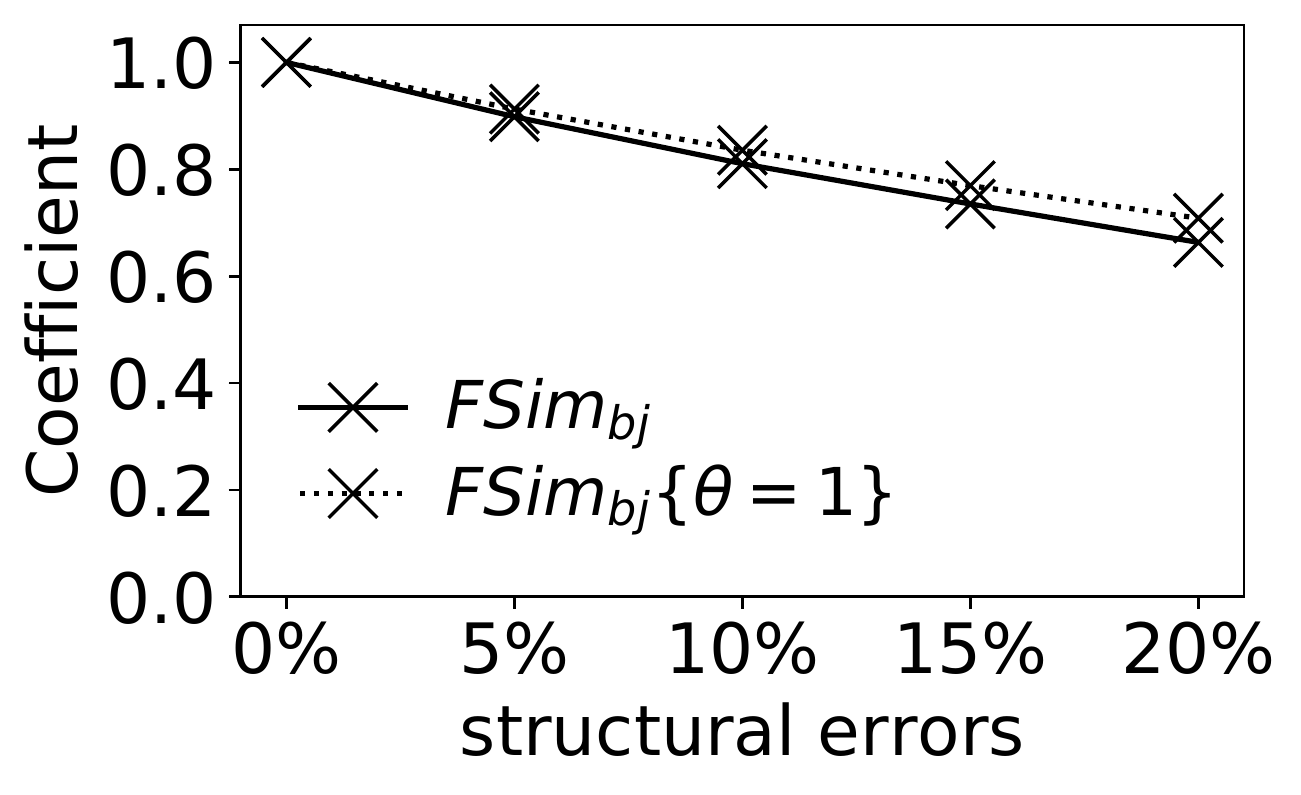} \label{fig:vary_structural_noise}
      }
    %   \hspace{-0.5em}
      \subfigure[varying label errors]{
      \includegraphics[width=0.38\linewidth]{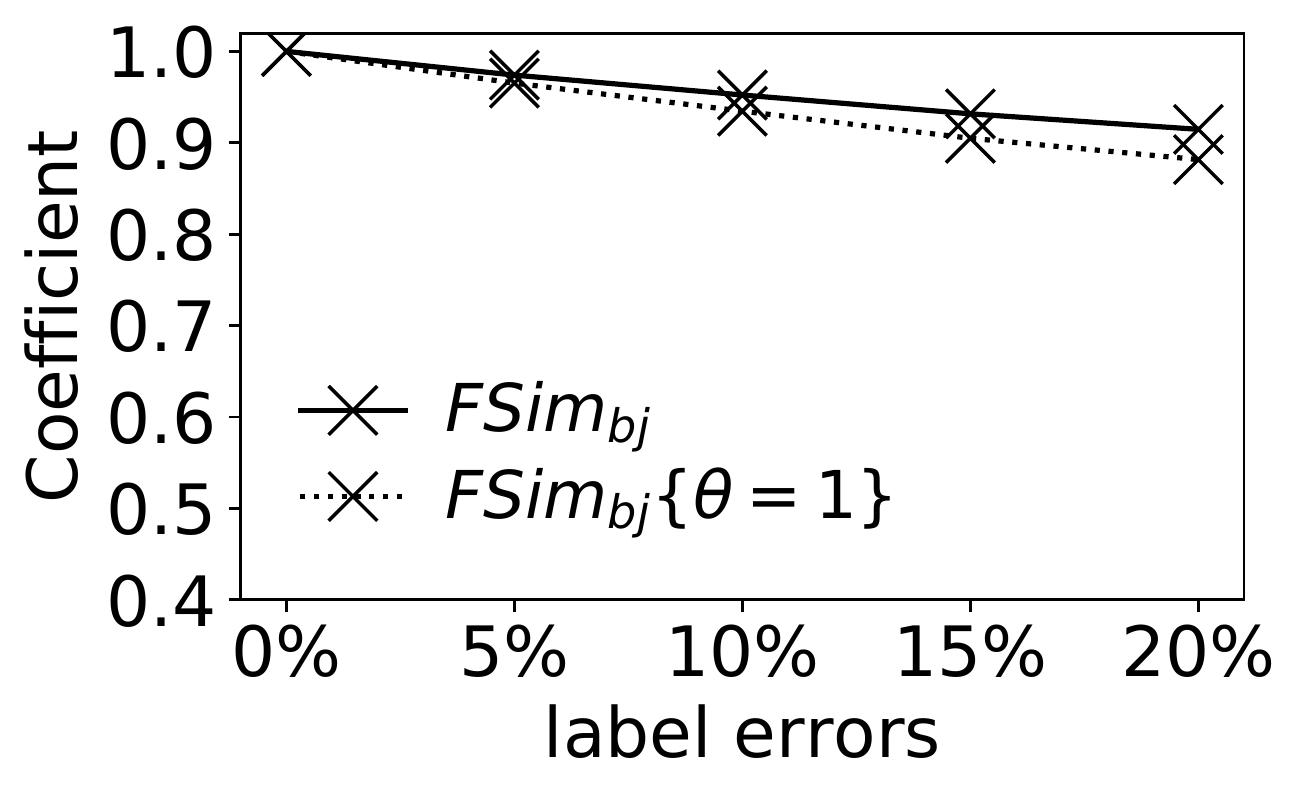} \label{fig:vary_label_noise}
      }
    %   \vspace{-0.2em}
      \topcaption{Pearson's correlation coefficients when varying the ratio of data errors}
      \label{fig:correlation_vary_data_errors}
    % \vspace{-1.8em}
\end{figure}

\stitle{Sensitivity of Upper-bound Updating.} To assess the influence of upper-bound updating (\refsec{computing_algorithm}), we varied $\alpha$ (the approximate ratio) from 0 to 0.5 and $\beta$ (the threshold) from 0 to 1 in steps of $0.1$. Again, the results for all simulation variants were similar, so only the results for $\Sim_{\bjsim}\{\ub\}$ (vs. $\Sim_{\bjsim}$) and $\!\Sim_\bjsim\{\ub, \theta = 1\!\}$ (vs. $\Sim_\bjsim\{\theta = 1\}$) are shown.

\sstitle{\color{black}Varying $\beta$.} \reffig{vary_beta} shows the coefficients while varying $\beta$ from 0 to 0.5 with $\alpha$ fixed to 0.2. It is clear that the coefficients decrease as $\beta$ increases. This is reasonable as more node pairs are pruned, and the scores become less precise as $\beta$ gets larger. Note that when $\beta \geq 0.3$, the decreasing trend becomes smoother for $\Sim_{\bjsim}\{\ub, \theta = 1\}$. Observe that even at $\beta = 0.5$, the coefficients are still very high ($>0.9$), which indicates that the validity of upper-bound updating is not sensitive to $\beta$. We thus set $\beta = 0.5$ going forward to utilize as much pruning power as possible.

\begin{figure}[h]
	\centering
	\subfigure[varying $\beta$]{
		\centering
		\includegraphics[width=0.38\linewidth]{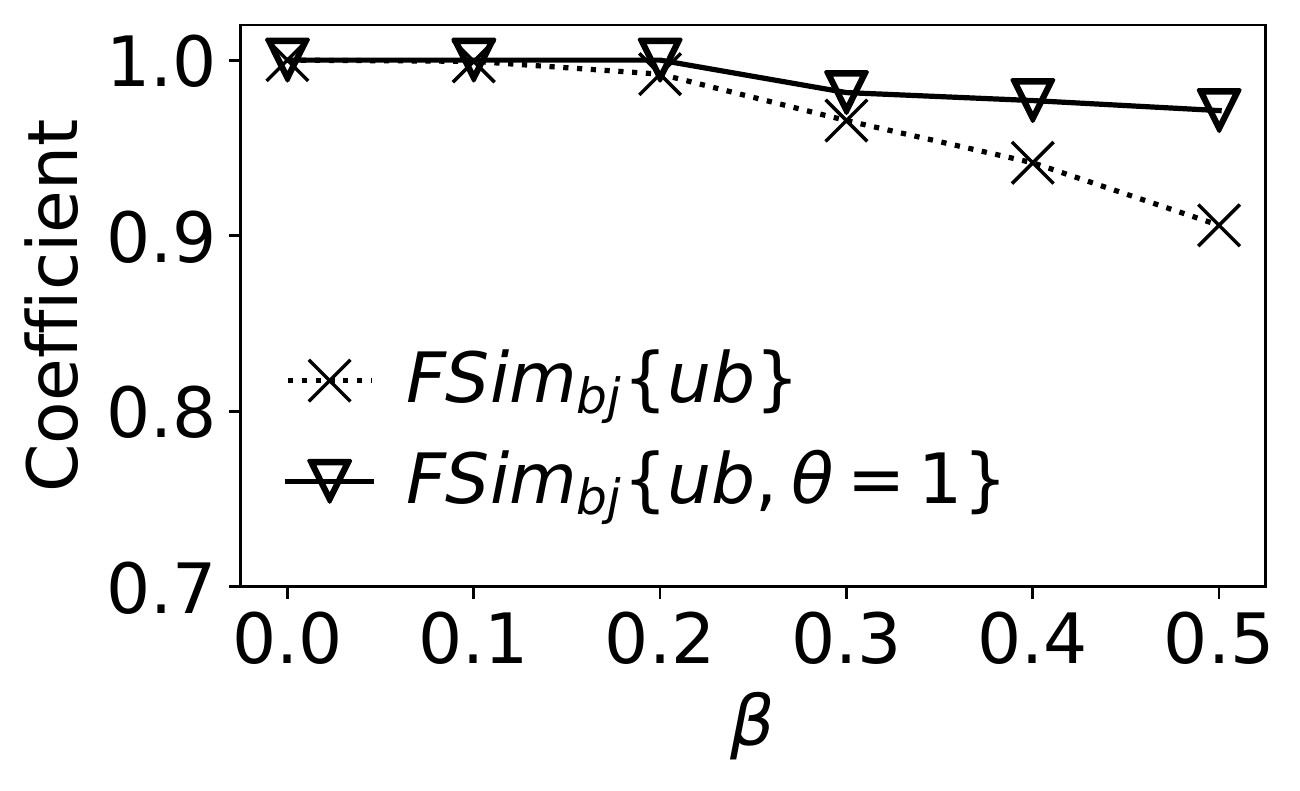}
		\label{fig:vary_beta}
	}
% 	\hspace{-0.5em}
	\subfigure[varying $\alpha$]{
		\centering
		\includegraphics[width=0.38\linewidth]{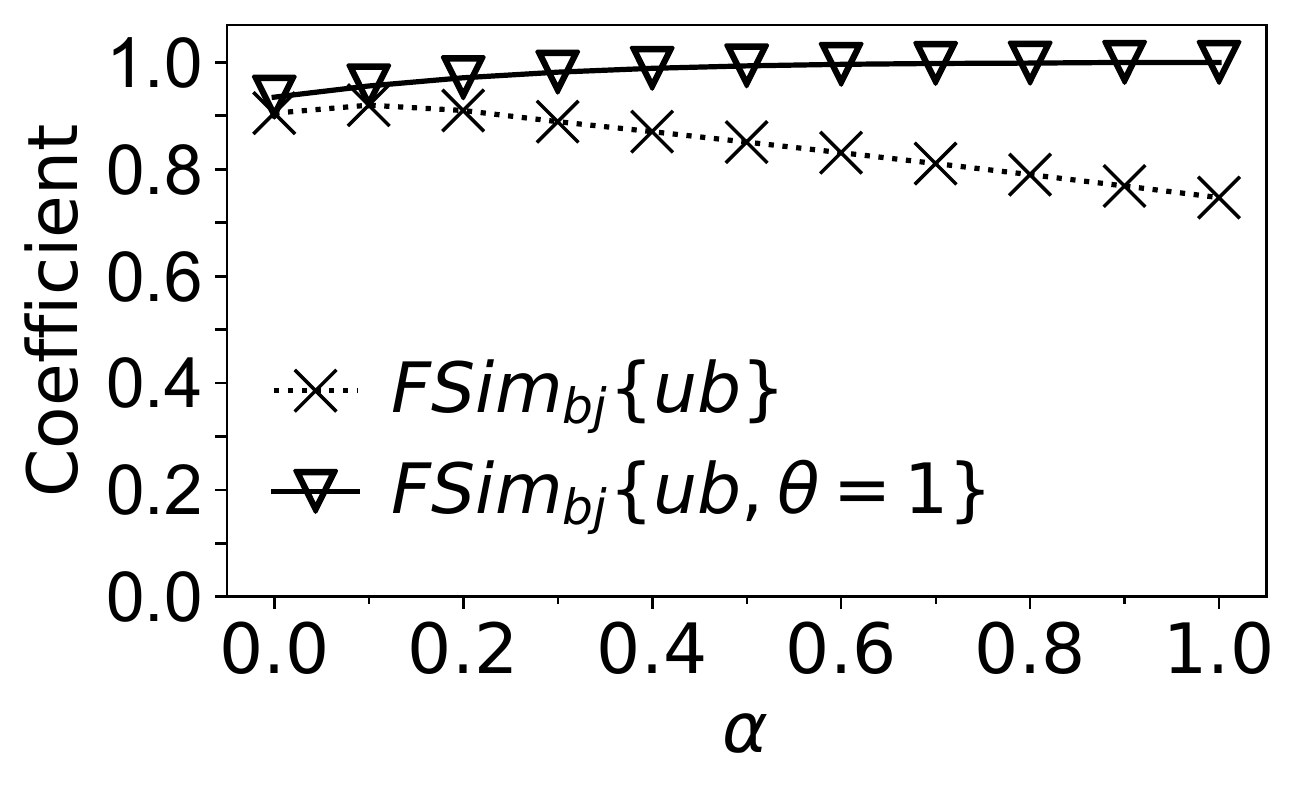}
		\label{fig:vary_alpha}
	}
	\topcaption{Pearson's correlation coefficients when varying $\alpha$ and $\beta$} \label{fig:sensitivity_upper_bound}
% 	\vspace{-1.5em}
\end{figure}

\sstitle{\color{black}Varying $\alpha$.} \reffig{vary_alpha} shows the coefficients when varying $\alpha$ from 0.0 to 1.0. We made two observations here. First, the coefficients of $\Sim_\bjsim\{\ub\}$ initially increase, then decrease as $\alpha$ gets larger. A possible reason is that $\alpha = 0$ and $\alpha = 1$ are at each extreme of the setting range, but the most appropriate setting lies somewhere in between. Second, the coefficients for $\Sim_\bjsim\{\ub, \theta = 1\}$ increase as $\alpha$ increases. Potentially, the true scores of pruned node pairs are larger than $1-w^+-w^-$, and thus a larger $\alpha$ is preferred. Note that when $\alpha = 0$, {\color{black}i.e., when ignoring the pruned node pairs,} the coefficients for both $\Sim_\bjsim\{\ub\}$ and $\Sim_\bjsim\{\ub, \theta = 1\}$ were above 0.9; hence, $\alpha = 0$ became our default.

% \sstitle{\color{black}Varying $\alpha$.} \reffig{vary_alpha} shows the coefficients when varying $\alpha$ from 0.0 to 1.0. On the one hand, the correlation values of $\Sim_\bjsim\{\ub\}$ first increase and then decrease when $\alpha$ gets larger. This is because $\alpha = 0$ and $\alpha = 1$ stand for two extremes of setting scores for the pruned node pairs, and the most appropriate setting must lie in between. On the other hand, the correlations of $\Sim_\bjsim\{\ub, \theta = 1\}$ increases when $\alpha$ gets larger. Potentially, the true scores of pruned node pairs are larger than $1-w^+-w^-$, and thus a larger $\alpha$ is preferred. Note that when $\alpha = 0$, the correlations of both $\Sim_\bjsim\{\ub\}$ and $\Sim_\bjsim\{\ub, \theta = 1\}$ are above 0.9, we set $\alpha = 0$ {\color{black}(i.e., ignore pruned node pairs)} by default in the following.
\subsection{Efficiency}  \label{sec:efficiency}
%  In this section, we experiment the efficiency of $\Sim_{\chi}$. We first use the NELL graph by default while varying $\theta$ for all simulation variants. Then we give the running time of $\Sim_{\chi}$ on different datasets with different optimizations. Finally, we present the results of parallelization and scalability.

\stitle{{\color{black}Varying $\theta$.}} With NELL as a representative of all tests, \reffig{time_vary_theta} shows the running time of $\Sim_{\chi}$ while varying $\theta$ from 0 to 1. The experimental results show that $\Sim_{\chi}$ runs faster as $\theta$ increases, which is expected since a larger $\theta$ contributes to less candidate pairs to compute, as shown in \reffig{node_pairs_vary_theta}. We then compared the running time of different simulation variants under certain $\theta$ value. It is not surprising that $\Sim_\dpsim$ and $\Sim_\bjsim$ ran slower than the other two variants, as they contain a costly maximum-matching operation (cost analysis in \refsec{configurations_of_all_variants}). $\Sim_\bisim$ ran slower than $\Sim_\simu$ because the mapping operator of $\Sim_\bisim$ considers both neighbors of a node pair(\reftab{configurations}). At $\theta \geq 0.6$, the difference in running time for all variants was already very small. Considering the sensitivity analysis in \reffig{correlation_vary_theta} as well as these results, $\theta = 1$ seems a reasonable setting that renders both good coefficients and performance.

\begin{figure}[h]
\centering
	\subfigure[running time]{
		\centering
        \includegraphics[width=0.38\linewidth]{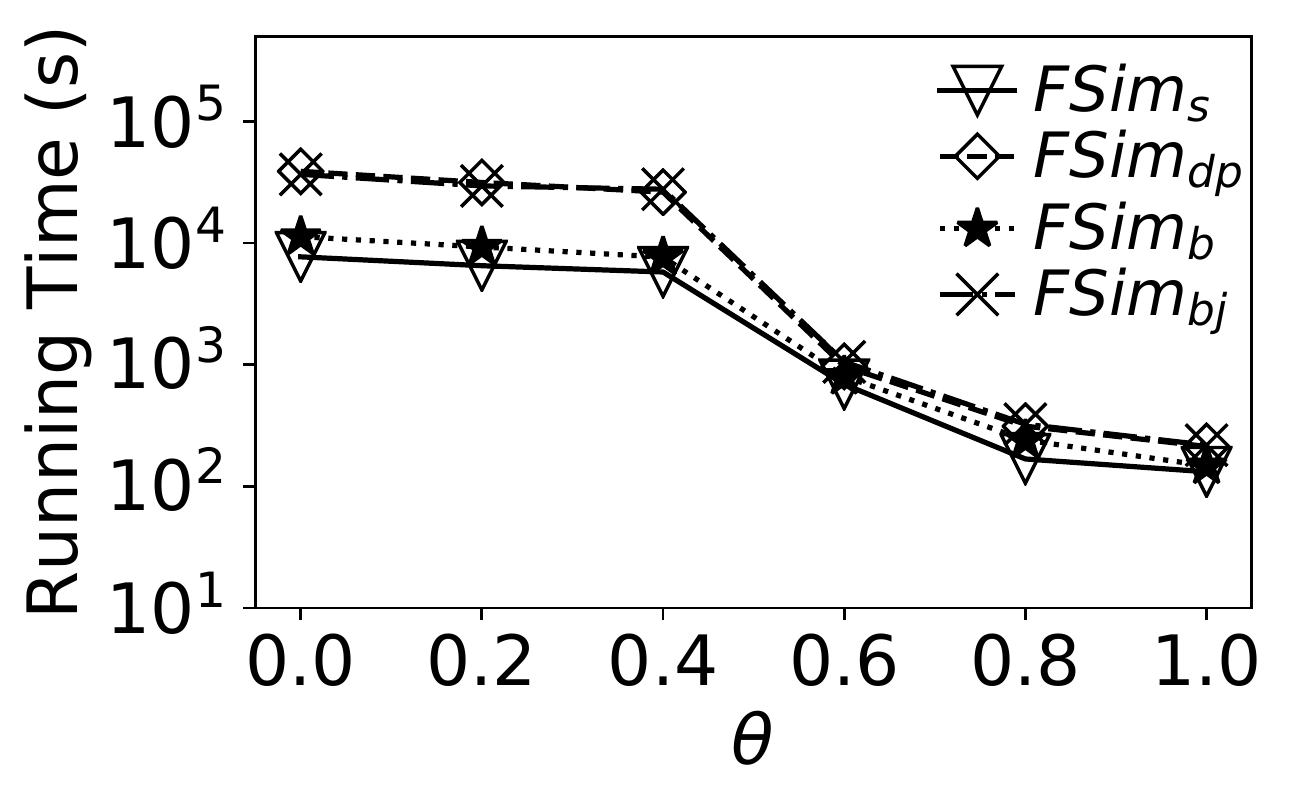}
        % \topcaption{Running time of $\Sim_{\chi}$ while varying $\theta$}
        \label{fig:time_vary_theta}
	}
% 	\hspace{-0.5em}
	\subfigure[number of node pairs]{
		\centering
        \includegraphics[width=0.38\linewidth]{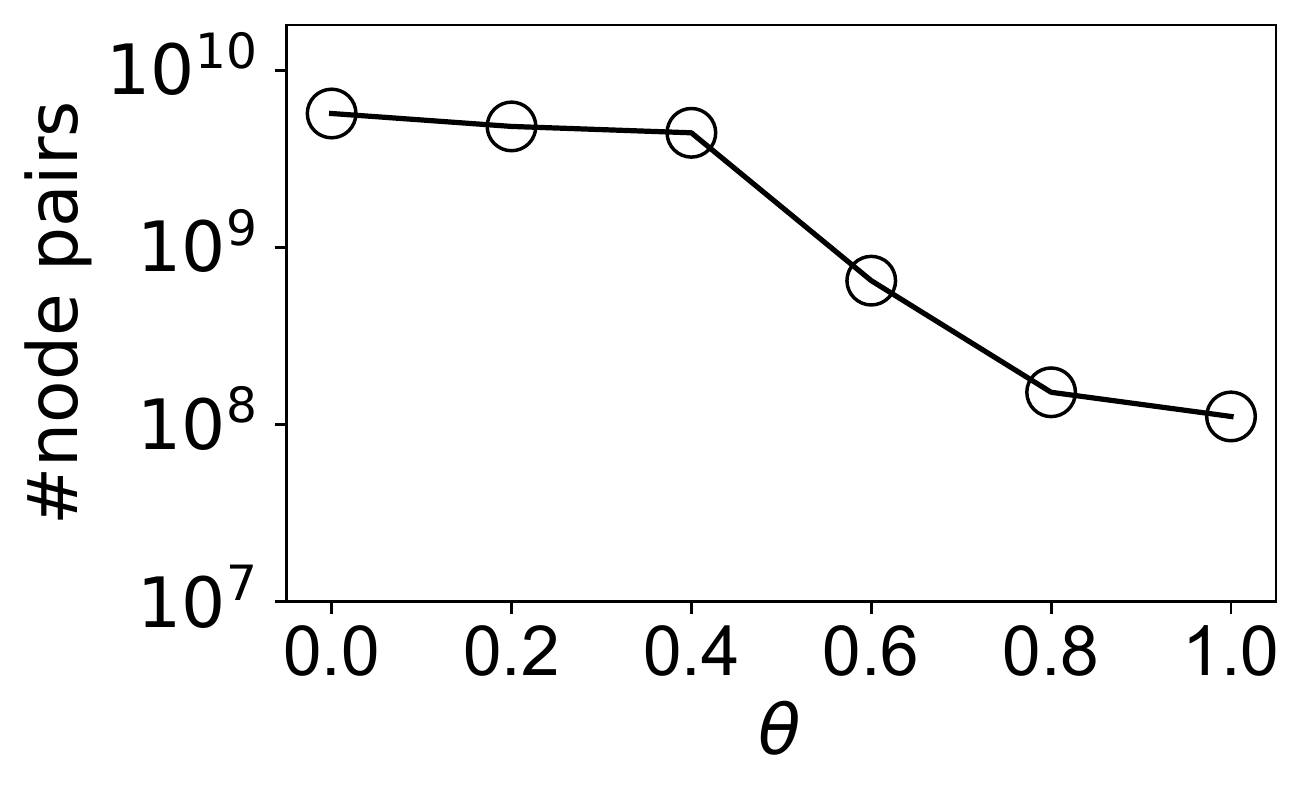}
        % \topcaption{Scalability of $\Sim_{\chi}$}
        \label{fig:node_pairs_vary_theta}
	}
% 	\subfigure[varying thread]{
% 		\centering
%         \includegraphics[width=0.46\linewidth]{figures/time_vary_thread_new.pdf}
%         % \topcaption{Scalability of $\Sim_{\chi}$}
%         \label{fig:time_vary_thread}
% 	}
	\topcaption{Running time of $\Sim_{\chi}$, $\chi \in \{\simu,\bisim,\dpsim,\bjsim\}$, while varying $\theta$}
% 	\vspace{-1.8em}
\end{figure}

\stitle{\color{black}Varying the Datasets.} \reffig{vary_opt} reports the running time of $\Sim_{\bjsim}$, the most costly simulation variant, with different optimizations on all datasets. Additionally, experiments that resulted in out-of-memory errors have been omitted. From these tests, we made the observations: (1) the upper-bound updating alone contributed about 5$\times$ the performance gain compared to $\Sim_{\bjsim}\{\ub\}$ with $\Sim_{\bjsim}$. (2) Label-constrained mapping is the most effective optimization, making $\!\Sim_{\bjsim}\{\theta = 1\}\!$ faster than  $\!\Sim_{\bjsim}\!$ by up to 3 orders of magnitude. Applying both label-constrained mapping and upper-bound updating, $\Sim_{\bjsim}\{\ub, \theta = 1\}$ was the only algorithm that could complete the executions on all datasets in time, including the two largest ones, Amazon and ACMCit.

\begin{figure}[h]
    \centering
    \includegraphics[width=0.8\linewidth]{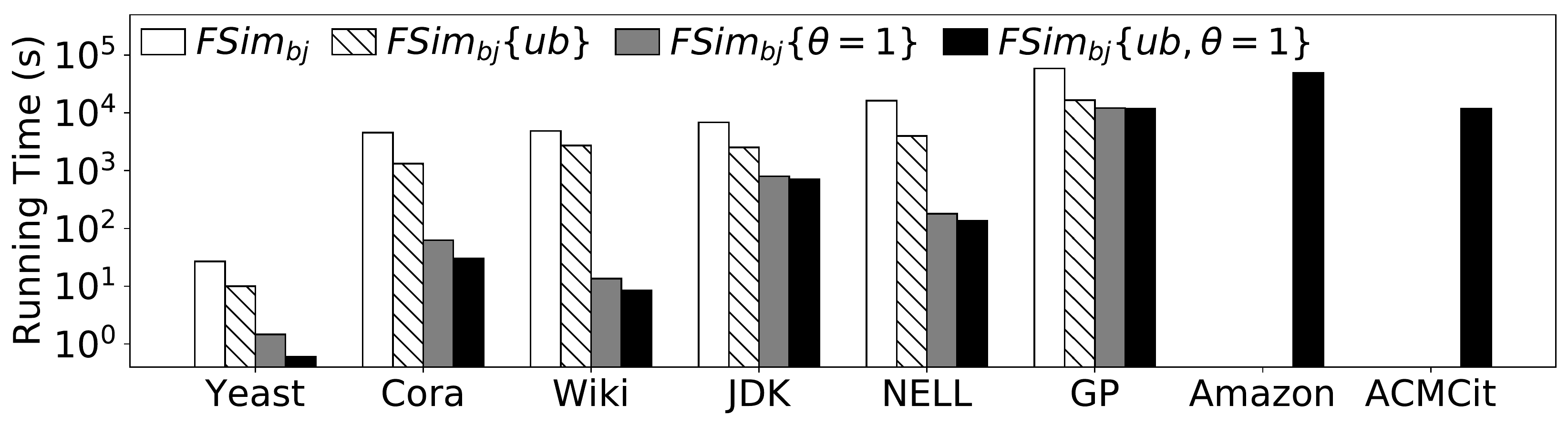}
    % \vspace{-0.5em}
    \topcaption{Running time of $\Sim_{\bjsim}$ on all datasets with different optimizations}  \label{fig:vary_opt}
    % \vspace{-1.3em}
\end{figure}

\stitle{Parallelization and Scalability.} {\color{black}We studied the scalability of $\Sim_{\chi}$ with parallelization on two representative datasets, i.e., NELL and ACMCit (with more than 1 million nodes). The results for $\Sim_{\bjsim}\{\ub,\theta = 1\}$ follow.}
%{\color{black}We discuss in \refsec{computing_algorithm} the computation of $\Sim_{\chi}$ can be easily paralleled, and thus we study the scalability of its parallelization on two representative datasets, i.e., NELL and ACMCit (more than 1 million nodes). We only report the results of $\Sim_{\bjsim}\{\ub,\theta = 1\}$.}%, and the trends of other simulation variants are similar and thus are omitted due to short of space.

\sstitle{\color{black}Varying the Number of Threads.} \reffig{time_vary_thread} shows the running time of $\Sim_{\bjsim}\{\ub,\theta = 1\}$ by varying the number of threads from 1 to 32. We observe that both curves demonstrate reasonable decreasing trends as the number of threads increases. The benefits from 1 to 8 threads are substantial. After 8, the reward ratio flattens due to the cost of thread scheduling. %The deceasing trends are sharper at the start from 1 to 8 threads and eventually become smoother due to the cost of thread scheduling. 
{\color{black}Specifically, when setting $t=32$, parallelization can speed up the computation by 15 to 17 times of magnitude.}
%Note that the computation is faster on the larger ACMCit dataset. The reason is that it contains more labels than Amazon, which leads to better performance with label-constrained mapping.

\begin{figure}
	\centering
	\subfigure[varying the number of threads]{
		\centering
		\includegraphics[width=0.46\linewidth]{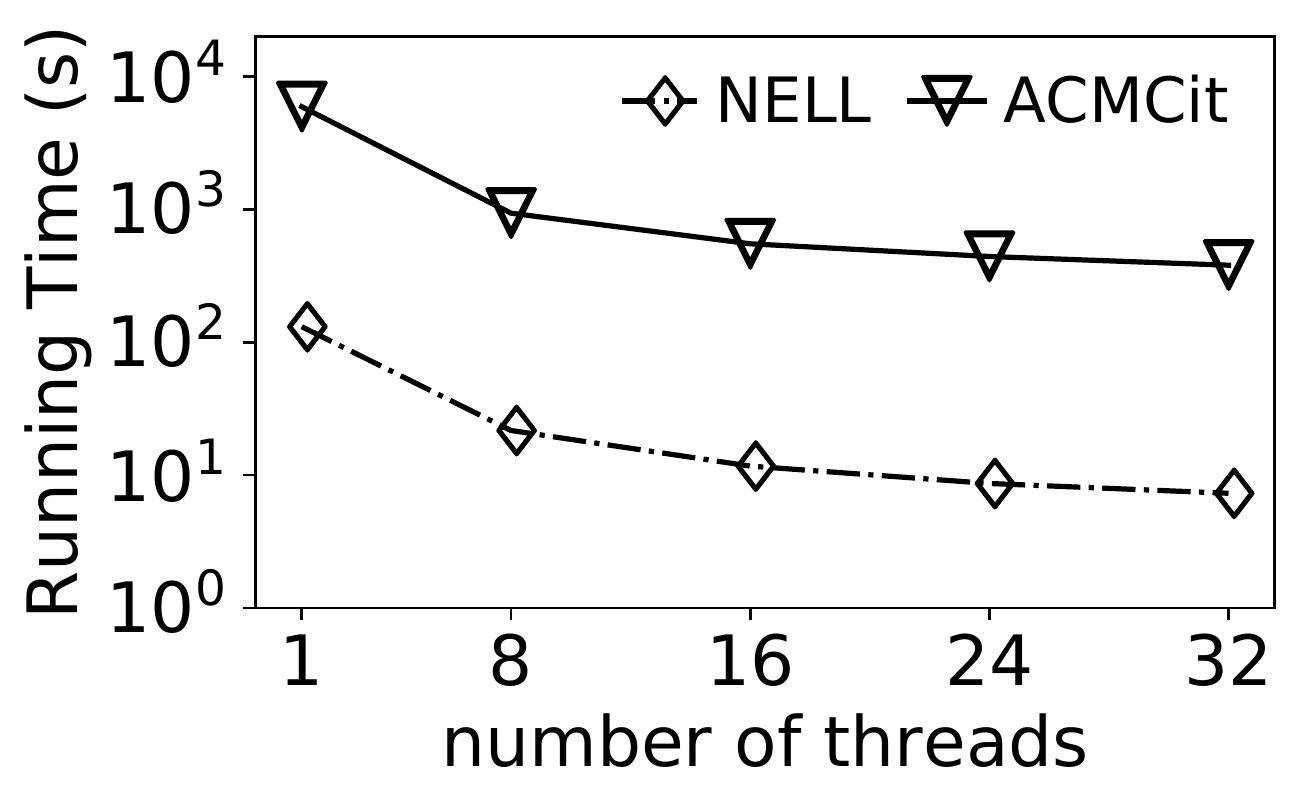}
		\label{fig:time_vary_thread}
	}
% 	\hspace{-0.5em}
	\subfigure[{varying density}]{
		\centering
		\includegraphics[width=0.46\linewidth]{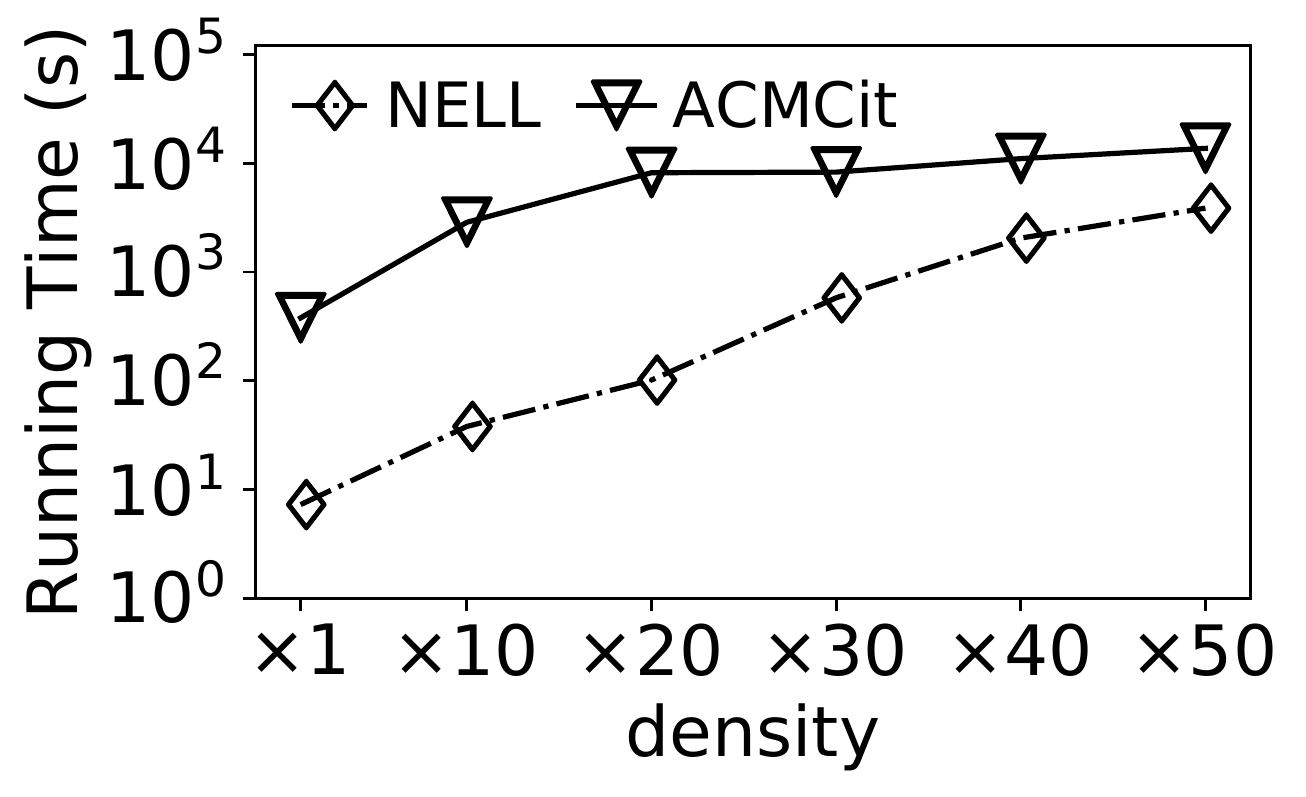}
		\label{fig:time_vary_density}
	}
% 	\vspace{-1.2em}
	\topcaption{\color{black}Parallelization and Scalability} \label{fig:parallelization_and_scalability}
% 	\vspace{-1.8em}
\end{figure}

\sstitle{\color{black}Varying Density.} {\color{black}\reffig{time_vary_density} reports the running time of $\Sim_{\bjsim}\{\ub,\theta = 1\}$ (with 32 threads) while varying the density of the datasets from $\times 10$ to $\times 50$ by randomly adding edges. Unsurprisingly, the running times of both grew longer as the graphs became denser. However, although increased density means greater computational complexity in theory, it also means each node has more neighbors by expectation. Hence, the upper bound in \refeq{upper_bound_label_constrained_mapping} may become smaller, which, in turn, contributes to greater upper-bound pruning power. This may offset some of the increase in computation complexity. Note that $\Sim_{\chi}$ finished within reasonable time on the ACMCit with $50\times$ more edges, indicating that it is scalable to the graphs with hundreds of millions of edges.}
%Having larger density, on the one hand, means larger computation complexity in theory. On the other hand, it causes each node to have more neighbors by expectation, so that the upper bound in \refeq{upper_bound_label_constrained_mapping} may become smaller, which in turn contributes to larger upper-bound pruning power. As a result, the increasing in complexity may be compensated.  Note that the computation of $\Sim_{\chi}$ finishes within reasonable time on the ACMCit with $50\times$ more edges, indicating that $\Sim_{\chi}$ is scalable to the graphs with hundreds of millions of edges.

%. Note that we create the datasets of density ratio from $\times 10$ to $\times 50$ from the original datasets by randomly adding edges. 

\subsection{Case Studies} \label{sec:case_studies}
In this subsection, we used three case studies to exhibit the potential of $\Sim_\chi$ in the applications of pattern matching, node similarity measurement and RDF graph alignment. We will demonstrate the following strengths of our framework.
%We study the effectiveness of $\Sim_\chi$ by applying it to three typical graph applications, namely pattern matching, RDF graph alignment and node similarity measurement. We will demonstrate the following strengths of our framework:%the applications of pattern matching, RDF graph alignment and node similarity measurement. %
% \vspace{-0.5em}
\begin{enumerate}[S1.] \setlength{\itemsep}{0cm}
    \item Our $\Sim_\chi$ framework quantifies the degree of simulation, which remedies the coarse ``yes-or-no'' semantics of simulation, significantly improves the effectiveness, and expand the scope of applying simulation.
    \item When multiple simulation variants are suitable for a certain application, the $\Sim_\chi$ framework provides a flexible way to experiment with all suitable variants, so as to determine the one that performs the best. 
    % \item Our $\Sim_\chi$ framework quantifies the degree of simulation, which helps produce promising results that would be otherwise missed out by the ``yes-or-no'' semantics of simulation.
    % \item When multiple simulation variants are suitable for a certain application, it is flexible to configure the $\Sim_\chi$ framework to experiment with all suitable variants, so as to pick out the one that performs the best. %, namely $\Sim_{\chi}$, stands out as it admits multiple configurations to fit into these variants.
    % %\item The fractional simulation score regarding the new variant, bijective simulation, is a promising candidate of node similarity measure on labeled graphs.
    % \item The $\Sim_\bjsim$ score is a promising candidate of node similarity measure on labeled graphs.
\end{enumerate}
% \vspace{-0.2em}

% \begin{table}
%     \centering
%     \small
% 	\topcaption{The results of pattern matching. $\mathcal{R}_x = \{\mathcal{G}_x^1, \mathcal{G}_x^2, \mathcal{G}_x^3\}$ is the top-3 result set given by $\Sim_{\simu}$ for $Q_x$ ($x \in \{1,2\}$).} \label{tab:pattern_matching}
%     \scalebox{0.86}{
%     \begin{tabular}{|c|c|} \hline
%     Pattern Graph & \makecell{Results of $\Sim_{\simu}$ (top 3 matches) and \\ Results of strong simulation (in dashed box)} \\ \hline
%     \includegraphics[width=0.22\linewidth]{figures/query1_new.pdf} & \includegraphics[width=0.75\linewidth]{figures/query1_all_new.pdf} \\ %\hline
%     \includegraphics[ width=0.19\linewidth]{figures/query2_new.pdf} & \includegraphics[width=0.75\linewidth]{figures/query2_all_new.pdf} \\ \hline
%     \end{tabular}}
%     \vspace{-1.8em}
% \end{table}

% \begin{figure}
% 	\centering
% 	\subfigure[$Q_1$]{
% 		\centering
% 		\includegraphics[width=0.25\linewidth]{figures/pattern_matching_query1.pdf}
% 		\label{fig:pattern_matching_query1}
% 	}
% 	\hspace{0.1em}
% 	\subfigure[$Q_2$]{
% 		\centering
% 		\includegraphics[width=0.23\linewidth]{figures/pattern_matching_query2.pdf}
% 		\label{fig:pattern_matching_query2}
% 	}
% % 	\hspace{-1.2em}
%     \hspace{0.5em}
% 	\subfigure[Result of $\Sim_{\chi}$]{
% 		\centering
% 		\includegraphics[width=0.24\linewidth]{figures/pattern_matching_fsim.pdf}
% 		\label{fig:pattern_matching_fsim}
% 	}
% % 	\vspace{-1.2em}
% 	\topcaption{Real-life matches on the Amazon co-purchasing data.} \label{fig:real-life-match}
% 	\vspace{-1.6em}
% \end{figure}

Before simply driving into the case studies, the first question to answer is: which simulation variant should be used for a given application? We discuss the answer intuitively. Subgraph pattern matching is essentially asymmetric (matching the pattern graph to the data graph but not the other way around), and thus $\Sim_{\simu}$ and $\Sim_{\dpsim}$ are appropriate choices. Node similarity measurement and graph alignment require symmetry, and hence $\Sim_{\bisim}$ and $\Sim_{\bjsim}$ are applied. The codes of all the baselines were provided by the respective authors. $\mathcal{L}(\cdot)$ was used indicator function since the semantics of node labels in the studied data were clear and without ambiguity. 

\stitle{Pattern Matching.} In this case study, we first considered strong simulation (exact simulation by nature, \cite{DBLP:journals/pvldb/MaCFHW11}) and $\dpsim$-simulation \cite{DBLP:journals/pvldb/SongGCW14} as two baselines, and compared them with $\Sim_{\simu}$ and $\Sim_{\dpsim}$ to illustrate how $\Sim_{\chi}$ facilitates pattern matching. \reffig{real-life-match} shows two example matches on the Amazon graph (see \reftab{graph_statistics} for graph statistics). When answering query $Q_1$, strong simulation (and $\dpsim$-simulation) returns $G_1$, pictured in \reffig{pattern_matching_g1}, which is also the top-1 result of $\Sim_\simu$ (and $\Sim_{\dpsim}$). Clearly, a simulation relation exists between $Q_1$ and $G_1$, and $\Sim_{\chi}$ captures $G_1$ with the highest score because of simulation definiteness (\refdef{fractional_simulation_properties}). $Q_2$ adds two extra nodes with new labels to $Q_1$ but, with this modification, both strong simulation and $\dpsim$-simulation fail to return a result while $\Sim_{\chi}$ returns $G_2$ (strength S1), which closely matches $Q_2$ by missing only an edge.

\begin{figure}
	\centering
	\subfigure[$Q_1$]{
		\centering
		\includegraphics[width=0.18\linewidth]{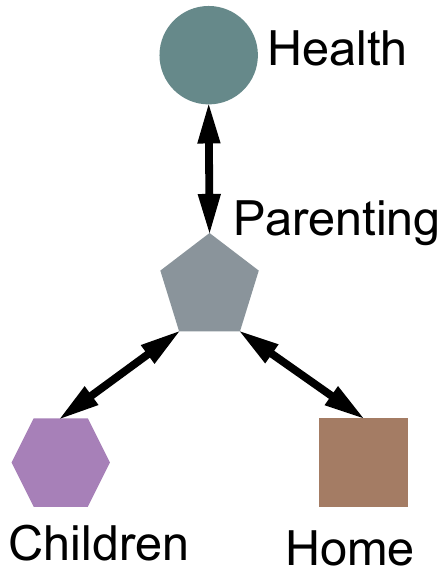}
		\label{fig:pattern_matching_q1}
	}
% 	\hspace{-0.8em}
	\subfigure[$G_1$]{
		\centering
		\includegraphics[width=0.17\linewidth]{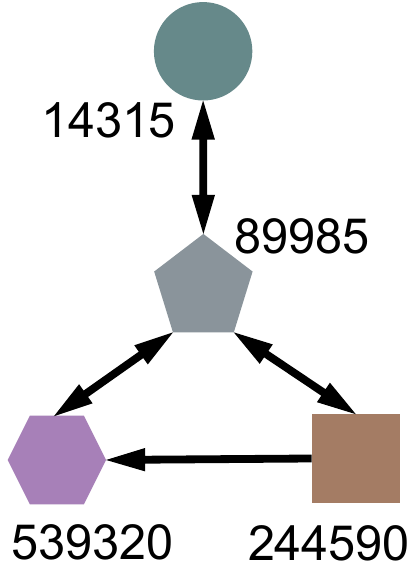}
		\label{fig:pattern_matching_g1}
	}
	\subfigure[$Q_2$]{
		\centering
		\includegraphics[width=0.17\linewidth]{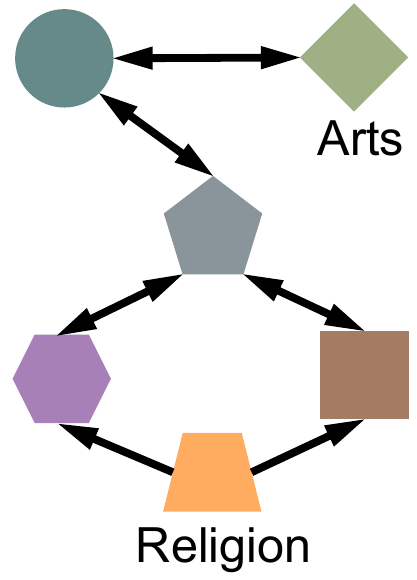}
		\label{fig:pattern_matching_q2}
	}
% 	\hspace{-0.8em}
	\subfigure[$G_2$]{
		\centering
		\includegraphics[width=0.18\linewidth]{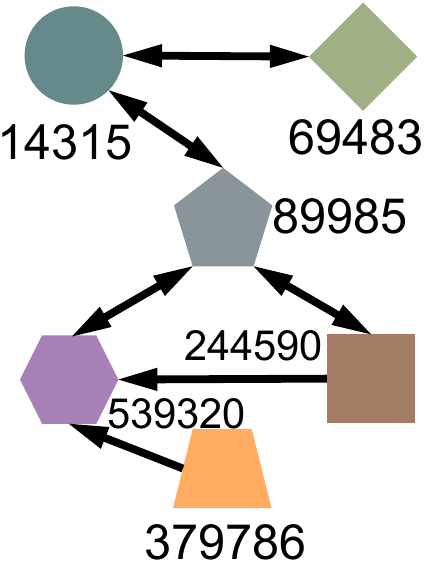}
		\label{fig:pattern_matching_g2}
	}
% 	\vspace{-1.2em}
	\topcaption{Real-life matches on the Amazon graph. $G_1$ and $G_2$ are top-1 matches of $\Sim_{\chi}$ for answering queries $Q_1$ and $Q_2$ respectively. Nodes in match $G$ are marked by their item ids, while nodes in query $Q$ are marked by their labels. Nodes with the same shape have the same label.} \label{fig:real-life-match}
% 	\vspace{-1.8em}
\end{figure}
% As a complete study, we 
% also compare the results of $\Sim_{\simu}$ and $\Sim_{\dpsim}$ with existing approximate pattern matching algorithms. There are two main-stream solutions in this field. The first one matches the nodes based on their similarity scores, and \kw{NAGA} is the representative algorithm \cite{DBLP:conf/www/DuttaN017}; the second one attempts to enumerate all matches that miss-match edges up to a given threshold, and \kw{TSpan} \cite{DBLP:conf/sigmod/ZhuLZZY12} is the representative. Therefore, in addition to strong simulation \cite{DBLP:journals/pvldb/MaCFHW11,DBLP:journals/tods/MaCFHW14} ($\dpsim$-simulation is not compared as its result is similar to strong simulation), we also use \kw{NAGA} and \kw{TSpan} as the baselines.

{\color{black}For a more complete study, we also compared the results of $\Sim_{\chi}$ with some other approximate pattern matching algorithms. The related algorithms can be summarized in two categories: (1) the edit-distance based algorithms, e.g., \kw{SAPPER} \cite{DBLP:journals/pvldb/ZhangYJ10} and \kw{TSpan} \cite{DBLP:conf/sigmod/ZhuLZZY12}, which enumerate all matches with mismatched edges up to a given threshold; and (2) the similarity-based algorithms that compute matches based on (sub)graph similarity or node similarity. To name a few, G-Ray \cite{DBLP:conf/kdd/TongFGE07} computes the ``goodness'' of a match based on node proximity. \kw{IsoRank} \cite{DBLP:journals/pnas/0001XB08}, \kw{NeMa} \cite{DBLP:journals/pvldb/KhanWAY13} and \kw{NAGA} \cite{DBLP:conf/www/DuttaN017} find matches based on node similarity. \kw{G}-\kw{Finder} \cite{DBLP:conf/bigdataconf/LiuDXT19} and \kw{SAGA} \cite{DBLP:journals/bioinformatics/TianMSSP07} design cost functions with multiple components to allow node mismatches and graph structural differences. \kw{SLQ} \cite{DBLP:journals/pvldb/YangWSY14} and $S^4$ \cite{DBLP:journals/pvldb/ZhengZPYSZ16} find matches in RDF knowledge graphs by considering the semantics of queries. More specifically, $S^4$ uses the semantic graph edit distance to integrate structure similarity and semantic similarity. Note that, in the Amazon graph, an edge from $u$ to $v$ indicates that people are highly likely to buy item $v$ after buying item $u$ \cite{DBLP:journals/pvldb/MaCFHW11}, and hence there is no complex semantics among edges. As a result, we choose \kw{TSpan}, \kw{NAGA} and \kw{G}-\kw{Finder}, the state-of-the-art algorithms in each category, as another three baselines.}

{\color{black}%We follow \kw{NAGA} \cite{DBLP:conf/www/DuttaN017} for matches generation and quality evaluation.
We followed the state-of-the-art algorithm \kw{NAGA} \cite{DBLP:conf/www/DuttaN017} for match generation and quality evaluation. Briefly, node pairs with high $\Sim_{\chi}$ scores are considered to be ``seeds'', and matches are generated by expanding the regions around the ``seeds'' subsequently.} The evaluated queries are generated randomly by extracting subgraphs from the data graph and introducing structural noises (randomly insert edges, up to 33\%) or label noises (randomly modify node labels, up to 33\%). We then evaluated different algorithms across four query scenarios: (1) queries with no noises (\kw{Exact}); (2) queries with structural noises only (\kw{Noisy}-\kw{E}); (3) queries with label noises only (\kw{Noisy}-\kw{L}); and (4) queries with both kinds of noises (\kw{Combined}). Note that the queries are extracted from the graphs, which naturally serve as the ``ground truth''. {\color{black}Given a query $Q$ and a returned match $\phi$ (we use top-1 match in this case study), the $F_1$ score is calculated by $F_1 = \frac{2\cdot P\cdot R}{(P+R)}$, where $P = \frac{|\phi_t|}{|\phi|}$, $R = \frac{|\phi_t|}{|Q|}$, $\phi_t$ is a subset of $\phi$ that includes the correctly discovered node matches in $\phi$, and $|X|$ indicates the number of nodes in the match or graph, $\forall X \in \{\phi_t,\phi,Q\}$}.

\begin{table}
    \centering
    \small
    \topcaption{Average F1 scores (\%) while answering queries in different scenarios on the Amazon dataset. \kw{TSpan}-x indicates miss-matching up to $x$ edges in \kw{TSpan}.} \label{tab:pattern-matching-fscore}
    \renewcommand{\arraystretch}{1.05}
    % \scalebox{0.8}{
    % \setlength{\tabcolsep}{1mm}{
    \color{black}
    \begin{tabular}{|c||ccccc|cc|} \hline
        \multirow{2}{*}{\makecell{Query\\Scenario}} & \multicolumn{5}{c|}{Baselines} & \multicolumn{2}{c|}{$\Sim_{\chi}$} \\ \cline{2-8}
         &\kw{NAGA} & \kw{G}-\kw{Finder} & \kw{TSpan}-1 & \kw{TSpan}-3 & \makecell{Strong \\ Simulation} & $\Sim_{\simu}$ & $\Sim_{\dpsim}$ \\ \hline \hline
        \kw{Exact} & 30.2 & \textbf{100} & \textbf{100} & \textbf{100} & \textbf{100} & \textbf{100} & \textbf{100} \\ 
        {\kw{Noisy}-\kw{E}} & 30.5 & 49.2 & 71.0 & \textbf{95.8} & 50.0 & 84.0 & 65.7 \\ 
        {\kw{Noisy}-\kw{L}} & 20.6 & 40.7 & - & - & 33.3 & \textbf{75.1} & 73.2 \\ 
        \kw{Combined} & 21.2 & 40.9 & - & - & 29.2 & \textbf{76.6} & 66.7 \\ \hline
        % \textbf{Average} & 2.4 & 25.6 & 55.0 &78.9 & 80.5 \\ \hline
    \end{tabular}
    % }}
    % \vspace{-2em}
\end{table}

% \begin{table}
%     \centering
%     \small
%     \topcaption{Average F1 score (\%) over varying pattern matching queries for the Amazon dataset} \label{tab:pattern-matching-fscore}
%     \renewcommand{\arraystretch}{1.05}
%     \scalebox{0.78}{
%     \begin{tabular}{|c||cccc|cc|} \hline
%         \multirow{2}{*}{Query Scenario} & \multicolumn{4}{c|}{Baselines} & \multicolumn{2}{c|}{$\Sim_{\chi}$} \\ \cline{2-7}
%          &\kw{NAGA} & \kw{TSpan}-1 & \kw{TSpan}-3 & \makecell{Strong \\ Simulation} & $\Sim_{\simu}$ & $\Sim_{\dpsim}$ \\ \hline \hline
%         Exact Match & 30.2 & 100 & 100 & 100 & 100 & 100 \\ 
%         Noisy Edges & 30.5 & 71.0 & 95.8 & 50.0 &76.4 & 74.7 \\ 
%         Noisy Labels & 20.6 & - & - & 33.3 &70.4 & 79.3\\ 
%         Combined & 21.2 & - & - & 29.2 &68.8 & 68.1 \\ \hline
%         % \textbf{Average} & 2.4 & 25.6 & 55.0 &78.9 & 80.5 \\ \hline
%     \end{tabular}}
%     \vspace{-2em}
% \end{table}

\reftab{pattern-matching-fscore}\footnote{The results of \kw{NAGA} are provided by the authors and we acknowledge the assistance from Dr. Sourav Dutta and Dr. Shubhangi Agarwal.} shows the F1 scores of different algorithms. The result is an average from 100 random queries of sizes ranging from 3 to 13. $\dpsim$-simulation was not compared as it is similar to strong simulation. {\color{black}As with the last results, strong simulation performed poorly against noise. In comparison, $\Sim_{\chi}$ was more robust and performed much better (strength S1).} %We then compare $\Sim_{\chi}$ with \kw{NAGA}, \kw{G}-\kw{Finder} and \kw{TSpan}. From \reftab{pattern-matching-fscore}, 
{\color{black}Additionally, $\Sim_{\simu}$ outperformed \kw{NAGA}, \kw{G}-\kw{Finder} and \kw{TSpan}-1 by a big margin on all query scenarios.} \kw{TSpan}-3 performed well in ``\kw{Exact}'' and ``\kw{Noisy}-\kw{E}'' with its highest F1 score of 95.8\% for ``\kw{Noisy}-\kw{E}''. This is because \kw{TSpan}-3 finds all matches with up to 3 mismatched edges, which is not less than the number of noisy edges in most queries. However, \kw{TSpan} favors the case with missing edges rather than nodes. Thus, it has no results for ``\kw{Noisy}-\kw{L}'' and ``\kw{Combined}''. In summary, $\Sim_{\chi}$ is qualified for approximate pattern matching (strength S1). While both $\simu$- and $\dpsim$-simulation can be configured for the application, $\Sim_{\simu}$ is more robust to noises and performs better than $\Sim_{\dpsim}$ (strength S2). %which reflects the flexibility of $\Sim_{\chi}$ to be experimented for certain application (strength S2).

\stitle{Node Similarity Measurement.} In this case study, we compared $\Sim_{\chi}$ to four state-of-the-art similarity measurement algorithms: \PCRW \cite{DBLP:journals/ml/LaoC10}, \PathSim \cite{DBLP:journals/pvldb/SunHYYW11}, \JoinSim \cite{DBLP:journals/tkde/XiongZY15} and \nSimGram \cite{DBLP:conf/kdd/ConteFGMSU18}. Following \cite{DBLP:conf/kdd/ConteFGMSU18,DBLP:journals/pvldb/SunHYYW11}, we used the DBIS dataset, which contains 60,694 authors, 72,902 papers and 464 venues. In DBIS, the venues and papers are labeled as ``V'' and ``P", respectively. The authors are labeled by their names.

We first computed the top-5 most similar venues to WWW using all algorithms. The results are shown in \reftab{top_10_venue}. Note that WWW$_1$, WWW$_2$ and WWW$_3$ all represent the WWW venue but with different node ids in DBIS, and thus they are naturally similar to WWW. Although all algorithms gave reasonable results, $\Sim_\bjsim$ was the only one to return WWW$_1$, WWW$_2$ and WWW$_3$ {\color{black}in the top-5 results. In addition, if we applied exact $\bisim$- and $\bjsim$-simulation to the task, other than ``WWW'' itself (``Yes''), all the other venues had the same score (``No''). This shows that $\Sim_{\chi}$ can be applied to the scenarios that require fine-grained evaluation, such as node similarity measurement (strength S1).}

\begin{table} [h]
    \centering
    \small
    \topcaption{The top-5 similar venues for ``WWW'' of different algorithms} \label{tab:top_10_venue}
    \renewcommand{\arraystretch}{0.92}
    % \scalebox{0.78}{
    % \setlength{\tabcolsep}{1.7mm}{
    \begin{tabular}{|c|c|c|c|c|c|c|} \hline
        Rank & \PCRW & \PathSim & \JoinSim & \nSimGram & $\Sim_{\bisim}$ & $\Sim_{\bjsim}$ \\ \hline \hline
        1 & WWW & WWW & WWW & WWW & WWW & WWW \\
        2 & SIGIR & CIKM & WWW$_1$ & CIKM & CIKM & WWW$_1$ \\
        3 & ICDE & SIGKDD & CIKM & SIGIR & ICDE & CIKM \\
        4 & VLDB & WISE & WSDM & WWW$_1$ & VLDB & WWW$_2$ \\
        5 & Hypertext & ICDM & WWW$_2$ & SIGKDD & SIGIR & WWW$_3$ \\ \hline
        % 6 & SIGKDD & WWW$_1$ & T.I.T. & TKDE & SIGMOD & SIGIR \\
        % 7 & CIKM & SIGIR & TWEB & ICDE & SIGKDD & WebDB \\
        % 8 & SIGMOD & APWeb & WISE & VLDB & TKDE & T.I.T. \\
        % 9 & TKDE & TKDE & WWW$_3$ & WISE & Hypertext & AusWeb \\
        % 10 & WWW$_1$ & EDBT & SIGIR & SIGMOD & WISE & WISE \\\hline
    \end{tabular}
    % }}
    % \vspace{-1.6em}
\end{table}

Following \cite{DBLP:conf/kdd/ConteFGMSU18,DBLP:journals/pvldb/SunHYYW11}, we further computed the top-15 most similar venues to 15 subject venues (same as \cite{DBLP:conf/kdd/ConteFGMSU18}) of each algorithm.
%(ICDE, KDD, ICDM, PKDD, PAKDD, WWW, SIGIR, SIGMOD, VLDB, PODS, EDBT, DASFAA, SDM, TREC and AP-Web) to evaluate the above algorithms. 
For each subject venue, we labeled each returned venue with a relevance score: 0 for non-relevant, 1 for some-relevant, and 2 for very-relevant, considering both the research area and venue ranking in \cite{core_ranking}. For example, the relevance score for ICDE and VLDB is 2 as both are top-tier conferences in the area of database. We then evaluated the ranking quality of the algorithms using nDCG (the larger the score, the better). 

\begin{table} [h]
    \small
    % \begin{minipage}{\linewidth}
    \centering
    \topcaption{NDCG results of node similarity algorithms} 
    \label{tab:node_similarity_results}
    % \scalebox{0.86}{
    \begin{tabular}{|c|c|c|c|c|c|} \hline
    \multicolumn{4}{|c|}{Baselines} & \multicolumn{2}{c|}{Fractional $\chi$-simulation} \\ \hline
    \PCRW & \PathSim & \JoinSim & \nSimGram & $\Sim_{\bisim}$ & $\Sim_{\bjsim}$ \\ \hline\hline
    0.684 & 0.684 & 0.689 & 0.700 & 0.699& \textbf{0.733}\\ \hline
    \end{tabular}
    % }
    %  \end{minipage}
    % \vspace{-2em}
\end{table}
% ~
% \begin{minipage}{\linewidth}
% \begin{table}
%   \centering
% % \vspace{-1em}
%      \topcaption{The F1 scores (\%) of graph alignment. $x$-bisim indicates setting $k=x$ in $k$-bisimulation.} \label{tab:graph_alignment_results}
%     %\vspace{-2.5em}
%     \scalebox{0.93}{
%     \setlength{\tabcolsep}{1.2mm}{
%     \color{black}
%     \begin{tabular}{|c||cccccc|cc|} \hline
%         \multirow{2}{*}{Graphs} & \multicolumn{6}{c|}{Baselines} & \multicolumn{2}{c|}{$\Sim_{\chi}$} \\ \cline{2-9} & $2$-bisim & $4$-bisim & Olap & \kw{\textsc{gsaNA}} & \kw{FINAL} & \kw{EWS} & $\Sim_{\bisim}$& $\Sim_{\bjsim}$ \\ \hline\hline
%         $G_1$-$G_2$ & 19.9 & 9.1 & 37.9 & 11.8 & 55.2 & 70.8 & \textbf{97.6} & 96.5 \\ 
%         $G_1$-$G_3$ & 53.0 & 10.9 & 37.6 & 14.9 & 52.7 & 65.3 & \textbf{96.9} & 95.6 \\ \hline
%     \end{tabular}}}
% % \end{minipage}
% % \begin{minipage}{\linewidth}
% % %\begin{table}
% %   \centering
% % % \vspace{-1em}
% %      \topcaption{The F1 scores (\%) of graph alignment} \label{tab:graph_alignment_results}
% %     %\vspace{-2.5em}
% %     \scalebox{0.91}{
% %     \begin{tabular}{|c||cccc|cc|} \hline
% %         \multirow{2}{*}{Graphs} & \multicolumn{4}{c|}{Baselines} & \multicolumn{2}{c|}{Fractional $\chi$-simulation} \\ \cline{2-7} & Olap & \kw{\textsc{gsaNA}} & \kw{FINAL} & \kw{EWS} & $\Sim_{\bisim}$& $\Sim_{\bjsim}$ \\ \hline\hline
% %         $GP_1-GP_2$ & 37.9 & 11.8 & 55.2 & 70.8 & \textbf{97.6} & 96.5 \\ 
% %         $GP_1-GP_3$ & 37.6 & 14.9 & 52.7 & 65.3 & \textbf{96.9} & 95.6 \\ \hline
% %     \end{tabular}}
% % \end{minipage}
% \vspace{-2em}
% \end{table}

\reftab{node_similarity_results} shows the nDCG results. Accordingly, $\Sim_\chi$ outperforms the state-of-the-art algorithms by a large margin. This indicates that $\Sim_\chi$ is qualified to measure node similarity on labeled graphs (strength S1). The result that $\Sim_\bjsim$ outperforms $\Sim_\bisim$ in both the ``WWW'' case and the general evaluation suggests $\Sim_\bjsim$ is a better candidate for similarity measurement (strength S2).

\stitle{RDF Graph Alignment.} We investigate the potential of $\Sim_{\chi}\!$ in RDF graph alignment and briefly discuss its performance below. We followed Olap \cite{DBLP:journals/pvldb/BunemanS16} (a bisimulation-based alignment algorithm) to align three different versions of biological graphs from different times, $G_1$, $G_2$ and $G_3$ \cite{harding2017iuphar}. $G_1$ has 133,195 nodes and 273,512 edges, $G_2$ has 138,651 nodes and 285,000 edges, $G_3$ includes 144,879 nodes and 298,564 edges, and all of them have 8 node labels and 23 edge labels. Note that the original URI values in these datasets do not change over time. Hence, we can use this information to identify the ground truth alignment. In addition to Olap, we also included another four state-of-the-art algorithms, {\color{black}namely $k$-bisimulation \cite{DBLP:conf/sac/HeeswijkFP16}, \kw{\textsc{gsaNA}} \cite{DBLP:conf/kdd/YasarC18}, \kw{FINAL} \cite{DBLP:conf/kdd/ZhangT16} and \kw{EWS} \cite{DBLP:journals/pvldb/KazemiHG15}. When aligning graphs with $\Sim_{\chi}$, a node $u \in V_1$ will be aligned to a node set $A_u = \argmax_{v \in V_2}\Sim_\chi(u, v)$, while with $k$-bisimulation, $u$ will be aligned to $A_u = \{v|v \in V_2 \wedge u \text{ and } v \text{ are bisimilar}\}$. The F1 score of $\Sim_{\chi}$ and $k$-bisimulation is calculated by $F1 = \sum_{u\in V_1}\frac{2\cdot P_u \cdot R_u}{|V_1|(P_u + R_u)}$, where $P_u$ (resp. $R_u$) is $\frac{1}{|A_u|}$ (resp. 1) if $A_u$ contains the ground truth, and 0 otherwise. We follow the settings in the related papers for the other baselines. }
%As baselines, we also include an embedding-based algorithm \kw{\textsc{gsaNA}} \cite{DBLP:conf/kdd/YasarC18}, a similarity-based algorithm \kw{FINAL} \cite{DBLP:conf/kdd/ZhangT16} and a PGM-based (percolation graph matching) algorithm \kw{EWS} \cite{DBLP:journals/pvldb/KazemiHG15}. 

\reftab{graph_alignment_results} reports the F1 scores of each algorithm. {\color{black}Note that we also tested the bisimulation, which resulted in 0\% F1 scores in both cases since there is no exact bisimulation relation between two graphs. $k$-bisimulation performs better than bisimulation as it, to some extent, approximates bisimulation.} From \reftab{graph_alignment_results}, our $\Sim_{\chi}$ had the highest F1 scores and thus outperformed all the other baselines. This shows that we can apply $\Sim_\bisim$ and $\Sim_\bjsim$ with high potential for graph alignment (strength S1). $\Sim_{\bisim}$ outperforms $\Sim_{\bjsim}$ and thus is a better candidate for graph alignment (strength S2).

\begin{table}
  \centering
% \vspace{-1em}
     \topcaption{The F1 scores (\%) of each algorithm when aligning two graphs. $x$-bisim indicates setting $k=x$ in $k$-bisimulation.} \label{tab:graph_alignment_results}
    %\vspace{-2.5em}
    % \scalebox{0.92}{
    % \setlength{\tabcolsep}{1.1mm}{
    \color{black}
    \begin{tabular}{|c||cccccc|cc|} \hline
        \multirow{2}{*}{Graphs} & \multicolumn{6}{c|}{Baselines} & \multicolumn{2}{c|}{$\Sim_{\chi}$} \\ \cline{2-9} & $2$-bisim & $4$-bisim & Olap & \kw{\textsc{gsaNA}} & \kw{FINAL} & \kw{EWS} & $\Sim_{\bisim}$& $\Sim_{\bjsim}$ \\ \hline\hline
        $G_1$-$G_2$ & 19.9 & 9.1 & 37.9 & 11.8 & 55.2 & 70.8 & \textbf{97.6} & 96.5 \\ 
        $G_1$-$G_3$ & 53.0 & 10.9 & 37.6 & 14.9 & 52.7 & 65.3 & \textbf{96.9} & 95.6 \\ \hline
     \end{tabular}
    % }}
% \vspace{-2.2em}
\end{table}

{\color{black}\stitle{Efficiency Evaluation.} Given the superior effectiveness of $\Sim_{\chi}$ in the above case studies, one may also be interested in its efficiency. Next, we show the running time of $\Sim_{\chi}$ (with 32 threads) and the most effective baseline in each case study. We will also report the running time of exact simulation (or its variant) if it is applied and effective in the case study. For pattern matching, $\Sim_{\chi}$ on average took 0.25s for each query. In comparison, exact simulation took around 1.2s, and \kw{TSpan}, the most effective baseline, spent more than 70s. In similarity measurement, \nSimGram took 0.03ms to compute a single node pair, while $\Sim_{\chi}$ finished the computation within 6500s for 134060$\times$134060 pairs or roughly 0.0004ms per pair. In graph alignment, $k$-bisimulation ($k=4$) spent 0.4s for the computation, and \kw{EWS} spent 1496s. {Our $\Sim_{\chi}$ 
{ran a bit slower than \kw{EWS} and took 3120s}, which is tolerable as {it is much more effective} than the other algorithms.} Note that it is not straightforward and potentially unfair to compare with all the baselines as they either focus on per-query computation (e.g., \PathSims and \JoinSim) or have been individually implemented in different languages (e.g., Olap in Python and \kw{FINAL} in Matlab).}
\section{Related Work} \label{sec:related_work}
% \vspace{-0.6em}
\stitle{Simulation and Its Variants.} 
In this paper, we focused on four simulation variants: simple simulation \cite{DBLP:conf/ijcai/Milner71, DBLP:journals/pvldb/MaCFHW11}, bisimulation \cite{milner1989communication}, degree-preserving simulation \cite{DBLP:journals/pvldb/SongGCW14} and bijective simulation. The original definition of simulation \cite{DBLP:conf/ijcai/Milner71} only considered out-neighbors, but Ma et al.'s redefinition in 2011 \cite{DBLP:journals/pvldb/MaCFHW11} takes in-neighbors into account and hence is the definition we used. Reverting to the original definition is as easy as setting $w^- = 0$ in our framework. {\color{black}Additionally, we discussed a variant of approximate bisimulation, namely $k$-bisimulation \cite{DBLP:books/cu/12/AcetoIS12,DBLP:conf/bncod/LuoLFBHW13,DBLP:conf/cikm/LuoFHWB13,DBLP:conf/sac/HeeswijkFP16}, and investigated its relation to our framework} (\refsec{discussion}). There are other variants that have not yet included in the framework, including bounded simulation \cite{DBLP:journals/pvldb/FanLMTWW10} and weak simulation \cite{ milner1989communication}. These variants consider the $k$-hop neighbors ($k\geq 1$) in addition to the immediate neighbors. As an interesting future work, we will study to incorporate them in our framework. {\color{black}There are also some algorithms that aim to compute simulation (variants) efficiently and effectively, e.g., a hash-based algorithm in \cite{DBLP:conf/sac/HeeswijkFP16}, external-memory algorithms in \cite{DBLP:conf/sigmod/HellingsFH12,DBLP:conf/cikm/LuoFHWB13}, a distributed algorithm in \cite{DBLP:conf/bncod/LuoLFBHW13} and a partition refinement algorithm in \cite{DBLP:journals/acta/Ranzato14}. However, all these algorithms compute the ``yes-or-no'' simulation (or its variant) and cannot provide fractional scores as proposed in this paper.}

% \stitle{Simulation and Its Variants.} Simulation was first defined by Robin Milner \cite{DBLP:conf/ijcai/Milner71} to determine if two programs are the realizations of the same algorithm. This definition was further extended to form other simulation variants by adding more constraints. Bisimulation \cite{milner1989communication} is based on simulation $R$ while it requires $R^{-1}$ to be a simulation as well. Dual simulation \cite{DBLP:journals/pvldb/MaCFHW11} was proposed to take nodes' in-neighbors into consideration in order to capture more topology information. Strong simulation \cite{DBLP:journals/pvldb/MaCFHW11} is dual simulation by nature specifically designed for graph pattern matching. Degree-preserving dual simulation \cite{DBLP:journals/pvldb/SongGCW14} extends the definition of dual simulation by requiring injective mapping between nodes' neighbors. In this paper, we show that all these simulation variants can be incorporated into our fractional $\chi$-simulation framework. There exist some other simulation variants, e.g., bounded simulation \cite{DBLP:journals/pvldb/FanLMTWW10} and weak simulation \cite{ milner1989communication}. Note that these variants consider nodes' $k$-hop neighbors ($k\geq 1$) instead of immediate neighbors, thus we currently do not cover these variants. As an interesting future work, we will study to incorporate them in our framework.

{\color{black} \stitle{Node Similarity Measures.}} %We then review existing node similarity measures. 
{\color{black} We have shown that $\Sim_{\bjsim}$ is qualified for node similarity measurement. Thus, we review node similarity measures on labeled graphs.} \SimRank \cite{DBLP:conf/kdd/JehW02} and \RoleSim \cite{DBLP:conf/kdd/JinLH11} are two representative measures, and their relations to our $\Sim_{\chi}$ have been discussed in \refsec{discussion}. %Another measure proposed in \cite{DBLP:journals/siamrev/BlondelGHSD04} computes node similarity in a mutually reinforcing scheme. 
As these two measures are less effective in computing node similarity on labeled graphs \cite{DBLP:conf/kdd/ConteFGMSU18,DBLP:journals/pvldb/SunHYYW11}, similarity measures \cite{DBLP:conf/kdd/ConteFGMSU18,DBLP:conf/kdd/HuangZCSML16,DBLP:journals/pvldb/SunHYYW11,DBLP:journals/tkde/XiongZY15} were proposed. \PathSim \cite{DBLP:journals/pvldb/SunHYYW11}, for instance, uses a ratio of meta-paths connecting two nodes as the measure. \JoinSim \cite{DBLP:journals/tkde/XiongZY15} is similar to \PathSim, but it satisfies the triangle inequality. \nSimGram \cite{DBLP:conf/kdd/ConteFGMSU18} computes node similarity based on q-grams instead of meta-paths to capture more topology information. {\color{black}Note that these measures cannot substitute our work as their scores are not related to simulation and thus are not suitable to quantify the extent of simulation.}

%{\color{red}Note that these measures cannot substitute our work as: 1) they are all symmetric and cannot be applied to asymmetric applications such as pattern matching; 2) their scores are not related to simulation and thus are not suitable to quantify the degree of simulation.}

{\color{black} %\stitle{Node Similarity Techniques.} We haven shown that the fractional scores computed by $\Sim_{\chi}$ can be applied to subgraph pattern matching, graph alignment and node similarity measurement. We then review related works in the following two aspects, i.e., node similarity techniques in the subgraph pattern matching and graph alignment, and node similarity measures. 
{\color{black} \stitle{Similarity-based Applications.} There are a number of works on pattern matching and graph alignment that are based on node similarity techniques. These works may differ in measuring node similarities. Specifically,} \kw{IsoRank} \cite{DBLP:journals/pnas/0001XB08} computes the similarity between two nodes based on an weighted average of their neighbors' scores. \kw{NeMa} \cite{DBLP:journals/pvldb/KhanWAY13} defines a vector that encodes the neighborhood information for each node. The distance between two nodes is then computed from these vectors. \kw{NAGA} \cite{DBLP:conf/www/DuttaN017} leverages statistical significance through chi-square measure to compute node similarity. \kw{REGAL} \cite{DBLP:conf/cikm/HeimannSSK18} measures the similarity of two nodes by taking the information of $k$-hop neighbors into account. \kw{FIRST} \cite{DBLP:conf/kdd/DuZCT17} and \kw{FINAL} \cite{DBLP:conf/kdd/ZhangT16} use a Sylvester equation to compute similarities, which encodes structural consistency and attribute consistency of two networks. {\color{black}For similar reasons, these works are also not suitable to quantify the degree of simulation.}}

\section{Conclusion} \label{sec:conclusion}
In this paper, we formally define fractional $\chi$-simulation to quantify the degree to which one node simulates another by a $\chi$-simulation. We then propose the $\Sim_{\chi}$ computation framework to realize the quantification for all $\chi$-simulations. We conduct extensive experiments to demonstrate the effectiveness and efficiency of the fractional $\chi$-simulation framework. {\color{black}Considering end-users are also interested in the top-k similarity search. In the future, we plan to devise efficient techniques to process top-k queries based on the $\Sim_{\chi}$.}
% In this paper, we formally define fractional $\chi$-simulation to quantify the degree that node $u$ is simulated approximately by node $v$ regarding the simulation variant $\chi$. We then propose the fractional $\chi$-simulation framework that admits flexible configurations to realize different simulation variants for various applications. We conduct extensive experiments and case studies to demonstrate the effectiveness of the proposed techniques and the strengths of the fractional $\chi$-simulation framework.

% \bibliographystyle{unsrt}  
% \bibliography{main}  %%% Remove comment to use the external .bib file (using bibtex).

\end{document}